\tableofcontents\vspace{\baselineskip}}
\def\ps@pprintTitle{%
 \let\@oddhead\@empty
 \let\@evenhead\@empty
 \def\@oddfoot{}%
 \let\@evenfoot\@oddfoot}
\newcommand{\la}{\lambda}
\newcommand{\si}{\sigma}
\newcommand{\Ga}{\Gamma}
\newcommand{\cC}{{\mathcal C}}
\newcommand{\pd}{\partial}
\newcommand{\ket}[1]{|#1\rangle}
\newcommand{\bra}[1]{\langle#1|} 
\newcommand{\mss}{\kern 1pt}
\renewcommand{\leq}{\leqslant}
\renewcommand{\geq}{\geqslant}
\renewcommand{\le}{\leqslant}
\newcommand{\tends}[1]{\bbuildrel{\hbox to 2em{\rightarrowfill}}_{#1}^{}}
\newcommand{\tr}{\operatorname{tr}}
\newcommand{\qbinom}[3]{{#1\atopwithdelims[]#2}_{\raise 3pt\hbox{$\scriptstyle #3$}}}
\newcommand{\ee}{\end{equation}}
\newcommand{\bea}{\begin{eqnarray}}
\newcommand{\eea}{\end{eqnarray}}
\newcommand{\Int}[1]{\,\mathop{\!#1}\limits^{\lower1ex\hbox{$\scriptstyle\circ$}}{}}
\newtheorem{thm}{Theorem}
\newtheorem{lemma}{Lemma}
\newtheorem{cor}{Corollary}
\theoremstyle{remark}
\newtheorem{fact}{Fact}
\def\clap#1{\hbox to 0pt{\hss#1\hss}}
\begin{document}

\title{Symmetry-resolved entanglement detection using partial transpose moments}
\date\today

\author{Antoine Neven}
\thanks{These authors contributed equally.}
\affiliation{Institute for Theoretical Physics, University of Innsbruck, A–6020 Innsbruck, Austria}

\author{Jose Carrasco}
\thanks{These authors contributed equally.}
\affiliation{Institute for Theoretical Physics, University of Innsbruck, A–6020 Innsbruck, Austria}

\author{Vittorio Vitale}
\affiliation{The Abdus Salam International Center for Theoretical Physics, Strada Costiera 11, 34151 Trieste, Italy}
\affiliation{SISSA, via Bonomea 265, 34136 Trieste, Italy}

\author{Christian Kokail}
\affiliation{Center for Quantum Physics, University of Innsbruck, Innsbruck A-6020, Austria}
\affiliation{Institute for Quantum Optics and Quantum Information of the Austrian Academy of Sciences,  Innsbruck A-6020, Austria}

\author{Andreas Elben}
\affiliation{Center for Quantum Physics, University of Innsbruck, Innsbruck A-6020, Austria}
\affiliation{Institute for Quantum Optics and Quantum Information of the Austrian Academy of Sciences,  Innsbruck A-6020, Austria}

\author{Marcello Dalmonte}
\affiliation{The Abdus Salam International Center for Theoretical Physics, Strada Costiera 11, 34151 Trieste, Italy}
\affiliation{SISSA, via Bonomea 265, 34136 Trieste, Italy}

\author{Pasquale Calabrese}
\affiliation{The Abdus Salam International Center for Theoretical Physics, Strada Costiera 11, 34151 Trieste, Italy}
\affiliation{SISSA, via Bonomea 265, 34136 Trieste, Italy}
\affiliation{INFN, via Bonomea 265, 34136 Trieste, Italy}

\author{Peter Zoller}
\affiliation{Center for Quantum Physics, University of Innsbruck, Innsbruck A-6020, Austria}
\affiliation{Institute for Quantum Optics and Quantum Information of the Austrian Academy of Sciences,  Innsbruck A-6020, Austria}

\author{Beno\^it Vermersch}
\affiliation{Center for Quantum Physics, University of Innsbruck, Innsbruck A-6020, Austria}	
\affiliation{Institute for Quantum Optics and Quantum Information of the Austrian Academy of Sciences,  Innsbruck A-6020, Austria}
\affiliation{Univ.  Grenoble Alpes, CNRS, LPMMC, 38000 Grenoble, France}

\author{Richard Kueng}
\affiliation{Institute for Integrated Circuits, Johannes Kepler University Linz, Altenbergerstrasse 69, 4040 Linz, Austria}

\author{Barbara Kraus}
\affiliation{Institute for Theoretical Physics, University of Innsbruck, A–6020 Innsbruck, Austria}

   \begin{abstract}
   We propose an ordered set of experimentally accessible conditions 
   for detecting entanglement in mixed states.
   The $k$-th condition involves comparing moments of the partially transposed density operator up to order $k$. 
   Remarkably, the union of all moment inequalities reproduces the Peres–Horodecki criterion for detecting entanglement. 
   Our empirical studies highlight that the first four conditions already detect mixed state entanglement reliably in a variety of quantum architectures. 
   Exploiting symmetries can help to further improve their detection capabilities. 
   We also show how to estimate moment inequalities based on local random measurements of single state copies (classical shadows) and derive statistically sound confidence intervals as a function of the number of performed measurements. Our analysis  includes the experimentally relevant situation of drifting sources, i.e.\ non-identical, but independent, state copies.
   
   \end{abstract}

\maketitle

 \section{Introduction}
In the past years, considerable effort led to the building of larger and larger Noisy Intermediate-Scale Quantum (NISQ) devices~\cite{Deutsch2020,Preskill2018,NASEM2020}. For the benchmarking of such devices comes the need for more scalable tools in order to characterize the underlying many-body quantum state (see e.g.~\cite{Eisert} and references therein). For instance, characterizing the entanglement properties of these quantum states is, besides the intrinsic theoretical interest, essential to gauge the performance and verify the proper working of the NISQ devices.

As a first prominent example among the tools to characterize entanglement, there is the concept of entanglement witness~\cite{HHH96}. An entanglement witness is a functional of the quantum density matrix that separates a specific entangled state from the set of all separable states~\footnote{When this functional is linear, it can be identified with an observable whose expectation value can be used to decide whether the target state is entangled or not.}. By contrast, in this work, we shall focus on a superset of the set of separable states: the set of states with positive partial transpose. In other words, we will focus on sufficient conditions for entanglement (equivalently, necessary conditions for separability).

From the numerous theoretical sufficient conditions for entanglement that have been developed in the literature, many cannot be straightforwardly implemented experimentally, mainly because they require the (exponentially expensive) knowledge of the full density matrix~\cite{HHHH09,Guehne_review,AFOV08}. This is for instance the case of the celebrated PPT condition~\cite{Peres96}, which states that a separable state $\rho$ always has a positive semi-definite (PSD) partial transpose (PT) $\rho^\Ga$ for any bipartite splitting of its subsystems. Thus, if $\rho^\Ga$ has (at least) a single negative eigenvalue, then $\rho$ is entangled. The negativity, which resulted from this condition, is a highly used entanglement measure for mixed states~\cite{VW02,Pl05}.

This powerful entanglement condition, which found many applications in theoretical works \cite{cct-2012,cct-2013,castelnovo-2013,Eisler2014B,Wen2016B,rac-2016-2,Blondeau_Fournier_2016,rac-2016}, is difficult to apply in experimental conditions as the PT spectrum is difficult to access. To overcome this challenge, it was shown in Ref.~\cite{EKH20} that valuable information about the PT spectrum can be obtained from a few PT moments ${\rm tr}(\rho^\Ga)^k$ only. Using the first three PT moments, an entanglement condition, called $p_3$-PPT, was proposed and shown to be useful for detecting entangled states in several different contexts. Moments ${\rm tr}(\rho^\Ga)^k$ have the advantage that they can be estimated using shadow tomography~\cite{EKH20} in a more efficient way than if one had to reconstruct $\rho$ via full quantum state tomography. As in other randomized measurements protocols probing entanglement~\cite{VanEnk2012,Elben2018,Elben2018a,Brydges2019,Knips2019,Ketterer2019,huang2020shadow,EKH20,Zhou2020,Ketterer2020,Ketterer2020a}, the classical shadows formalism only requires (randomized) single-qubit measurements in experiments realizing the single-copy state $\rho$. 
In this paper, we follow this idea of using PT moments to build experimentally computable entanglement conditions, and extend the $p_3$-PPT condition in two directions. 

On the one hand, we propose different entanglement detection strategies depending on how many PT moments can be estimated. Starting from the third order moment, we show that the estimation of each higher order moment gives access to an independent entanglement condition. Interestingly, if all the PT moments can be estimated, this set of conditions is then necessary and sufficient for the state to be PPT (i.e. to have a positive semi-definite partial transpose). Of course, the higher the moment, the larger the number of experimental runs needed. In case higher order moments cannot be accessed, we show how to obtain an optimal entanglement condition using PT moments of order up to three.

On the other hand, we investigate the effect of symmetries on this entanglement detection method. As shown in Ref.~\cite{Vi21} for the case of dynamical purification, taking symmetries into account to define symmetry-resolved (SR) versions of the tools usually used to characterize quantum states can enable a finer characterization of some quantum features and even reveal phenomena that cannot be observed without symmetry-resolution. For states preserving an extensive quantity, we define SR versions of the PT-moment inequalities mentioned previously and show that these are indeed better suited to detect the entanglement of such states. Furthermore, we also show that these SR inequalities provide a sufficient entanglement condition for states that do not possess any symmetry.

The conditions derived here are particularly interesting from an experimental (and numerical) point of view, as low moments of (partially transposed) density operators are accessible. We show how source drifts in an experiment can be taken into account and how the quantities which are of interest here can be accurately estimated via local measurements on single copies of the state.

The paper is structured as follows. In Sec.~\ref{sec.summary}, we summarize our results. Our methods to obtain entanglement conditions from PT-moments are presented in  Sec.~\ref{sec.ent_det_PT_moments}. In Sec.~\ref{sec.sr} we study the effect of symmetries and show how to obtain a SR version of these PT-moment inequalities. In Sec.~\ref{sec.Applications}, we apply these inequalities to a variety of physical systems and compare their efficiency for detecting entanglement. Finally, we conclude and give some outlook in Sec.~\ref{sec.conclusion}.

\section{Definitions and summary of results}\label{sec.summary}
In this section, we introduce the basic definitions needed for the entanglement detection criteria below, and summarize in a succinct manner our main results. Given a bipartite state $\rho=\rho_{AB}$, we denote by $\rho^\Ga$ its partial transpose with respect to subsystem $B$. 
We say that $\rho$ is PPT if $\rho^\Ga$ is positive semi-definite, and NPT otherwise. 
All NPT states are entangled, however there are entangled states, known as bound-entangled states, that are not NPT. We focus here on the detection of NPT entangled states. 

We denote the $k$-th order moment of a matrix $M$ by
\begin{equation}\label{eq:pkm}
p_k(M)\equiv\tr M^k.
\end{equation}
We will mostly consider moments of $\rho^\Gamma$, and sometimes use the short-hand notation $p_k\equiv p_k(\rho^\Ga)$. In the presence of symmetries, the partial transpose can be cast in block diagonal form: we denote as $\rho^\Gamma_{(q)}$ the resulting blocks, where $q$ indicates a quantum number, and define the corresponding moments $p_k(\rho^\Gamma_{(q)})$ as from Eq.~\eqref{eq:pkm}.

We start by recalling the $p_3$-PPT condition of Ref.~\cite{EKH20}, i.e. that  any PPT state satisfies
\begin{equation}\label{p3-ppt}
p_3(\rho^\Ga)p_1(\rho^\Ga)\geq(p_2(\rho^\Ga))^2\,.
\end{equation}
Any state violating this condition is NPT and therefore entangled. 
The $p_3$-PPT condition will serve as a reference point below in accessing the predictive power of the new relations. 

In the following, we will establish several sets of necessary (and sometimes also sufficient) PPT conditions, summarized as follows:

{\it i)} the first set of conditions, that we dub $D_n$ conditions, also contains polynomial inequalities in the moments $p_k$ of order up to $k\leq n$. 
The first non-trivial such a condition is $D_3$, and reads: 
\begin{equation}
p_3(\rho^\Gamma)  \geq -\frac{1}{2} (p_1(\rho^\Gamma))^3 + \frac{3}{2} p_1(\rho^\Gamma) p_2(\rho^\Gamma).
\end{equation}
Knowing only the first three moments $p_1(\rho^\Gamma),p_2(\rho^\Gamma)$ and $p_3(\rho^\Gamma)$, this condition is optimal for detecting entanglement if $1/2\leq p_2(\rho^\Gamma)\leq 1$. Knowing moments of order up to the dimension of $\rho^\Gamma$, the set of $D_n$ conditions becomes a necessary and sufficient condition for NPT entanglement;

{\it ii)} the second set of conditions, dubbed $\textrm{Stieltjes}_n$, involves inequalities among the moments $p_k$ of order up to $n$. 
The condition $\textrm{Stieltjes}_3$ is equivalent to $p_3$-PPT, while $\textrm{Stieltjes}_5$ reads:
\begin{equation}
\det
\begin{pmatrix}
p_1 & p_2 & p_3 \\
p_2 & p_3 & p_4 \\
p_3 & p_4 & p_5 \end{pmatrix} \geq 0
\end{equation}
and similar conditions are obtained including higher moments; 

{\it iii)} in case high-order moments are difficult or too expensive to access, we also show how to obtain an optimized, necessary condition for PPT using only PT moments of order up to three. We call this condition $D_3^\textrm{opt}$;

{\it iv)} all of the above conditions can be cast in terms of $\rho^\Gamma_{(q)}$, in which case we add the prefix SR (for symmetry-resolved). For instance, the SR-$p_3$-PPT condition for sector $q$ reads
\begin{equation}
p_3(\rho^\Ga_{(q)})p_1(\rho^\Ga_{(q)})\geq(p_2(\rho^\Ga_{(q)}))^2.
\end{equation} 
Since these conditions are sensitive to the presence of negative eigenvalues in a specific symmetry sector, they are typically much more sensitive than their non-SR counterparts, as illustrated in Fig.~\ref{fig.summary}. 

In the SR case, it is worth mentioning that also the SR-$D_2$ condition,
\begin{equation}
p_2(\rho^\Ga_{(q)}) \leq (p_1(\rho^\Ga_{(q)}))^2\,,
\end{equation}
is non-trivial;

\begin{figure*}[t]
  \begin{minipage}{\linewidth}
    \centering
    \includegraphics[width=\linewidth]{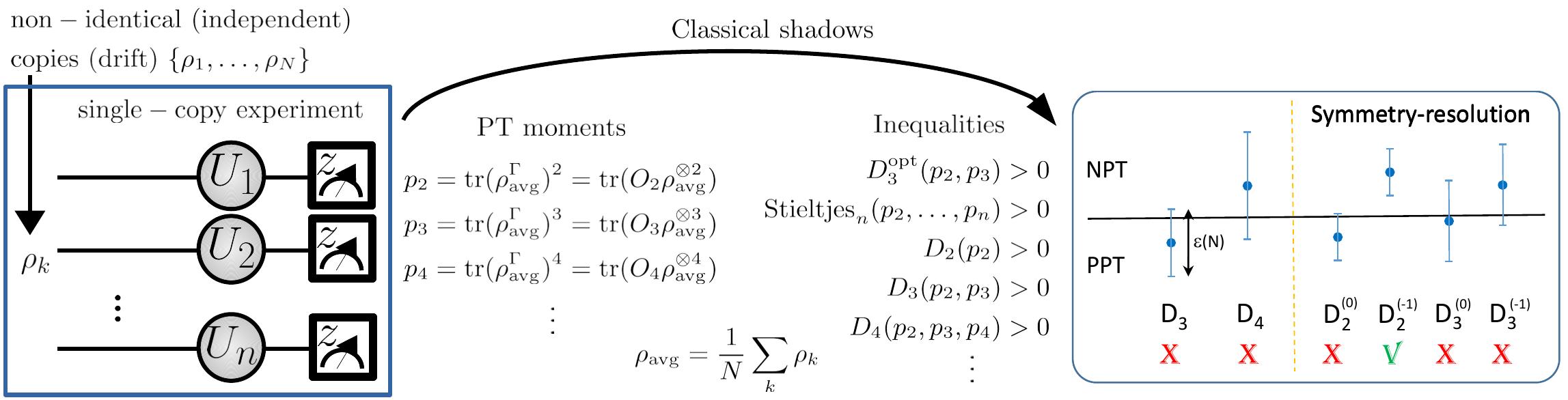}
    \caption{{\em An illustration of the proposed method for entanglement detection}. We assume the experimentally relevant situation of a source producing non-identical but independent copies $\{\rho_1,\ldots,\rho_N\}$ ("drift"). Randomly-chosen unitaries $U_i$ are applied to the qubits of each copy and then measured in the standard basis. Using classical shadows~\cite{huang2020shadow}, these measurement outcomes are post-processed to obtain the moments $p_j={\rm tr}(\rho_{\rm avg}^\Gamma)^j$. As explained in the main text, we combine those moments to derive inequalities whose violation implies that the state $\rho_{\rm avg}$ is NPT, thus showing that at least one of the states $\rho_k$ produced by the source is entangled. We also show how symmetry-resolution techniques can be used to enhance the entanglement detection capabilities.}
    \label{fig.summary}
  \end{minipage}
\end{figure*}

{\it v)} we show how SR conditions can, in fact, be applied to arbitrary states, via application of a proper transformation on the density matrix of interest. In practice, this transformation is effectively carried out in the post-processing step of the classical shadows;

{\it vi)} as illustrated in Fig.~\ref{fig.summary}, the uncertainty in estimating the moments can be bounded, in principle, using the classical shadows formalism. Here, we show how to combine those bounds to provide rigorous confidence intervals for SR-$D_2$, which considerably strengthen the impact of our results in real experiments.

\section{Entanglement detection from partial transpose moments}\label{sec.ent_det_PT_moments}

In this section, we present entanglement conditions based on PT moments. We extend the idea behind the $p_3$-PPT condition (c.f.~Eq.~\eqref{p3-ppt}) of Ref.~\cite{EKH20} in two directions. On the one hand, we present a set of inequalities involving all the PT moments which provides a necessary and sufficient condition for the underlying state to be PPT. In addition, each condition of this set is itself a necessary PPT condition. On the other hand, we show how to optimize such entanglement conditions when only few low-order PT moments are accessible.

The idea behind this set of conditions is to use Descartes' rule of signs on the characteristic polynomial of a Hermitian matrix to obtain a set of moment inequalities that has to be satisfied by any PSD matrix. Applied to the partially transposed matrix $\rho^\Ga$, such conditions can then be used to detect the entanglement of NPT quantum states. More precisely, using the definition of the elementary symmetric polynomials on $d$ variables,
\begin{equation}\label{eq.ElPoly}
    e_i(x_1,\dots,x_d) = \sum_{1\leq j_1 < \cdots < j_i \leq d} x_{j_1} \cdots x_{j_i},
\end{equation}
for $i=1,\dots,d$, and $e_0(x_1,\dots,x_d)=1$, we derive in Appendix~\ref{App.Des} the following lemma.

\begin{lemma}\label{lem.con}
   A Hermitian matrix $A$ of dimension $d$ is PSD if and only if  $e_i(\la_1,\ldots,\la_d) \geq 0$ for all $i=1,\ldots,d$, where $\la_1,\ldots,\la_d$ are the eigenvalues of $A$, and $e_i$ denote the elementary symmetric polynomials (Eq.~\eqref{eq.ElPoly}).
\end{lemma}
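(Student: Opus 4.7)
The plan is to use the characteristic polynomial of $A$ as the bridge between the eigenvalues and the elementary symmetric polynomials, then apply Descartes' rule of signs on the reverse direction. Concretely, since $A$ is Hermitian, all eigenvalues $\la_1,\ldots,\la_d$ are real, and the characteristic polynomial expands as
\begin{equation}
\chi_A(x) = \prod_{i=1}^d (x-\la_i) = \sum_{k=0}^d (-1)^k e_k(\la_1,\ldots,\la_d)\, x^{d-k},
\end{equation}
with the convention $e_0 = 1$. So the $e_k$'s are (up to sign) exactly the coefficients of $\chi_A$.

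The forward direction is immediate: if $A$ is PSD, then $\la_i \geq 0$ for every $i$, and each $e_i(\la_1,\ldots,\la_d)$ is a sum of products of non-negative numbers, hence non-negative.

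For the converse, I would reduce the problem to showing that $\chi_A$ has no negative real roots. Substituting $x \mapsto -x$ gives
\begin{equation}
(-1)^d \chi_A(-x) = \sum_{k=0}^d e_k(\la_1,\ldots,\la_d)\, x^{d-k},
\end{equation}
a polynomial whose roots are exactly $-\la_1,\ldots,-\la_d$. Under the hypothesis $e_i \geq 0$ for $i=1,\ldots,d$ (together with $e_0=1>0$), this polynomial has no sign changes among its non-zero coefficients. Descartes' rule of signs then forces it to have zero positive real roots. Since the roots are real (by Hermiticity of $A$), every $-\la_i$ must be non-positive, i.e.\ $\la_i \geq 0$, which is exactly the PSD condition.

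I do not anticipate any serious obstacle; the only subtlety worth spelling out carefully is the handling of zero coefficients in Descartes' rule (they are simply skipped when counting sign changes, so the argument still works whenever the non-zero $e_k$'s are all positive), together with the reliance on Hermiticity to ensure that all roots of $\chi_A$ are real — without this, the absence of positive roots of $(-1)^d\chi_A(-x)$ would not guarantee positivity of the $\la_i$.
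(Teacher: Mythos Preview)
Your proposal is correct and follows essentially the same approach as the paper: both arguments pass from $A$ to its characteristic polynomial, substitute $x\mapsto -x$ so that the coefficients become exactly the $e_k$'s, and invoke Descartes' rule of signs to conclude that the absence of sign changes forces all $-\la_i$ to be non-positive. The only cosmetic difference is that you handle the forward implication directly (each $e_i$ is a sum of products of non-negative reals), whereas the paper uses the exact-count version of Descartes' rule (valid for polynomials with only real roots) to get both directions at once.
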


Using Newton's identities, which relate the elementary symmetric polynomials, $e_k$, in the eigenvalues of $A$ to the moments of $A$ through the recursive formula
\begin{equation}\label{sym}
  k\mss e_k=\sum_{i=1}^k(-1)^{i-1}e_{k-i} \; p_i(A),
\end{equation}
each inequality $e_i \geq 0$ can be transformed into an inequality involving moments of $A$ of order up to $i$. We denote by $D_i$ these moments inequalities. As an illustration, the conditions $D_1$ to $D_4$ read 
\begin{align}
    & p_1(A) \geq 0, \label{eq:D1} \\
    & p_2(A) \leq (p_1(A))^2, \label{eq:D2} \\
    & p_3(A)  \geq -\frac{1}{2} (p_1(A))^3 + \frac{3}{2} p_1(A) p_2(A), \label{eq:D3} \\
    & p_4(A) \leq \frac{1}{2} \left( (p_1(A))^2 - p_2(A) \right)^2 - \frac{1}{3} (p_1(A))^4\notag\\
    &\hspace{45mm}+ \frac{4}{3} p_1(A) p_3(A), \label{eq:D4}
\end{align}
respectively. One has $p_1(\rho^\Ga)=1$ for any quantum state $\rho$, implying that $D_1$ is trivially satisfied. Similarly, since $p_2(\rho^\Ga)$ is equal to $p_2(\rho)$ (i.e., to the purity of $\rho$) for any quantum state $\rho$, the inequality $D_2$ is also trivially satisfied. Therefore, when $\rho$ is a quantum state, the first non-trivial inequality for $\rho^\Ga$ is $D_3$. As will be shown in the next section, it is sometimes more efficient (in order to detect entanglement) to apply these inequalities to projections of $\rho^\Ga$ onto specific subspaces, rather than to $\rho^\Ga$ itself. We would like to stress here that, in that case, the argument above does not hold, so that the inequality $D_2$ is not trivially satisfied and can already reveal the presence of entanglement (see Sec.~\ref{sec.sr}). 

When applied to $\rho^\Ga$, Lemma~\ref{lem.con} and Newton's identities~\eqref{sym} can thus be used to detect NPT entangled states from PT moments only. From an experimental point of view, this is an important aspect of this entanglement detection scheme, as PT moments are experimentally more affordable to estimate than, for instance, the whole  spectrum of $\rho^\Ga$. As PT moments are more expensive to be estimated the higher the order, these inequalities should be considered starting from those involving the lowest moment orders. Even though showing that a state is NPT with this method can in principle require the knowledge of all the PT moments, we will provide many experimentally relevant instances where entanglement can be effectively detected from low-order moments even in the presence of errors. To this end, we provide confidence intervals for the quantities of interest granting that a certain inequality is violated with high probability (see Theorem~\ref{thm:r1} and, e.g, Appendix~\ref{App.shadows}).

Similarly, let us mention here that necessary and sufficient conditions for a matrix to be PSD can be expressed as different sets of polynomial inequalities in its moments. One of such sets can be deduced from the well-known (truncated) {\em Stieltjes moment problem} (see Appendix~\ref{App.Sti}). In Sec.~\ref{sec.Applications}, we illustrate the usefulness of these inequalities by applying them to the entanglement detection of the ground state of the XXZ model (c.f. Fig~\ref{fig:XXZ_sim}). Let us finally also mention that, from a few moments of a Hermitian matrix, one can also obtain bounds on the distance between this matrix and the PSD cone \cite{Gemma20}. 

\subsection{Optimized condition for low-order moments}
\label{sec.optimization}
Due to (experimental) constraints, it might  not be possible to determine all, but only a few, PT moments. This is why, we show here how to optimize necessary PPT conditions using only PT moments of order up to three. From the previous sections, we already have two examples of such conditions, namely the $p_3$-PPT and $D_3$ conditions. As illustrated in Fig.~\ref{fig.comparing}, the $p_3$-PPT ($D_3$) condition is tighter than $D_3$ ($p_3$-PPT) for states with purity larger (smaller) than $1/2$, respectively. As the low-order moments are easier to access experimentally, we now address the question about the optimal inequality involving PT moments of order up to three. 

\begin{figure}
  \centering
  \hspace{-.2mm}\includegraphics[width=\columnwidth]{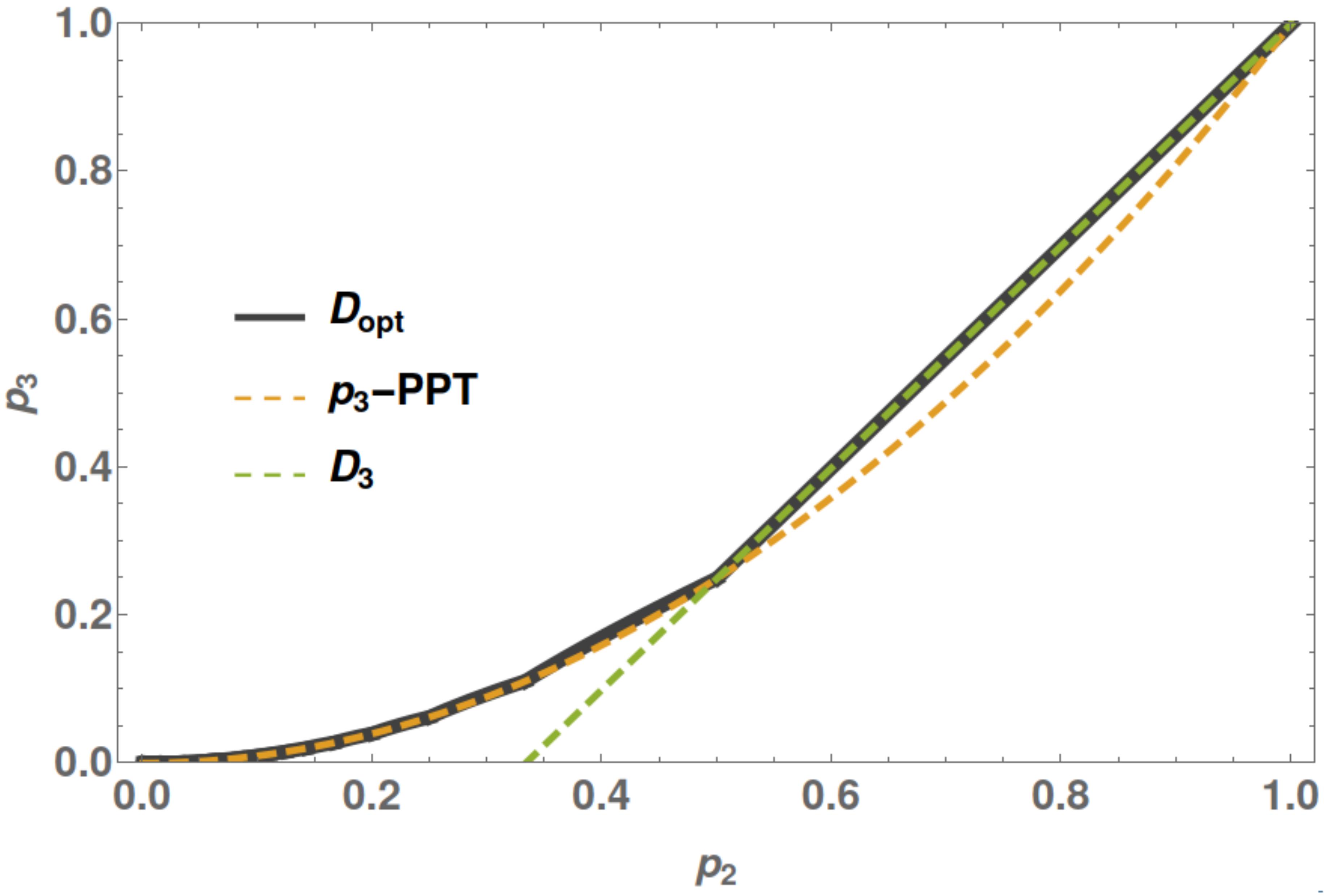}
  \caption{Plot of the value of the third moment $p_3$ saturating the $p_3$-PPT (dashed orange curve), the $D_3$ (dashed green line) and the optimal $D_3^{\rm opt}$ (thick black curve) conditions as a function of the second moment $p_2$ for a normalized Hermitian matrix. According to the $p_3$-PPT condition, any state $\rho$ with a value of $p_3(\rho^\Ga)$ below the dashed orange curve is entangled. Similarly, the condition $D_3$ shows that any state $\rho$ with a value of $p_3(\rho^\Ga)$ below the dashed green line is entangled. From this plot, it is clear that, for $p_2(\rho^\Ga)>1/2$, all entangled
    states detected by the $p_3$-PPT condition are also detected by $D_3$, which coincides with $D_3^{\rm opt}$ in this case. When $p_2(\rho^\Ga)<1/2$, the $p_3$-PPT condition is then {\em stronger} than $D_3$, and $D_3^{\rm opt}$ represents a slight improvement over the $p_3$-PPT condition. As illustrated in Sec.~\ref{sec.Applications}, this slight improvement can nevertheless be important for the detection of physically relevant states.}
  \label{fig.comparing}
\end{figure}

To answer this question, we use the following approach. For fixed values $p_1$ and $p_2$ of the first two moments, we determine the minimal value $p_3^{\rm min}$ that the third moment can reach for any PSD matrix~\footnote{Note that we want here to minimize $p_3$ because it is an odd moment (for which negative eigenvalues would have the tendency to decrease the value of the moment). For an even moment, we would instead maximize the value of this moment over PSD matrices. This is also reflected in the $D_n$ conditions~\eqref{eq:D2}--\eqref{eq:D4}, where the inequality sign alternates between even and odd values of $n$.}. From this bound, we know that any Hermitian matrix with a smaller third moment is necessarily not PSD. Naturally, we restirc ourselves to values of $p_1$ and $p_2$ which are compatible with a PSD matrix, and therefore satisfy Eqs.~\eqref{eq:D1} and~\eqref{eq:D2} \footnote{Recall that for the partial transpose of a density operator this is always fulfilled. }. 

Given a $d\times d$ PSD matrix $A$, with non-zero eigenvalues $\lambda_1,\dots,\lambda_r$, for some $r\in[1,d]$, this optimization can be performed with the help of Lagrange multipliers. As shown in Appendix~\ref{App.Lagrange}, this leads to solutions with only two distinct eigenvalues $\lambda_a,\lambda_b$ with multiplicity $r_a,r-r_a$, respectively, for $r_a\in[1,r]$. Assuming, without loss of generality, that $\lambda_a \geq \lambda_b$, the optimization of $p_3$ leads then to $r_a=r-1$. For each value of $r$, the optimal value of $p_3$ can be easily determined in the interval $[1/r,1/(r-1)]$. For $r=2$ this leads to $D_3$ whereas for $r>2$ one obtains an optimal value of $p_3$ which is slightly better than $p_3$-PPT. Observe that $p_3^{\rm min}(p_2)$ is a piece-wise function and the derivative $\pd p_3^{\rm min}/\pd p_2$ is discontinuous at points $p_2=1/r$ (see Fig.~\ref{fig.comparing}).

\section{Symmetry-Resolved entanglement detection}\label{sec.sr}
Symmetries, as they often occur in physical situations, can be exploited to observe relevant phenomena (see e.g. Refs.~\cite{GoldsteinSela2018,xas-18,Feldman2019,BRC19,Tan2020,fraenkel2020symmetry,Murciano2020,Azses2020,Turkeshi2020,murciano2021symmetry,pbc-2021,Vi21}). Here, we use symmetries to ease the detection of entanglement. More precisely, we apply the previously developed tools to symmetric states, which will lead to conditions of entanglement involving much lower moments of the partial transpose projected onto certain subspaces. Despite the fact that these quantities differ significantly from the moments of $\rho^\Ga$, we will show later on that they can nevertheless be estimated using the framework of classical shadows. 

We consider a bipartite state $\rho=\rho_{AB}$, with subsystems $A$ and $B$ containing $n$ and $m$ qubits, respectively. We assume that this state commutes with $\sum_{i=1}^{n+m}Z_i$, or similarly with the total number operator $\mathcal{N}=\mathcal{N}_A+\mathcal{N}_B$. Here and in the following, we denote by $X,Y,Z$ the Pauli operators. Obviously, such a state has a block diagonal form, i.e.,
\begin{equation}\label{eq:decom}
\rho=\bigoplus_{q=0}^{n+m}{\rho_{(q)}}=\sum_q Q_q\rho Q_q,
\end{equation}
where each block (or sector) is labeled by an eigenvalue $q\in\{0,1,\ldots,n+m\}$ of the operator $\mathcal{N}$ and has support in the corresponding eigenspace. Here, 
\bea
    Q_{q}=\sum_{a+b=q}\Pi_a(A)\otimes\Pi_b(B) \label{def:Q},
\eea     
with \[
\begin{aligned}
    \Pi_k(A)&=\sum_{i_1+\cdots+i_n=k}\ket{i_1\cdots i_n}\bra{i_1\cdots i_n}\,,
 \end{aligned}
 \]    
and similarly for $B$. 
It has been shown~\cite{CGS18} that, for this type of symmetry, the partial transpose $\rho^\Ga$ is also block diagonal, but in a different basis. In fact, 
$\rho^\Ga=\oplus_{q=-m}^{n} \rho^\Ga_{(q)}= \sum_q P_q\rho^\Ga P_q$, where $P_q$ is the projector onto the eigenspace of
$\mathcal{N}_A-\mathcal{N}_B$ with eigenvalue $q\in\{-m,-m+1,\dots,n\}$~\footnote{This can be easily seen as follows. Consider a matrix element $\rho_{ab,a'b'}\ket{ab}\bra{a'b'}$ of $\rho$ with eigenvalue $i$ of $\mathcal{N}_A+\mathcal{N}_B$. Precisely, let us write $\mathcal{N}_A\ket a=n_a\ket a$, $\mathcal{N}_A\ket{a'}=n_{a'}\ket{a'}$, $\mathcal{N}_B\ket b=n_b\ket b$, and $\mathcal{N}_B\ket{b'}=n_{b'}\ket{b'}$ with $n_a+n_b=n_{a'}+n_{b'}=i$. After partial transposition, $\rho_{ab,a'b'}\ket{ab}\bra{a'b'}\mapsto\rho_{ab,a'b'}\ket{ab'}\bra{a'b}$. For our particular case, $\mathcal{N}_B=\mathcal{N}_B^\Ga$ and one can see that $(\mathcal{N}_A-\mathcal{N}_B)\ket{ab'}=(n_a-n_{b'})\ket{ab'}$ and $(\mathcal{N}_A-\mathcal{N}_B)\ket{a'b}=(n_{a'}-n_b)\ket{a'b}$ with $n_a-n_{b'}=n_{a'}-n_b$. This shows that matrix elements within a block of $\rho$ are mapped, via partial transposition, to matrix elements within a block of $\rho^\Ga$.}, i.e. 
\bea
    P_{q}=\sum_{a-b=q}\Pi_a(A)\otimes\Pi_b(B) \label{def:P}.
\eea
The size of the sector corresponding to the eigenvalue $q$ in the block-decomposition of $\rho^\Ga$ is given by
\[
\tr P_q=\sum_{a-b=q}\binom na\binom mb=\binom{n+m}{q+m}\,.
\]

When the partial transpose of a density matrix has a block structure, it is naturally PSD iff each block is itself a PSD matrix. Therefore, one can apply the conditions of the previous section directly to the blocks $\rho^\Ga_{(q)}$ of the partial transpose. For the $p_3$-PPT condition, the corresponding symmetry-resolved (SR) inequalities are simply 
\[
p_3(\rho^\Ga_{(q)})p_1(\rho^\Ga_{(q)})\geq(p_2(\rho^\Ga_{(q)}))^2
\]
for all $q=-m,-m+1,\ldots,n$. Any violation of a PSD condition in one of the blocks is then sufficient to show that $\rho^\Ga$ has at least one negative eigenvalue and that $\rho$ is therefore entangled.

When using the $D_i$ conditions, symmetry-resolution can be a significant advantage (see e.g. Sec.~\ref{sec.Applications}). First, the necessary and sufficient PSD conditions involve moments of order at most equal to the dimension of the largest block, that is $\binom {n+m}{\lfloor (n+m)/2 \rfloor}$, which is necessarily smaller than the dimension of the density matrix itself. Second, since a block $\rho_{(q)}^\Ga$ of $\rho^\Ga$ is (in general) not the partial transpose of any positive matrix~\footnote{This is, there could be no $\si>0$ such that $\rho_{(q)}^\Ga=\si^\Ga$.}, the inequality: 
\begin{equation}\label{eq:SR_d2}
p_2(\rho_{(q)}^\Ga)\leq(p_1(\rho_{(q)}^\Ga))^2
\end{equation}
is not necessarily satisfied. This implies that moments of order two can already be sufficient to detect entanglement. 

As stressed in the introduction, using PT-moment inequalities to detect entanglement is particularly interesting from an experimental point of view, because such PT moments can be estimated, for instance using shadow tomography~\cite{EKH20}. As we show in the following lemma, the shadow tomography protocol used in Ref.~\cite{EKH20} can also be used to estimate moments of blocks of the partial transpose (which differ significantly from the PT moments) by slightly modifying the non-linear observable that has to be measured.

\begin{lemma}\label{lem.Li}
  Given a symmetric state $\rho=\sum_iQ_i\rho Q_i$, for each eigenvalue $i$ of $\mathcal{N}_A-\mathcal{N}_B$, it holds that
  \[
    \tr(P_i\rho^\Ga P_i)^k=\tr(L_i^{(k)}\rho^{\otimes k})
  \]
  where the operators $L_i^{(k)}$ are given by
  \begin{multline*}
    L_i^{(k)}=\left(\sum_{a-b=i}\Pi_a(A_1)\otimes1\otimes\cdots\otimes1\otimes\Pi_b(B_k)\right)\cdot \\ \tilde
    S(A_1,\ldots,A_k)\otimes S(B_1,\ldots,B_k)
  \end{multline*}
  with 
  \[
    \begin{aligned}
      \tilde
      S(A_1,\ldots,A_k)&=\sum_{a_1}\cdots\sum_{a_k}\ket{a_ka_1\cdots
        a_{k-1}}\bra{a_1\cdots a_k}\,,\\
      S(B_1,\ldots,B_k)&=\sum_{b_1}\cdots\sum_{b_k}\ket{b_2\cdots
        b_kb_1}\bra{b_1\cdots b_k}.
    \end{aligned}
  \]
  Here, the sum over each $a_i$ ($b_i$) runs from $1$ to $2^n$ ($2^m$) respectively. 
\end{lemma}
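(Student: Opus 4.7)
The plan is to derive the identity by a chain of four operator-level manipulations: simplify using the block structure of $\rho^\Ga$, invoke the cyclic-shift trace identity, move a partial transpose from the state to the test operator, and finally compute that partial transpose explicitly.

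The first step leverages the footnote following Eq.~\eqref{def:P}: since $\rho$ commutes with $\mathcal N_A+\mathcal N_B$, its partial transpose $\rho^\Ga$ commutes with every $P_i$. Together with $P_i^2=P_i$ and trace cyclicity, this collapses the three projectors on the left-hand side to
\[
\tr\bigl((P_i\rho^\Ga P_i)^k\bigr)=\tr\bigl(P_i(\rho^\Ga)^k\bigr).
\]

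Next I would apply the cyclic-shift identity $\tr(M_1\cdots M_k)=\tr\bigl(V\cdot M_1\otimes\cdots\otimes M_k\bigr)$ with $V=\tilde S(A_1,\ldots,A_k)\otimes\tilde S(B_1,\ldots,B_k)$, the right-shift on the $k$ copies of $A\otimes B$, taking $M_k=\rho^\Ga P_i$ and $M_j=\rho^\Ga$ for $j<k$. The tensor product rearranges as $\bigl(1^{\otimes(k-1)}\otimes P_i\bigr)(\rho^\Ga)^{\otimes k}$ (after moving the single $P_i$ through, using $[\rho^\Ga,P_i]=0$ on the last copy). Writing $(\rho^\Ga)^{\otimes k}=(\rho^{\otimes k})^{\Gamma_B}$ with $\Gamma_B$ the joint partial transpose over all $B_j$, and applying the identity $\tr(XY^{\Gamma_B})=\tr(X^{\Gamma_B}Y)$, one arrives at
\[
\tr\bigl((P_i\rho^\Ga P_i)^k\bigr)=\tr\Bigl(\bigl[V\bigl(1^{\otimes(k-1)}\otimes P_i\bigr)\bigr]^{\Gamma_B}\,\rho^{\otimes k}\Bigr).
\]

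The remaining task is to compute this partial transpose explicitly. Expanding $P_i=\sum_{a-b=i}\Pi_a\otimes\Pi_b$, the key ingredients are (i)~$\Pi^T=\Pi$ for computational-basis projectors, (ii)~$\tilde S_B^{\,T}=\tilde S_B^{-1}=S_B$, and (iii)~the commutation $\tilde S_A\bigl(1^{\otimes(k-1)}\otimes\Pi_a(A_k)\bigr)=\bigl(\Pi_a(A_1)\otimes 1^{\otimes(k-1)}\bigr)\tilde S_A$, which holds because $\tilde S_A$ transports the content of slot $A_k$ into slot $A_1$. Applying (i)-(ii) turns the $B$-side shift $\tilde S_B$ into $S_B$ while keeping the $B$-projector on slot $B_k$, and applying (iii) moves the $A$-projector in front of $\tilde S_A$ and onto slot $A_1$. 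The operator then factorises as $\Pi\cdot(\tilde S_A\otimes S_B)$ with $\Pi=\sum_{a-b=i}\Pi_a(A_1)\otimes 1\otimes\cdots\otimes 1\otimes\Pi_b(B_k)$, exactly reproducing $L_i^{(k)}$.

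The main obstacle is the index bookkeeping: one must commit early to a direction of the cyclic shift (here right-shift on both $A$ and $B$) and then carefully track how the partial transposition on the $B$-side flips its chirality into $S_B$, and how the $A$-projector is carried from $A_k$ to $A_1$ by $\tilde S_A$. Once these conventions are fixed, each commutation is a short verification and the assembly into the paper's form is automatic.
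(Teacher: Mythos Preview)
Your argument is correct and uses the same ingredients as the paper: the block structure $[\rho^\Ga,P_i]=0$ to collapse the projectors, the cyclic-shift trace identity, the rule $\tr(XY^{\Gamma_B})=\tr(X^{\Gamma_B}Y)$, and $P_i^{\Gamma}=P_i$ (equivalently $\Pi^T=\Pi$). The only organisational difference is direction: the paper starts from the right-hand side, first establishing the operator-level identity $\tr_{\setminus A_1B_k}\bigl(\tilde S_A\otimes S_B\,\rho^{\otimes k}\bigr)=((\rho^\Ga)^k)^\Ga$ and then inserting a single $P_i$, whereas you start from the left-hand side with the same-chirality shift $\tilde S_A\otimes\tilde S_B$ and let the partial transpose on the $B$ copies flip $\tilde S_B$ into $S_B$. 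Your route has the minor expository benefit of explaining \emph{why} the two shift operators in $L_i^{(k)}$ carry opposite orientations; the paper simply states the mixed-chirality identity and calls it straightforward. Substantively, the two proofs coincide.
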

\begin{proof}
  Denoting by $\rho_{ab,a'b'}$ the entries of the operator and using that  $\rho_{ab,a'b'}=\rho^\Ga_{ab',a'b}$, it is straightforward to see that
  \begin{multline}\label{toshow}
    \tr_{\setminus A_1B_k}\left(\tilde
      S(A_1,\ldots,A_k)\otimes S(B_1,\ldots,B_k)\rho^{\otimes k}\right)\\
      =((\rho^\Ga)^k)^\Ga\,.
  \end{multline}
  Then, we have that
  \[
  \begin{aligned}
    \tr(L_i^{(k)}\rho^{\otimes
      k})=\tr\left(P_i\,((\rho^\Ga)^k)^\Ga\right)&=\tr\left(P_i^\Ga\,(\rho^\Ga)^k\right)\\
      &=\tr\left(P_i\,(\rho^\Ga)^k\right).
  \end{aligned}
  \]
  Here, the first equality follows from the definition of $L_i^{(k)}$ and Eq.~\eqref{toshow}; the second, from $\tr(RS^\Ga)=\tr(R^\Ga S)$ for any two matrices $R,S$; and the
  third, from $P_i^\Ga=P_i$. Finally, using that $P_i=P_i^2$ are
  orthogonal projectors, the cyclic property of the trace, and the
  block structure of $\rho^\Ga$, we have
  \[
    \tr\left(P_i\,(\rho^\Ga)^k\right)=\tr\left(P_i\,(\rho^\Ga)^kP_i\right)=\tr\left(P_i\,\rho^\Ga
      P_i\right)^k,
  \]
  which completes the proof. 
  
\end{proof}

\subsection{Classical shadows}
\label{sec:class_shadows}
Classical shadows are a convenient formalism to reason about predicting properties of a quantum system based on randomized single-qubit measurements performed in single-copy experiments~\cite{huang2020shadow}. We refer the reader to~\cite{huang2020shadow} and Appendix~\ref{App.shadows} for an introduction to classical shadows. The original classical shadow formalism is contingent on noiseless measurements and sources that produce {\em iid} states. Subsequently, it was shown that classical shadows can also handle noisy measurements~\cite{flammia2020shadow,koh2020shadow}. As we show in detail in Appendix~\ref{App.shadows}, the formalism can also be extended to take non-identical, but independent, state preparations ("drifts") into account, i.e. the source produces the states $\{\rho_1, \rho_2,\ldots, \rho_N\}$. In this case, each snapshot $\hat\rho_i$ will have a different expectation value and
\[
\frac1N\sum_{i=1}^N\hat\rho_i\to\frac1N\sum_{i=1}^N{\mathbb E}\hat\rho_i=\frac1N\sum_{i=1}^N\rho_i=:\rho_{\rm avg}\,,
\]
i.e, the average of the snapshots converges to the average state. Since different snapshots are statistically independent, it turns out that one can estimate linear functions, say $\mathrm{tr}(O \rho_{\rm avg})$.

These ideas regarding the prediction of linear observables do extend to higher order polynomials. Here, we restrict ourselves to the quadratic case~\cite{huang2020shadow} involving up to second order moments of a block of the partial transpose. An extension to higher order polynomials is conceptually straightforward, but can become somewhat tedious to analyze~\cite{EKH20}. Let us fix a block label $i$ and consider the second order moment inequality restricted to this block:
\[
\begin{aligned}
D_2^{(i)}(\rho) &= \left( \mathrm{tr} \left( P_i \rho^\Gamma P_i \right) \right)^2 - \mathrm{tr} \left( P_i \rho^\Gamma P_i \right)^2\\
&=\left(p_1( P_i\rho^\Gamma P_i)\right)^2-p_2(P_i\rho^\Gamma P_i)={\rm tr}(Q\rho\otimes\rho)\,,
\end{aligned}
\]
where, using Lemma~\ref{lem.Li},
\[
Q = L_i^{(1)}\otimes L_i^{(1)}-L_i^{(2)}\,.
\]
Recall that $D_2^{(i)}(\rho)<0$ implies that $\rho$ is entangled. Here, we will provide confidence intervals in the estimation of $D_2^{(i)}(\rho_{\rm avg})$ given a fixed number of measurements. Observe that $D_2^{(i)}(\rho_{\rm avg})<0$ implies that there is at least a value $k\in\{1,2,\ldots,N\}$ such that $D_2^{(i)}(\rho_k)<0$. Thus, $D_2^{(i)}(\rho_{\rm avg})<0$ implies that the source is able to produce entangled states.

Let us finally introduce (see Appendix~\ref{App.shadows} for details) the empirical average, over $N(N-1)$ pairs of independent snapshots, 
\[
\widehat{D}_2^{(i)}=\frac{1}{N(N-1)} \sum_{i \neq j} \mathrm{tr} \left( Q \hat{\rho}_i \otimes \hat{\rho}_j \right)\,.
\]
We can fix the desired approximation accuracy $\epsilon$ and a probability-of-error threshold $\delta$ to obtain a lower bound on the measurement budget $N$. For simplicity, let us consider a non-trivial sector (i.e. $q\neq -m,n$) and assume $n+m\geq 4$. Then one has the following theorem (see Appendix~\ref{App.shadows} for a slightly better bound).

\begin{thm}[Error bound for $D_2^{(i)}$]\label{thm:r1}
Fix $\epsilon\in(0,1)$ (accuracy of the approximation), $\delta \in (0,1)$ (probability-of-error threshold), a bipartition $AB$, as well as a symmetry sector $i$. Suppose that we perform
\[
N \geq \frac{2^{n+m}\mathrm{tr}(P_i)}{\epsilon^2 \delta} \frac{1}{2}
\left(4 + \sqrt{4^2 +2 \frac{\epsilon^2 \delta 2^{n+m}}{\mathrm{tr}(P_i)}}
\right)+1
\]
randomized, single-qubit measurements on independent states.
Then,
\begin{equation*}
\left| \widehat{D}_2^{(i)}-D_2^{(i)}(\rho_{\mathrm{avg}}) \right| \leq \epsilon \quad \text{with prob.\ (at least) $1-\delta$.}
\end{equation*}
\end{thm}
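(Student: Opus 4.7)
The plan is to bound $|\widehat D_2^{(i)}-D_2^{(i)}(\rho_{\mathrm{avg}})|$ by the standard ``U-statistic = (near-)unbiased mean + controlled variance + Chebyshev'' template, adapted to the drifting-source shadow formalism of Appendix~\ref{App.shadows}. First I would verify near-unbiasedness: since the snapshots $\hat\rho_1,\ldots,\hat\rho_N$ are statistically independent with $\mathbb E[\hat\rho_k]=\rho_k$, independence gives $\mathbb E[\mathrm{tr}(Q\hat\rho_i\otimes\hat\rho_j)]=\mathrm{tr}(Q\rho_i\otimes\rho_j)$ for $i\neq j$. Averaging over ordered pairs and using $\mathrm{tr}(Q\rho_{\mathrm{avg}}\otimes\rho_{\mathrm{avg}})=D_2^{(i)}(\rho_{\mathrm{avg}})$ then shows that $\mathbb E[\widehat D_2^{(i)}]$ agrees with $D_2^{(i)}(\rho_{\mathrm{avg}})$ up to a deterministic $O(1/N)$ correction coming from the excluded diagonal terms $\mathrm{tr}(Q\rho_k\otimes\rho_k)$; this correction is bounded uniformly in $\{\rho_k\}$ and can be absorbed into $\epsilon$ in the regime where the claimed bound applies.

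Next I would bound $\mathrm{Var}[\widehat D_2^{(i)}]$ via the Hoeffding decomposition of the U-statistic into a rank-one contribution (fluctuations of a single snapshot against the mean of the remaining ones, scaling as $1/(N-1)$) and a rank-two contribution (intrinsic variance of the pair kernel, scaling as $1/(N-1)^2$). Both pieces are controlled by the shadow-norm machinery of Refs.~\cite{huang2020shadow,EKH20} applied to $Q=L_i^{(1)}\otimes L_i^{(1)}-L_i^{(2)}$. Using Lemma~\ref{lem.Li}, the cyclic operators $\tilde S,S$ have operator norm one, and the only dimensional factor that survives the inverse single-qubit measurement channel is the trace $\mathrm{tr}(P_i)=\binom{n+m}{q+m}$ of the projector $P_i$. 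A careful Pauli-string expansion then produces a bound of the form
\[
\mathrm{Var}[\widehat D_2^{(i)}]\leq\frac{4\cdot 2^{n+m}\mathrm{tr}(P_i)}{N-1}+\frac{2^{3(n+m)-1}\mathrm{tr}(P_i)}{(N-1)^2}.
\]

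Finally I would apply Chebyshev's inequality $\Pr[|\widehat D_2^{(i)}-D_2^{(i)}(\rho_{\mathrm{avg}})|>\epsilon]\leq\mathrm{Var}[\widehat D_2^{(i)}]/\epsilon^2$. Demanding the right-hand side to be at most $\delta$ and substituting the variance bound turns the requirement into the quadratic inequality $\epsilon^2\delta(N-1)^2\geq 4\cdot 2^{n+m}\mathrm{tr}(P_i)(N-1)+2^{3(n+m)-1}\mathrm{tr}(P_i)$ in $N-1$, whose larger positive root is exactly the expression appearing in the stated lower bound on $N$.

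The hard part will be the variance bound. A naive treatment of $L_i^{(k)}$ as a generic operator on $n+m$ qubits under local random-Clifford shadows would carry a ruinous $4^{n+m}$ prefactor. The key observation is that the projector $P_i$ restricts the relevant operator to the sector of $\mathcal N_A-\mathcal N_B$ with eigenvalue $q$, whose dimension $\mathrm{tr}(P_i)=\binom{n+m}{q+m}$ is only polynomial in the system size in non-trivial sectors; combining this restriction with the multiplicativity of the single-qubit shadow norm across tensor factors replaces the $4^{n+m}$ blow-up with the controlled $2^{n+m}\mathrm{tr}(P_i)$ scaling above. Disentangling the drifting-source contribution from the standard iid Hoeffding analysis simultaneously requires the mild adjustments proved in Appendix~\ref{App.shadows}, which guarantee that the independence (but not identical distribution) of the $\hat\rho_k$ is enough to execute the decomposition.
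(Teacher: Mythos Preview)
Your overall template---near-unbiasedness for the U-statistic, a variance decomposition into a rank-one $(N-1)^{-1}$ piece and a rank-two $(N-1)^{-2}$ piece, then Chebyshev and a quadratic in $N-1$---is exactly the paper's route. Two points in your description of the ``hard part'' do not match the actual mechanism, though, and would not by themselves produce the claimed constants.

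First, the paper does \emph{not} avoid the $4^{n+m}$ weight factor. The linear shadow-variance bound $\mathrm{Var}[\mathrm{tr}(O\hat\rho)]\le 2^{\mathrm w(O)}\mathrm{tr}(O^2)$ is applied as is, with $\mathrm w(Q)=2(n+m)$ on two copies and $\mathrm w(Q(\sigma))=n+m$ on one copy; both observables act nontrivially on \emph{every} qubit, so no sector restriction shrinks the weight. What actually makes the constants come out is that the Hilbert--Schmidt norms $\mathrm{tr}(Q^2)$ and $\mathrm{tr}(Q(\sigma)^2)$ are small. The paper gets this not by a Pauli-string expansion but by writing $Q$ explicitly as $P_i\otimes P_i-\tfrac12 W_A(P_i\otimes\mathbb I)W_A-\tfrac12 W_B(P_i\otimes\mathbb I)W_B$ and computing the traces directly from the orthogonality relations of the $\Pi_a,\Pi_b$ (Lemmas~\ref{lem:D2-quadratic} and~\ref{lem:D2-linear}); this yields $\mathrm{tr}(Q^2)\le\tfrac12\mathrm{tr}(P_i)(3\,\mathrm{tr}(P_i)+2^{n+m}-4)$ and an expression for $\mathrm{tr}(Q(\sigma)^2)$ of order $\mathrm{tr}(P_i)$. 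Plugging these into $2^{\mathrm w(\cdot)}\mathrm{tr}((\cdot)^2)$ is what produces your displayed $4\cdot 2^{n+m}\mathrm{tr}(P_i)/(N-1)$ and $2^{3(n+m)}\mathrm{tr}(P_i)/(N-1)^2$ scalings---not any multiplicativity or sector-dimension argument.

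Second, the paper does not ``absorb'' the drift bias into $\epsilon$; it bounds it explicitly (Lemma~\ref{lem:bias}) by $|\Delta|\le 2\|Q\|_\infty$ and carries the resulting $\Delta^2/(N-1)^2$ alongside the variance in the Markov step, so that both statistical fluctuation and bias feed into the same quadratic inequality. This is a minor point numerically (the bias term is dominated by the rank-two variance piece), but it is how the argument stays valid for genuinely non-identical sources.
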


Finally, let us stress that this error bound addresses the estimation of $D_2^{(i)}(\rho_{\rm avg})$ in terms of a single U-statistics estimator. The poor scaling in $1/\delta$ can be exponentially improved by dividing the classical shadow into equally-sized batches and performing a median-of-U-statistics estimation instead~\cite{huang2020shadow}: $1/\delta \to \text{const} \times \log (1/\delta)$. However, numerical experiments conducted in Ref.~\cite{EKH20} suggest that this trade-off is only worthwhile if one attempts to predict many properties with the same data set.

\subsection{SR inequalities applied to states without symmetries}
In the last part of this section, we show that the SR inequalities can also be used to detect the entanglement of arbitrary states, including those that do not have any symmetry. 

The reason for that is that there 
exists a local channel $\mathcal{C}$ that transforms any state $\rho$ into a state $\sigma \equiv \mathcal{C}(\rho)$ that has the desired block structure. The channel can be realized with local operations assisted by classical communication, and can thus not generate entanglement. Therefore, the initial state $\rho$ must be at least as entangled as the final block diagonal state $\sigma$. This statement holds for any entanglement measure. As a consequence, if entanglement is detected in $\sigma$ (which can be investigated using the symmetry-resolved tools), then $\rho$ is necessarily also entangled. In other words, looking at the entanglement of $\sigma$, the "block-diagonalized" version of $\rho$, provides a sufficient condition of entanglement for $\rho$. This condition is not necessary as it could be that the channel $\mathcal{C}$ destroys all the entanglement of $\rho$.

The local channel that can be used for this approach is the following:
\begin{equation}
\cC:\rho\to\cC(\rho)=\frac1{2^k}\sum_{i=0}^{2^k-1}U_i^{\otimes (n+m)}\rho\,(U_i^\dagger)^{\otimes(n+m)} \label{eq:channel}
\end{equation}
where $k=\lfloor\log(n+m)\rfloor+1$ and $U_i=Z^{i/2^k}$. The fact that this channel maps $\rho$ to a state $\sigma$ that is block-diagonal in the number-of-excitations basis can easily be seen as follows. First, observe that for any $j\in\{0,\dots, 2^{(n+m)}-1\}$, the computational basis state $|j\rangle$  is an eigenvector of $U_i$, associated to an eigenvalue, $(-1)^{|j| i/{2^k}}$, that essentially depends on $|j|$, the number of excitations of $|j\rangle$. Therefore, we have
\begin{equation}
    \sigma = \frac{1}{2^k}\sum_{j,j^\prime=0}^{2^{(m+n)}-1}\,\,{ \sum_{i=0}^{2^k-1} (-1)^{\frac{i(|j|-|j^\prime|)}{2^k}} \rho_{j,j^\prime} |j\rangle \langle j^\prime| }.
    \label{eq:channelC}
\end{equation}
For any $j$ and $j^\prime$ having different number of excitations, i.e. such that $|j| \neq |j^\prime|$, the sum over $i$ in Eq.~\eqref{eq:channelC} vanishes, explaining why $\sigma$ is diagonal in the number-of-excitations basis.

As can be seen from the argument above, the non-zero elements of $\sigma$ are all equal to the corresponding elements of $\rho$. This implies that the channel can effectively be replaced by a sum of projectors onto all the number-of-excitations sectors. From an experimental point of view, the practical implementation of this channel can thus be circumvented by using the observables of Lemma~\ref{lem.Li} in the post-processing of the classical shadows. 

\section{Applications}
\label{sec.Applications}
In this section, we apply and compare the entanglement conditions presented in the previous sections on various physical systems. For the systems possessing a symmetry as discussed in Sec.~\ref{sec.sr}, we highlight some of the advantages that can result from considering symmetry-resolved entanglement detection tools.

\subsection{Entanglement detection in quench dynamics}\label{sec.sim}

We begin by considering the situation of quench dynamics, where entanglement emerges from the dynamics of a many-body Hamiltonian. We consider the model presented in Ref.~\cite{Vi21}, where the interplay between coherent dynamics with $U(1)$ symmetry and dissipation leads to a dynamics of `purification'. Here, we will use the same formalism to show how entanglement is generated at short times, and can be detected via the symmetry-resolved versions of the  $D_2$ and $p_3$-PPT conditions. In the next subsection, we will consider an analogous experimental situation obtained with trapped ions~\cite{Brydges2019}.

Our model is described by a master equation
\begin{equation}\label{eq:me}
    \partial_t \rho = -\frac{i}{\hbar} [H_{XY}, \rho] + \sum_j \gamma \left[\sigma^-_j\rho \sigma^+_j -\frac{1}{2}\{\sigma_j^+\sigma_j^-, \rho\}\right ], 
\end{equation}
with the lowering and raising operators \mbox{$\sigma_j^- = (X_j-i Y_j)/2$}, $\sigma_j^+ = (X_j+i Y_j)/2$, and the Hamiltonian 
\begin{equation}
  H_{XY} = \frac{\hbar}{2}\sum_{i<j}J_{ij} (X_iX_j+Y_iY_j )
  \label{eq:Hxy}
\end{equation}
and where $\gamma$ is the spontaneous emission rate. Here, we consider open boundary conditions. The hopping  between spins $i$ and $j$ is described by the coefficient $J_{ij}$ and
we consider this subsection  nearest-neighbor hopping $J_{ij}=J\delta_{i+1,j}$ with  strength $J$. The initial state is the N\'eel state $\rho(0)=\ket{\psi(0)}\bra{\psi(0)}$, with $\ket{\psi(0)}=\ket{\!\downarrow \uparrow}^{\otimes N/2}$.
As shown in Ref.~\cite{Vi21}, the time evolved state $\rho(t)$ of the $N$ spin system has the block diagonal form of Eq.~\eqref{eq:decom}. Moreover, the partially transposed matrix w.r.t a partition $A$, $\rho^\Gamma$ is also block diagonal with blocks $\rho^\Gamma_{(q)}$. Here, the index $q$ represents the difference between the number of spin excitations in $A$ and the one in the complement partition $B$ (see also Sec.~\ref{sec.sr}).
 
 As we are interested in short time dynamics, we can solve Eq.~\eqref{eq:me} in first order in perturbation theory, which is valid for $t\ll 1/J,1/\gamma$. Considering for concreteness a half-partition $A$, made of the first $N_A$ sites, we obtain a block with a negative eigenvalue~\cite{Vi21}. Assuming for simplicity $N_A=N/2$, $N_A$ even, we obtain
\begin{align}
\rho^\Gamma_{(-1)}(t)  = & \gamma t \sum_{m=1}^{N_A/2} \sigma^-_{2m} \rho(0) \sigma_{2m}^+\\&
+Jt \left(-i   \sigma^+_{N_A+1} \rho(0) \sigma_{N_A}^+ +\mathrm{h.c}\right) .
\end{align}
The presence of a negative eigenvalue in this sector can be detected from the value of the moments
\begin{eqnarray}
p_1(\rho^\Gamma_{(-1)}(t)) &=& \frac{\gamma N_A t}{2}, \\ 
p_2(\rho^\Gamma_{(-1)}(t)) &=&  2J^2t^2, \\
p_3(\rho^\Gamma_{(-1)}(t)) &=& 3\gamma J^2t^3,
\end{eqnarray}
in leading order in $J\gg \gamma N_A$. In particular, the $p_3$-PPT ratio
\begin{eqnarray}
\frac{p_3(\rho^\Gamma_{(-1)}(t))p_1(\rho^\Gamma_{(-1)}(t))}{p_2(\rho^\Gamma_{(-1)}(t))^2} = \frac{3 \gamma^2 N_A }{8J^2} \ll 1,
\label{eq:p3PPTperturb}
\end{eqnarray}
and the $D_2$ condition
\begin{eqnarray}
\frac{p_1(\rho^\Gamma_{(-1)}(t))^2}{p_2(\rho^\Gamma_{(-1)}(t))} = \frac{\gamma^2 N_A^2}{8J^2} \ll 1,
\label{eq:D2perturb}
\end{eqnarray}
can be used to reveal the presence of entanglement at short times. We show in Fig.~\ref{fig:perturb} a numerical confirmation of these results for various values of $\gamma/J$ and $N=8$, which was obtained by simulating Eq.~\eqref{eq:me}. We note that, in the present context, utilizing symmetry-resolution is fundamental to detect entanglement: this is due to the fact that the negative eigenvalues in $\rho^\Gamma$ appear in sectors that are not macroscopically populated~\cite{Vi21}, so that moments without symmetry resolution would not be able to detect them.

\begin{figure}
    \includegraphics[width=\linewidth]{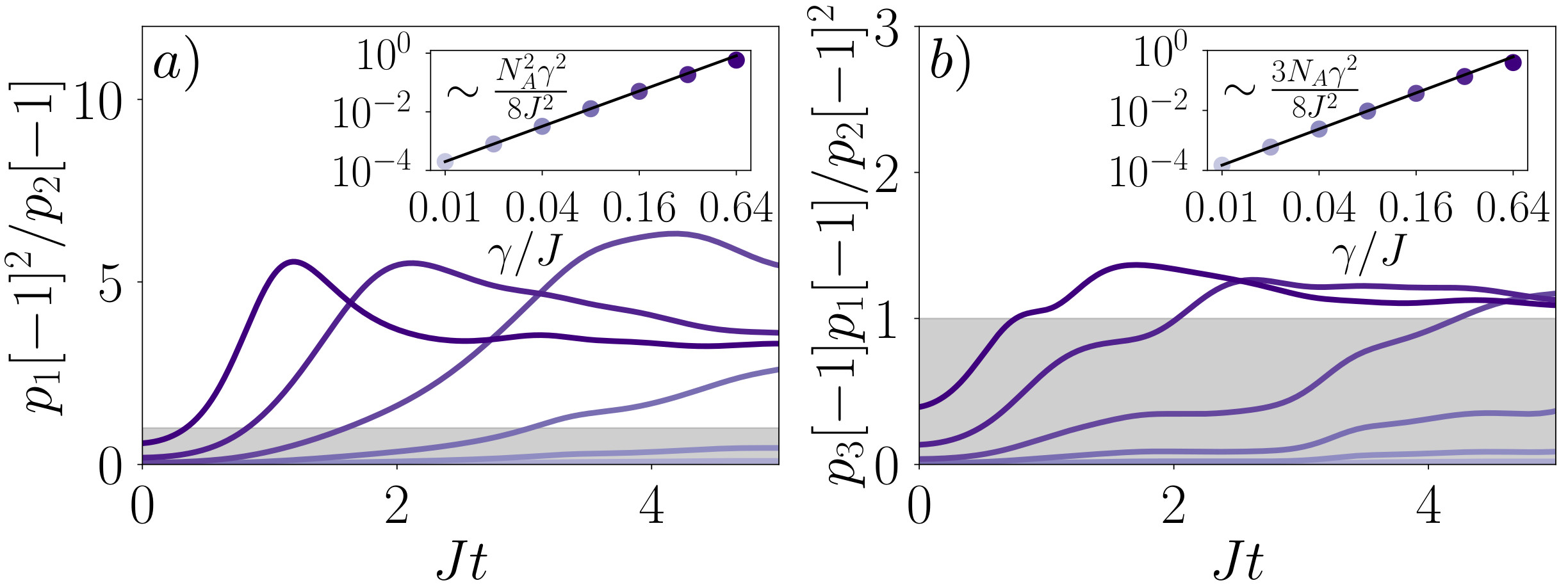}
    \caption{\textit{Symmetry resolved entanglement detection in quench dynamics with spin excitation loss.} We study SR-entanglement in quench dynamics in a system consisting of $N=8$ spins initialized in a N\'{e}el state $\ket{\downarrow\uparrow}^{\otimes N/2}$ and evolved with $H_{XX}$ subject to spin excitation loss with various rates $\gamma$ ($\gamma/J$ increases with the darkness of the color, see insets). We take $A=[1,2,3,4]$ and $B=[5,6,7,8]$. In panels a) and b), the $D_2$ ratio and $p_3$-PPT ratio of sector $ q=-1$ are shown, respectively. Entanglement is detected for values below unity in the shaded gray areas. The insets in a) and b) show the early time value at $t=0^+$ of the $D_2$-ratio a) and $p_3$-PPT ratio b), respectively, as function of the decoherence rate $\gamma/J$. Black lines are the perturbation theory results displayed in Eqs.~\eqref{eq:D2perturb}  and \eqref{eq:p3PPTperturb}.  }
    \label{fig:perturb}
\end{figure}

\subsection{Experimental demonstration in a trapped-ion quantum simulator}

In the previous section, we showed in an idealized theoretical setting that entanglement is generated --and can be revealed via SR-entanglement conditions-- at early times after a quantum quench. Here, we demonstrate this effect experimentally via the measurement  of the SR-$D_2$ and SR-$p_3$-ppT condition  using randomized measurement data taken at early times after quantum quench in a trapped ion quantum simulator (c.f. Ref.~\cite{Brydges2019}).  In particular, we show that the SR-$D_2$ condition and SR-$p_3$-PPT condition allow for a  fine-grained detection of bipartite entanglement, in regimes where the  corresponding global conditions \cite{EKH20} and conditions relying on the purities of different subsystems \cite{Brydges2019} are not conclusive.

In the experiment reported in Ref.~\cite{Brydges2019}, a one-dimensional spin-$1/2$-chain, consisting of $N=10$ spins, was initialized in  the N\'eel state $\ket{\!\!\uparrow\downarrow}^{\otimes5}$ and time-evolved with the Hamiltonian $H_{XY}$ [Eq.~\eqref{eq:Hxy}]
where the coupling parameter $J_{ij}$ follows the approximate power-law decay $J_{ij} \approx J_0/|i-j|^\alpha$, with $\alpha \approx 1.24$, $J_0 = 420\textrm{s}^{-1}$. 
The Hamiltonian evolution exhibits a global $U(1)$-symmetry conserving the total magnetization of the system (i.e., $[H, \sum_i Z_i]=0$).  Symmetry-breaking terms (such as $\sigma_i^+\sigma_j^++\text{h.c.}$) are strongly suppressed due to a large effective magnetic field \cite{Brydges2019}. As detailed in Refs.~\cite{Brydges2019,Vi21} weak decoherence effects are present in the experiment,  including imperfect initial state preparation, local spin-flips and spontaneous emission during the dynamics, and measurement errors model as local depolarization. 
Note that coherent spin-flips do not preserve the global magnetization and block-diagonal form of the (reduced) density matrix. On the timescales accessed in the experiment, these effects are however very weak (causing in numerical simulations including the above decoherence model a purity mismatch of the order of $10^{-5}$ of the full $10$-spin  density matrix $\rho$ vs.~the projected one $\rho_Q=\sum_q Q_q \rho Q_q$ at $t=5$ms).

In Ref.~\cite{Brydges2019} randomized measurements were performed at various times ($t=0\text{ms},\dots,5\text{ms}$) after the quantum quench. As described in detail in Ref.~\cite{Vi21} (see also Sec.~\ref{sec:class_shadows} and Appendix~\ref{App.shadows}), we can use this data  to estimate SR-PT moments and the SR entanglement conditions via classical shadow formalism \cite{huang2020shadow}. In Fig.~\ref{fig:exp}, we present the SR  $D_2$  and $p_3$-PPT conditions in the different sectors, for a subsystem consisting of the neighbouring spins $A,B=[4,5],[6,7]$ and where the partial transpose is taken in the subsystem $A=[4,5]$. Similar to the results of the previous subsection, both conditions detect entanglement at short times in the sector $q=-1$.  The corresponding global conditions, in particular the global $p_3$-PPT condition,  do not reveal the presence of entanglement in this regime [see Fig.~\ref{fig:exp} b)]. 

The fact that the SR-$D_2$ condition can reveal the presence of entanglement is particularly interesting from an experimental point of view as it implies that entanglement can be detected from the estimation of only two moments of the partial transpose (in a sector). For the shadow estimation of $D_2(-1)$, our rigorous bound from Theorem~\ref{thm:r1} ensures that $~\sim 1.3 \times 10^6$ measurements would be sufficient to guarantee entanglement detection with a probability of $95\%$. While this represents an upper bound, valid irrespective of the quantum state in question, for the specific states in the experiment only $8 \times 10^5$ have been performed. 
The errorbars of the experimental are then drawn at $1.96\sigma$ where the standard error of the mean $\sigma$ has been estimated for each data point using Jackknife resampling \footnote{For normally distributed data with empirical mean $\mu$, $\mu \pm 1.96\sigma$  defines a $95\%$ confidence interval. While normal distribution is here not guaranteed a priori, we checked through additional numerical simulations of many experiments (with fixed number of runs per experiment) that errorbars of length $1.96\sigma$ indeed approximate a confidence interval with confidence level $ 95\%$.}.

While the SR-$D_2$ condition requires only the estimation of first and second PT-moment, the third order SR-$p_3$-PPT condition [panel b)], allows to detect entanglement in an even wider time window. In comparison to the global $p_3$-PPT condition [red curve in panel b)], this clearly  demonstrates the benefit of taking symmetry-resolution into account.

\begin{figure}
    \centering
    \includegraphics[width=\linewidth]{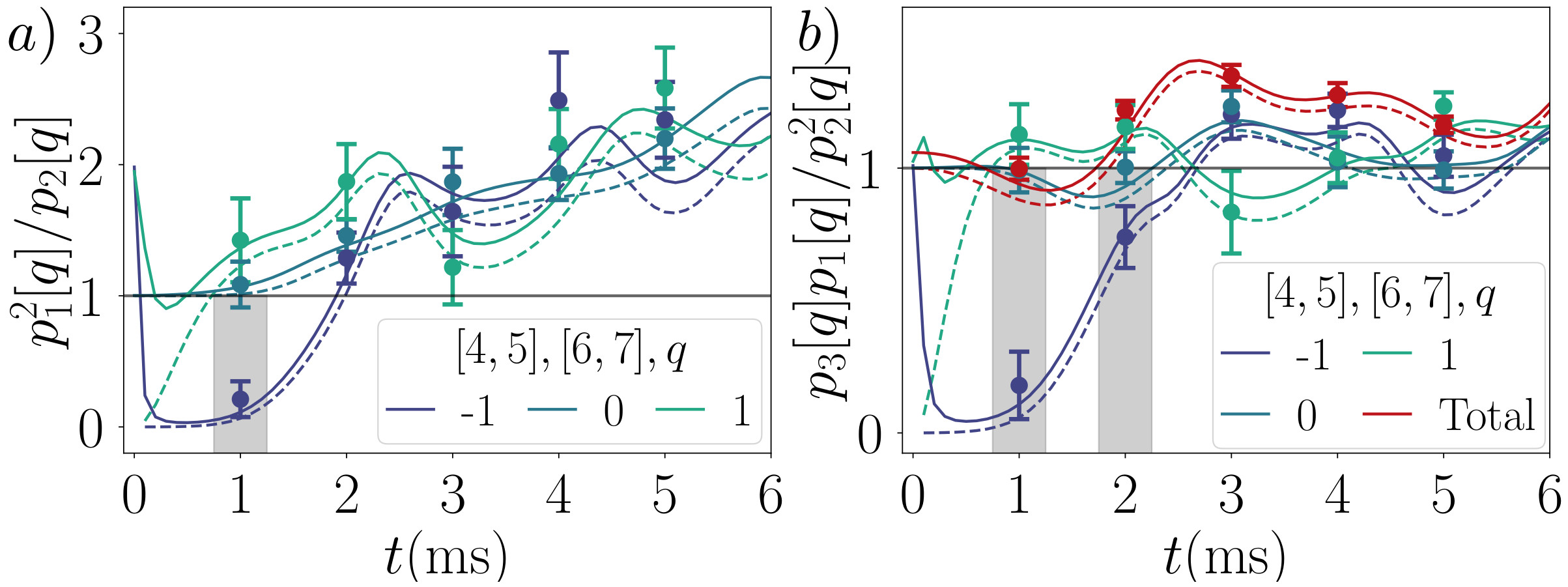}
    \caption{\textit{Experimental SR-entanglement detection in a trapped ion quantum simulator using data obtained in Ref.~\cite{Brydges2019}.} For a total system of $N=10$ spins and  subsystem $A, B=[4,5], [6,7]$,   we present in a) the SR-$D_2$ ratio and b) SR-$p_3$-PPT ratio as a function of time for various symmetry sectors . In both panels, entanglement is detected in regimes where the corresponding global conditions do not reveal entanglement, as indicated in the shaded grey areas (values below unity). Dashed (solid) lines are theoretical simulations of unitary dynamics (taking  decoherence effects into account), as detailed in Refs.~\cite{Brydges2019,Vi21}.}
    \label{fig:exp}
\end{figure}

\subsection{Entanglement detection in the ground state of the XXZ model}
The XXZ spin chain is a generalization of the Heisenberg chain including an anisotropy in the interaction along the $z$ direction, whose Hamiltonian reads:
\begin{equation}\label{eq:XXZ}
\begin{aligned}
    H=&-J\left( \sum_i X_i X_{i+1} + \sum_i Y_i Y_{i+1}+J_z \sum_i Z_i Z_{i+1}\right).
\end{aligned}
\end{equation}
We will fix $J=1$ as energy unit:  $J_z$ sets the strength of the anisotropy along the $z$-axis.
The phase diagram at zero temperature is known~\cite{Gogolin2004}: the system hosts an antiferromagnetic phase when $J_z<-1$, a Luttinger liquid for $J_z\in[-1,1]$, and a ferromagnetic one for $J_z>1$.
We might expect that the entanglement conditions we described in the previous sections will detect that the state is not PPT in the range $J_z\in]-\infty,-1]$. Since the XXZ spin chain exhibits a $U(1)$ symmetry related to magnetization conservation, we can exploit the symmetry-resolved counterpart of the $D_k$ conditions, the $p_3$-PPT and their optimized version $D_3^{\textrm{opt}}$.
\begin{figure}
\begin{minipage}{0.48\linewidth}
    \centering
    \includegraphics[width=\linewidth]{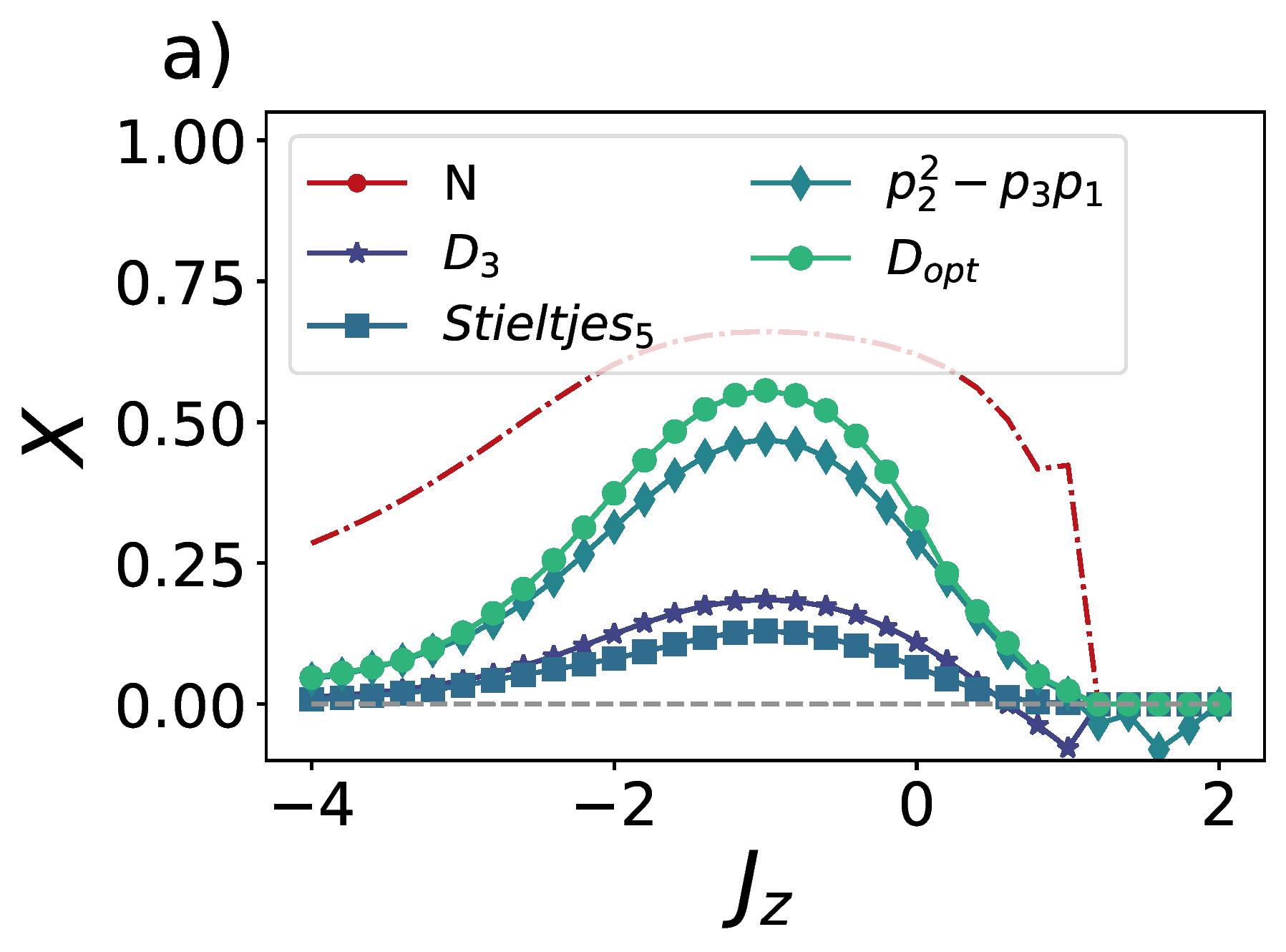}
    \end{minipage}
\begin{minipage}{0.48\linewidth}
    \centering
    \includegraphics[width=\linewidth]{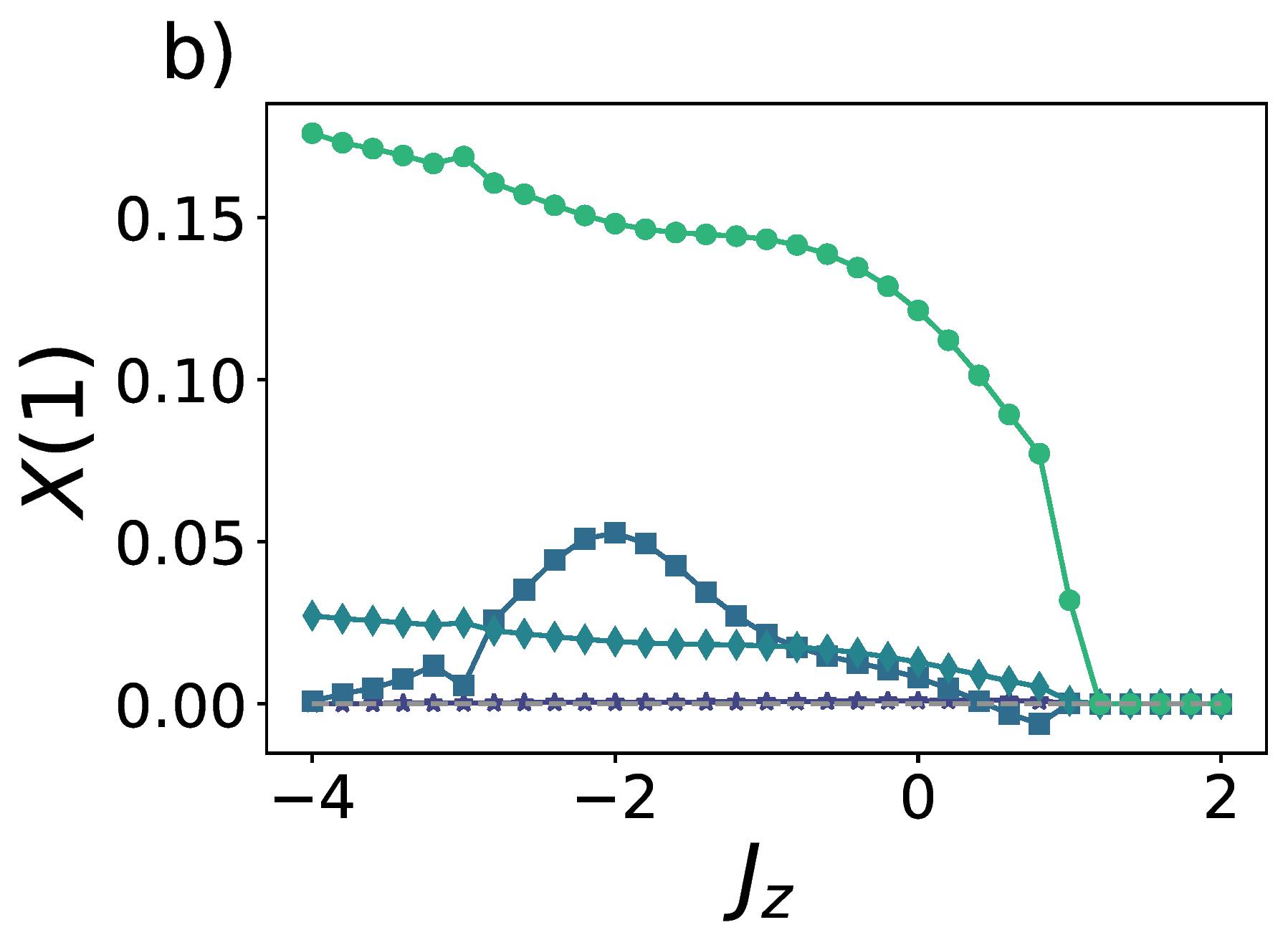}
    \end{minipage}
    \begin{minipage}{0.48\linewidth}
    \centering
    \includegraphics[width=\linewidth]{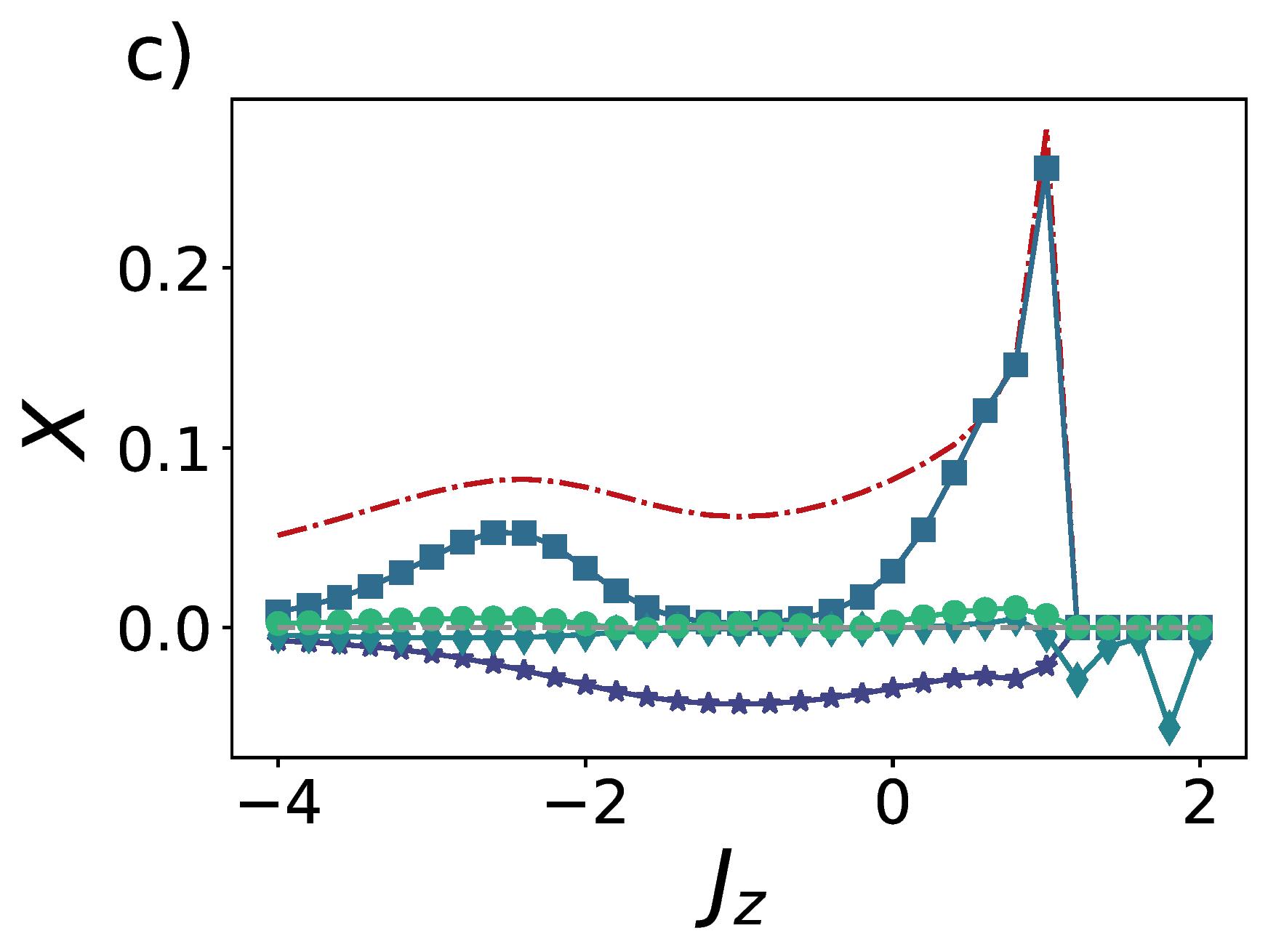}
    \end{minipage}
\begin{minipage}{0.48\linewidth}
    \centering
    \includegraphics[width=\linewidth]{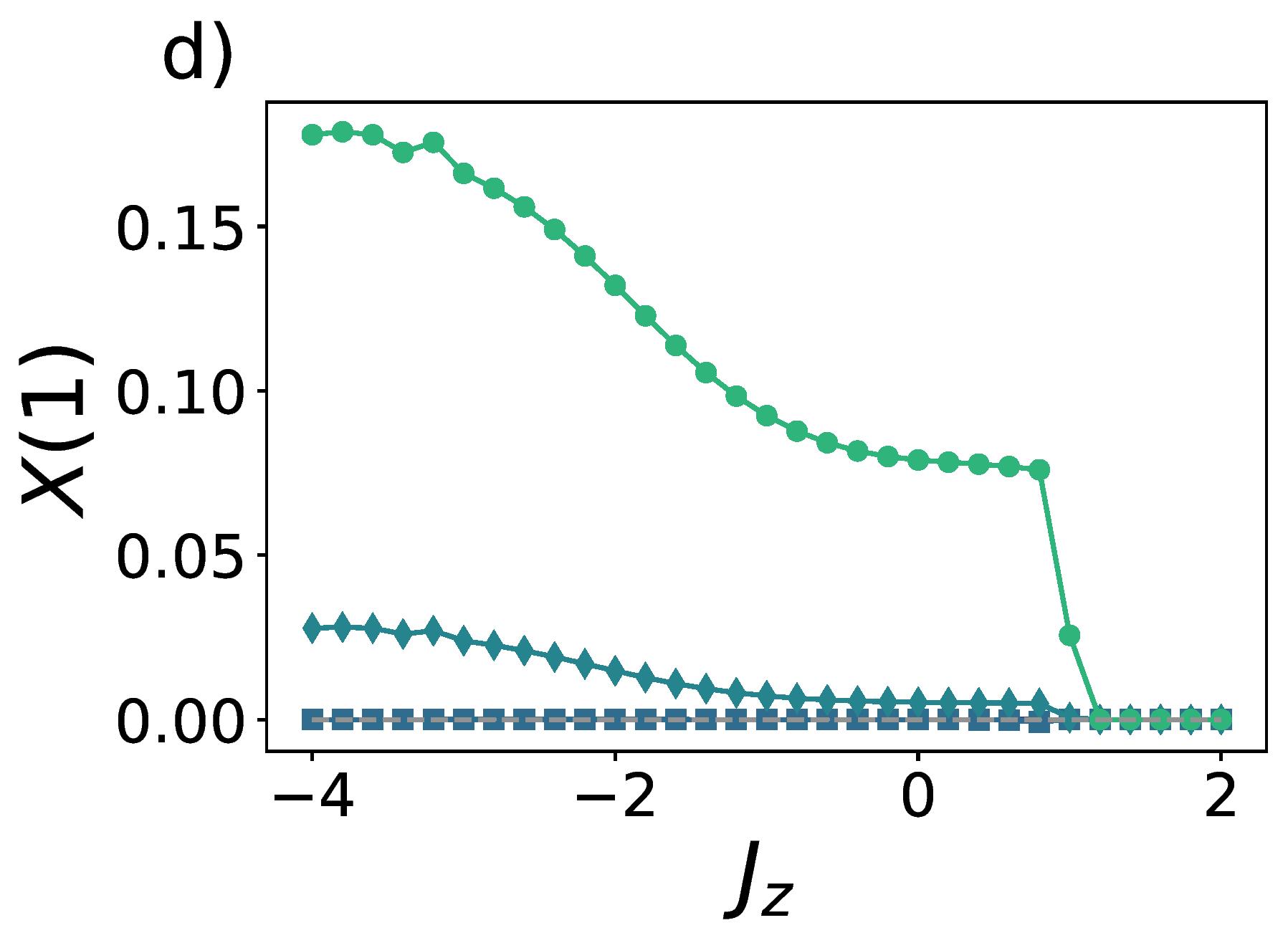}
    \end{minipage}
    \caption{Entanglement conditions on the ground state of XXZ model. With $X\in[D_3,D_{\textrm{opt}},Stieltjes_5,p_2-p_2^2,N]$ we denote the conditions computed on $\rho_A$ and with $X(1)$ the ones on $\rho_A(q=1)$ . Chain length $L=14$, subsystem length $\ell=10$. We consider in a)-b) a connected subsystem $A$ of length $\ell$ at the center of the chain; in c)-d) a disjoint interval $A$ consisting of $\ell/2$ sites at the beginning and $\ell/2$ sites at the end of the chain. We set the sign of the inequalities such that a positive value indicates the violation of a PPT condition, and thus the presence of entanglement. To compare data of different magnitude we multiply the $Stieltjes_5$ condition by $10^2$ in a), $10^5$ in b), $10^4$ in c), $10^5$ in d).}
    \label{fig:XXZ_sim}
\end{figure}

The simulation results are shown in Fig.~\ref{fig:XXZ_sim}. We consider the ground state of an open chain of $L=14$ sites. In a) and b), we select $\ell=10$ sites in the middle as subsystem $A$  and divide it in two parts $A=A_1\cup A_2$. We use the negativity as a reference to benchmark the efficiency of some entanglement conditions to detect entanglement between $A_1$ and $A_2$. 

In Fig.~\ref{fig:XXZ_sim}a), we calculate the $p_3$-PPT, the $D_3$, the optimal $D_3^{\textrm{opt}}$ condition and the Stieltjes condition using moments up to order five (see Appendix~\ref{App.Sti}). The convention we choose in the plot is that entanglement is detected whenever the value is positive. All the conditions work in most of the interval $J_z \in[-4,1]$, where we expect entanglement to be sizeable, except for $D_3$ failing in the vicinity of $J_z=1$. The presence of entanglement is confirmed by the calculation of the negativity (red line).

In Fig.~\ref{fig:XXZ_sim}b) we focus on the $q=1$ sector. In this case, we observe that all conditions indicate the presence of at least a negative eigenvalue in the sector $\rho^{\Gamma}_A(q=1)$ - that is, they are informative about which sector for the reduced density matrix contributes to violating PPT. In this specific instance, SR is however not fundamental in detecting entanglement beyond what non-SR conditions can. 

In Fig.~\ref{fig:XXZ_sim}c) and d), we carry out the same analysis for disconnected partitions. We consider $A=A_1\cup A_2$, where $A_1$ consists of the first $l/2$ sites and $A_2$ of the last  $l/2$, and $L=14, l=10$. In Fig.~\ref{fig:XXZ_sim}c) for $J_z \sim -1.9$ all the quantities except $Stieltjes_5$ are below zero, thus not revealing entanglement even though the negativity is positive. In this plot, one can also see that, for $J_z<-2$, the optimized condition $D_3^{\textrm{opt}}$ detects entanglement whereas both $p_3$-PPT and $D_3$ fail. This illustrates that the slight improvement obtained from the optimization in Sec.~\ref{sec.optimization} (see Fig.~\ref{fig.comparing}) can be decisive to detect the entanglement of physically relevant states from the first three moments only.

\subsection{Entanglement detection under constrained dynamics}
As a third example, we study the detection of mixed-state entanglement in subsystems of constrained spin models after a global quantum quench. Such models have been realized experimentally with neutral atoms in optical tweezer arrays coupled to Rydberg states~\cite{bernien51, ebadi2020quantum}. Below we simulate an experiment, in which moments of the partially transposed density matrix are obtained from a classical shadow involving global random unitaries available in current experimental setups. In particular we demonstrate that periodic revivals of mixed state entanglement can be detected from the conditions $D_3$ and $D_4$ (Eqs.~\eqref{eq:D3} and \eqref{eq:D4}) requiring only a small number of experimental runs.

We consider the Fibonacci chain with open boundary conditions described by the Hamiltonian 
\begin{align} \label{eq:pxp}
	H = \Omega \sum_i \mathcal{P}_{i-1} X_i \mathcal{P}_{i + 1}\,,
\end{align}
where $\mathcal{P}_i = \ket{0}_i\bra{0}$ are local projectors. As can be seen from~\eqref{eq:pxp}, each spin undergoes independent Rabi-oscillations as long as the neighbouring spins are in their ground state $\ket{0}$. This constraint breaks the tensor product structure of the Hilbert space (as it the case in a lattice gauge theory~\cite{Surace2019}). The model effectively resembles the experimental situation in~\cite{bernien51} if the Rydberg atoms are driven close to resonance and neighbouring atoms cannot be simultaneously in the state $\ket{1}$ due to the Rydberg blockade mechanism. The Hamiltonian~\eqref{eq:pxp} has recently attracted great interest in the context of quantum many-body scarring~\cite{turner2018, PhysRevLett.122.040603}. In particular, performing a quantum quench on special unentangled product states results in long-lived periodic revivals which have been attributed to the existence of quantum scarred eigenstates in the many-body spectrum~\cite{turner2018}. 

\begin{figure}
  \centering
  \includegraphics[width=\columnwidth]{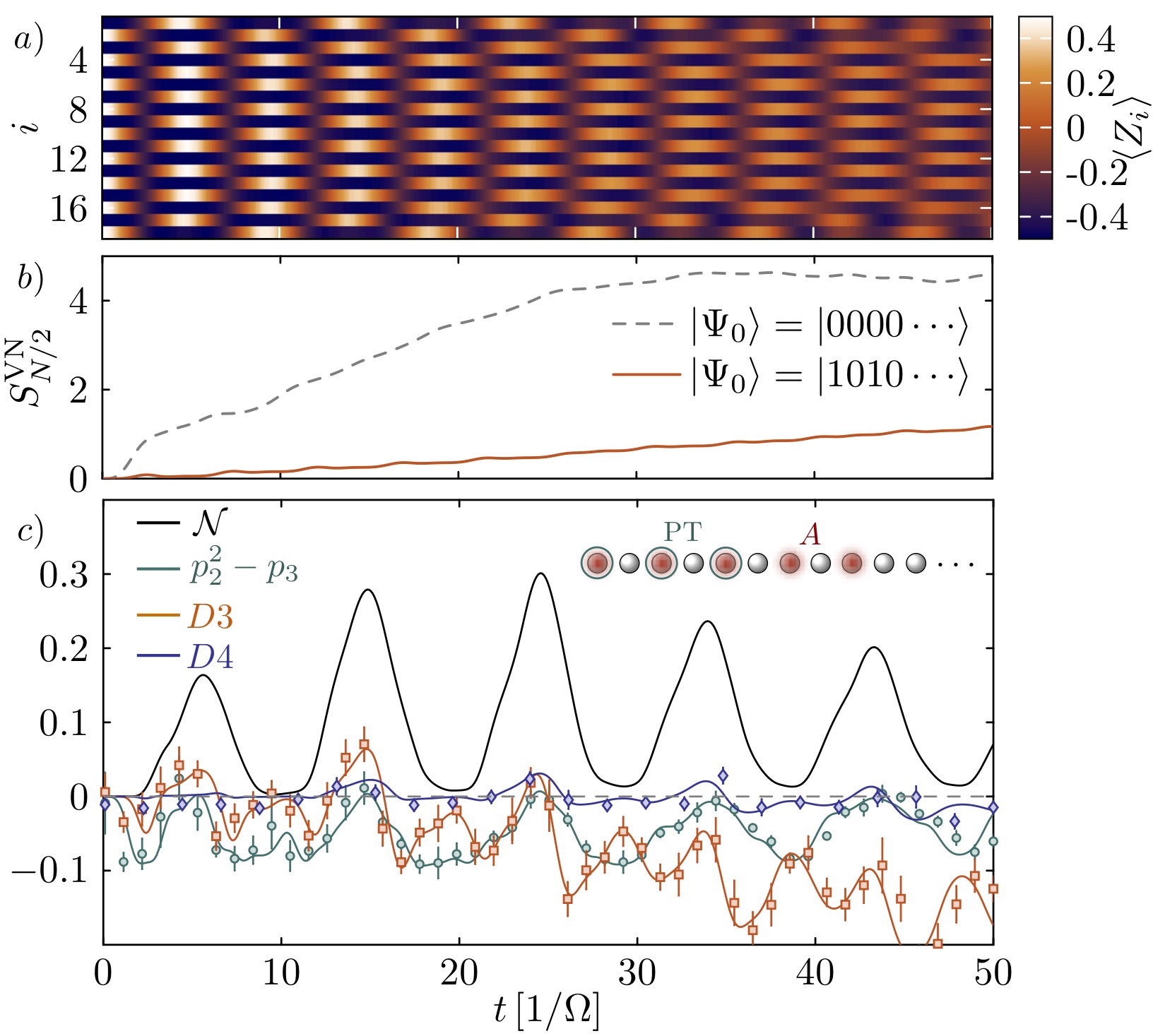}
  \caption{\textit{Entanglement detection in quench dynamics of a kinetically constrained Rydberg chain (\ref{eq:pxp}).}   a) Coherent oscillations of the $Z_i$-expectation values in a quench with a 18-site Fibonacci chain from a staggered initial state: $\ket{\Psi_0} = \ket{1010\dots}$. b) Von Neumann entropy of half partition of the chain as a function of time if a global quench is performed on 2 different initial states. c) Entanglement detection in a subsystem of the chain as indicated in the inset (positive values indicate entanglement). The revivals in the negativity indicate periodic entangling and disentangling of spins within the subsystem. The goblal $p_3-$PPT condition is unable to detect entanglement in the entire time window in contrast to the $D_3$ and $D_4$ conditions. The points are obtained from a classical shadow consisting of 5000 global random unitaries. Error bars are obtained by repeating the procedure 20 times and estimating the standard error.}
  \label{fig:rydberg}
\end{figure}

In the following we study the conditions given in Eqs.~\eqref{eq:D3} and~\eqref{eq:D4}, when a quench is performed from a product state that leads to kinetically constrained dynamics. To this end, the initial state $\ket{\Psi_0} = \ket{10}^{\otimes N/2}$ is time evolved with the Hamiltonian (\ref{eq:pxp}) up to $t = 50 / \Omega$. Fig.~\ref{fig:rydberg} a) shows the local $Z_i$-expectation values exhibiting long-lived persistent oscillations. This striking departure from a thermalizing behaviour is also reflected in the slow growth of entanglement entropy (Panel b). We now analyse the time resolved behaviour of mixed state entanglement for a subsystem depicted in the inset of Fig.~\ref{fig:rydberg} c). The revivals in the negativity indicate that spins in the subsystem get periodically entangled and disentangled with each other. Interestingly, the $p_3$-PPT condition is unable to detect the revivals, while $D_3$ yields positive values at the first 3 peaks in the negativity. At later times, the $D_3$ fails to detect the entanglement present in the system, but this entanglement is still captured by $D_4$.

Finally, we investigate the required number of experimental runs in order to measure the conditions up to given error bar. The classical shadow is constructed by sampling bit strings from the quantum state after applying a global random unitary on the subsystem $A$. At each point in Fig.~\ref{fig:rydberg} c), we collect 5000 bit strings in different random basis. Note that global random unitaries in Rydberg systems can be implemented via random quenches with local disorder potentials \cite{PhysRevA.97.023604}. The entire estimation of the conditions is repeated 20 times in order to obtain statistical uncertainties. Note that statistical covariances among the measured moments ${\rm tr}(\rho^{\Gamma})^n$ can give rise to nonuniform sizes for the error bars as can bee seen in~Fig.~\ref{fig:rydberg} c). In Fig.~\ref{fig:rydberg} c) we depict the $2 \sigma$ error bars, showing that entanglement can be detected with a moderate experimental effort.

\section{Conclusions and outlook}
\label{sec.conclusion}

The study of entanglement has a long and prominent history in a variety of disciplines. And with the advent of serious quantum technologies, reliable entanglement generation is more important than ever.
This work provides a novel and principled approach to reliably detect bipartite entanglement between subsystem $A$ and subsystem $B$.
We have presented a set of inequality conditions $D_k$ ($1 \leq k \leq 2^{|AB|}$).
Each $D_k$ is an inequality that compares the first $k$ moments of the partially transposed density operator. Violation of a single inequality implies that the underlying density operator cannot have a PSD partial transpose. This in turn implies that the state must be entangled. 
Conversely, if the underlying state is not PSD, then, there must exist at least one $D_k$ that is violated. 
This motivates a sequence of one-sided entanglement tests. Start with $D_3$ -- the lowest non-trivial condition -- and check whether it is violated. If this is the case, we are done. If not, we move on to the next higher condition ($D_4$) and repeat until we find a violation.
For states having an extensive conserved quantity (such as total magnetization, in the case of spin systems), both the density matrix and its partial transpose have a block-diagonal structure~\cite{CGS18}. In this case, it is advisable to apply these conditions directly to individual symmetry sectors of the partial transpose. 
The resulting sequence of symmetry-resolved conditions is stronger in the sense that lower moments (of blocks of the partial transpose) suffice to detect entanglement.
Importantly, this approach is not only conceptually sound, but also tractable from an experimental perspective. The classical shadows formalism \cite{huang2020shadow} allows for reliably estimating moments of the partial transpose from randomized single-qubit measurements. 
We demonstrated how to include the experimentally relevant situation of non identical (however independent) copies in the analysis and derived error bounds and confidence intervals for $D_2$, with a natural extension to quantities involving higher moments. Empirical evaluations complement our theoretical findings. Applications to several theoretical models, as well as experimental data, demonstrate both tractability and viability of our approach.

We are confident that this work opens up several interesting future research directions.
Firstly, the sequence of $D_k$'s is designed to detect bipartite entanglement in a reliable and experimentally accessible fashion. A natural next step is to try to extend similar ideas to multipartite entanglement detection, e.g. using non--linear entanglement witnesses ~\cite{GuehneLuetkenhaus2006,Jungnitsch2011}.
Secondly, the complete sequence of $D_k$'s is used to answer a binary question: is the partial transpose negative or not? Entanglement measures, like the negativity, address entanglement in a quantitative fashion, but are also harder to estimate. Is it possible to use moments (or other density matrix functionals) to define entanglement measures that are experimentally tractable?
The statistical analysis of the estimation procedure is also far from complete. We have shown that
independence between the states that are produced in each iteration of an experiment is enough to derive statistically sound confidence intervals for estimating matrix moments with classical shadows. This addresses the practically relevant case of drifting sources, but further extensions to correlated states would also be appealing. In this context the quantum de Finetti theorem seems highly relevant.
In future work, we will also investigate how importance sampling~\cite{hadfield2020measurements,Rath2021} and/or derandomization~\cite{huang2020shadow,huang2021derandomization} can further improve moment estimation based on classical shadows.
Finally, another promising direction of research would be 
to try to detect and characterize phase transitions in quantum mechanical Hamiltonians at finite (non-zero) temperature. Quantum phase transitions at zero temperature 
originate from quantum fluctuations, whereas quantum phase transitions at finite temperature are due to thermal fluctuations. Following Ref.~\cite{Lu20}, quantum phase transitions at finite temperature can be studied using the negativity. It would be interesting to investigate whether low-order PT moments, intimately related to the negativity, can also be used to this end.

\paragraph*{Note:} While completing the writing of the present work, we became aware of a work by Yu \textit{et al.}~\cite{Yu}, in which similar questions have been addressed.

\section*{Acknowledgments}

We would like to thank Ion Nechita for pointing out the Stieltjes moment problem to us. J. C., B. K. and A. N. acknowledge financial support from the Austrian Science Fund (FWF) stand alone project: P32273-N27, the FWF: FG-5
and the SFB BeyondC. BV acknowledges funding from the Austrian Science Fundation (FWF, P 32597 N), and the French National Research Agency (ANR-20-CE47-0005, JCJC project QRand).  The work of MD and VV is partly supported by the ERC under grant number 758329 (AGEnTh), by the MIUR Programme FARE (MEPH), and has received funding from the European Union's Horizon 2020 research and innovation programme under grant agreement No 817482 (Pasquans). P.C.~acknowledges support from ERC under Consolidator grant number 771536 (NEMO). C.K., A.E.~and P.Z.~acknowledge support by 
European Union's Horizon 2020 research and innovation programme under Grant Agreement No. 817482 (Pasquans) and Simons Collaboration on Ultra-Quantum Matter, which is a grant from the Simons Foundation (651440, P.Z.).

\appendix
\section{Appendix 1. Descartes' rule of signs}\label{App.Des}
Let $A$ be a Hermitian matrix of dimension $d$. Its eigenvalues $\lambda_1,\dots,\lambda_d$ are the roots of the characteristic polynomial
\[
    P(t) = \det \left( A - t \, \mathds{1} \right) = \prod_{i=1}^d (\lambda_i-t).
\]
For convenience, let us now consider the polynomial $P(-t)$, which effectively replaces the positive eigenvalues of $A$ by negative ones and vice versa. The coefficients of this polynomial can be expressed using the elementary symmetric polynomials in its roots, $e_i(\lambda_1,\dots,\lambda_d)$, defined as
\[
    e_i(\lambda_1,\dots,\lambda_d) = \sum_{1\leq j_1 < \cdots < j_i \leq d} \lambda_{j_1} \dots \lambda_{j_i},
\]
for $i=1,\dots,d$ and with $e_0(\lambda_1,\dots,\lambda_d)=1$. This yields 
\[
    P(-t) = \sum_{i=0}^d e_i(\lambda_1,\dots,\lambda_d) \, t^{d-i}.
\]
For a polynomial with real roots (as it is the case here), Descartes' rule of sign states that the number of positive roots is given by the number of sign changes between consecutive elements in the ordered list of its non-zero coefficients (see Ref.~\cite{Be16} and references therein). The matrix $A$ is PSD iff $P(-t)$ has only negative roots, which by Descartes' rule is the case iff there is no sign change in the ordered list of its non-zero coefficients, i.e., iff $e_i(\lambda_1,\dots,\lambda_d) \geq 0$ for all $i=1,\dots,d$, since $e_0(\lambda_1,\dots,\lambda_d) =1$.

\section{Appendix 2. Stieltjes moment problem}\label{App.Sti}
Given a sequence of moments, $(m_n)_{n=0}^d$, the (truncated) Stieltjes moment problem consists in finding necessary and sufficient conditions for the existence~\footnote{If such a measure exists, one may wonder whether it is unique or not. For our purposes, it will be enough to discuss only its existence.} of a measure $\mu$ on the half-line $[0,\infty)$ such that
\begin{equation}\label{eq:moments}
m_n=\int_0^\infty x^n{\rm d}\mu(x), \, \forall  n\in \{0,\dots,d\}.
\end{equation}
 Defining the matrices
\begin{equation}
A(n)=
\left( \begin{array}{ccccc}
m_0 & m_1 & m_2 & \cdots & m_n\\
m_1 & m_2 & m_3 & \cdots & m_{n+1}\\
m_2 & m_3 & m_4 & \cdots & m_{n+2}\\
\vdots & \vdots & \vdots & \ddots & \vdots\\
m_n & m_{n+1} & m_{n+2} & \cdots & m_{2n}
\end{array} \right)
\end{equation}
and
\begin{equation}
B(n)=
\left( \begin{array}{ccccc}
m_1 & m_2 & m_3 & \cdots & m_{n+1}\\
m_2 & m_3 & m_4 & \cdots & m_{n+2}\\
m_3 & m_4 & m_5 & \cdots & m_{n+3}\\
\vdots & \vdots & \vdots & \ddots & \vdots\\
m_{n+1} & m_{n+2} & m_{n+3} & \cdots & m_{2n+1}
\end{array} \right),
\end{equation}
a solution to this problem can be stated as follows~\cite{Stieltjes}. If $d$ is odd -- such that $d=2k+1$ for some integer $k$ -- there exists such a measure $\mu$ if and only if
\begin{equation}
    A(k)\geq0, \; B(k)\geq0 \textrm{ and } (m_k,\dots,m_{2k+1})^T \in \mathcal{R}[ A(k)], \label{eq:cond_odd}
\end{equation}
where, given a matrix $M$, the notation $M\geq0$ indicates that $M$ is PSD and $\mathcal{R}(M)$ denotes the range of $M$.
If $d$ is even -- such that $d=2k$ for some integer $k$ -- there exists such a measure $\mu$ if and only if 
\begin{multline}
    A(k)\geq0, \; B(k-1)\geq0 \textrm{ and } \\ (m_{k+1},\dots,m_{2k})^T \in \mathcal{R}[ B(k-1)]. \label{eq:cond_even}
\end{multline}

These solutions to the Stiltjes moment problem can be used to obtain entanglement conditions. 
Given $\lambda_1,\ldots,\lambda_r$ the eigenvalues of $\rho^\Ga$ for some density matrix $\rho$, let us define the (atomic) eigenvalue distribution function
\[
{\rm d}\mu(x)=\sum_{i=1}^r\delta(x-\lambda_i)\,,
\]
where $\delta$ is the Dirac delta distribution. If $\rho$ is PPT, this density function has support on $[0,\infty)$ and reproduces the moments of $\rho^\Ga$, as
\[
m_n=\int_0^\infty x^n{\rm d}\mu(x) = p_n(\rho^\Ga).
\]
Therefore, according to the solution of the Stieltjes moment problem mentioned above, the moments of any PPT state necessary satisfy either condition~\eqref{eq:cond_odd} or~\eqref{eq:cond_even}, depending on the value of $r$. The violation of any of these conditions for a set of PT moments thus reveals that the corresponding state must be entangled. The range condition may require the knowledge of all the moments to be checked, but the PSD conditions can be broken into sets of simpler conditions. And some of them only involve low order moments. Indeed, it is well known (see e.g.~\cite{Sylvester}) that a matrix is PSD if and only if all its principal minors are non-negative. For instance, looking at the principal minor at the intersection of the first two rows and columns of $B(k)$, one obtains the condition $m_1 m_3 - (m_2)^2 \geq0$. This condition is nothing but the $p_3$-PPT condition (which we know is useful to detect entanglement~\cite{EKH20}). Extending this principal minor to the third row and column, one gets another PPT condition: 
\begin{equation}
\det
\left( \begin{array}{ccc}
m_1 & m_2 & m_3 \\
m_2 & m_3 & m_4 \\
m_3 & m_4 & m_5 
\end{array} 
\right)
\geq 0.
\end{equation}
We call this condition $\textrm{Stieltjes}_5$.
We illustrate in Sec.~\ref{sec.Applications} (c.f. Fig~\ref{fig:XXZ_sim}) that this condition is also useful for entanglement.
Numerical computations suggest that this condition is a powerful tool to detect the entanglement of random mixed states, in the sense that it detects more random entangled states than either $p_3^{\textrm{opt}}$ or $D_5$. 
Not all Stieltjes moment conditions are this powerful, though.
For instance, the principal minor condition for the first two rows and columns of $A(k)$ is trivial.  

Note that, because we consider here an atomic density function, we have $m_0=r$. We could naturally renormalize the density function so that $m_0=1$, but it would imply a re-scaling of the first moment, i.e., the trace of $\rho^\Ga$, would be $1/r$. Since the partial transpose and the density function cannot be normalized at the same time, we chose to keep normalized partial transposes. 

\section{Appendix 3. Optimizing conditions involving moments up to degree three}\label{App.Lagrange}
Given a PSD matrix $A$, with non-zero eigenvalues $\lambda_1,\dots,\lambda_r$, for some $r\in [1,\dim A]$, consider the Lagrangian function
\begin{multline}
    L(\lambda_1,\dots,\lambda_r,C_1,C_2) = \\ \sum_{i=1}^r{\lambda_i^3} + C_1 \left( \sum_{i=1}^r{\lambda_i^2} - p_2 \right) + C_2 \left( \sum_{i=1}^r{\lambda_i} - p_1 \right),
    \label{Lagrangian}
\end{multline}
where $C_1$ and $C_2$ are Lagrange multipliers. 

Here we show that, for all $1\le r\le\dim A$, the stationary points $(\la_1,\ldots,\la_r)$ of the Lagrangian function~\eqref{Lagrangian} are such that the variables $\la_i$ can take at most two distinct values. These stationary points, for which the derivatives of the Lagrangian~\eqref{Lagrangian} with respect to each variable vanish, satisfy the set of equations
\begin{align}
    3 \la_i^2 + 2 C_1 \la_i + C_2 &= 0\,,\qquad i=1,\dots,r\label{eq:Lagrange1}\\
    \sum_{i=1}^r \la_i^2 &= p_2\,,\label{eq:Lagrange2}\\
    \sum_{i=1}^r \la_i &= p_1\,.\label{eq:Lagrange3}
\end{align}
We first sum up Eq.~\eqref{eq:Lagrange1} for all values of $i$ and then insert Eqs.\eqref{eq:Lagrange2} and~\eqref{eq:Lagrange3} into it. This yields
\[
    C_2 = \frac{-2 C_1 p_1 - 3 p_2}{r}.
\]
Inserting this relation into Eq.~\eqref{eq:Lagrange1} and considering this equation for two distinct values of $i$, say 1 and $k\neq 1$, one can eliminate the variable $C_1$ to get a relation between $\la_1$ and $\la_k$. After some algebra, one finds
\[
    \la_k = \la_1 \textrm{ or } \la_k = \frac{\la_1 p_1 - p_2}{\la_1 r -p_1}.
\]
Since this argument holds for any $k\neq 1$, it must hold that the eigenvalues $\la_i$ are either all equal or can only take two distinct values. In the first case, in which all the eigenvalues are equal, one obtains the isolated points $(p_2,p_3)=(1/r,1/r^2)$ in Fig.~\ref{fig.comparing} in the main text. In the second case, the rank $r$ PSD matrices corresponding to the stationary points of the Lagrangian~\eqref{Lagrangian} have a spectrum with $r_a$ degenerate eigenvalues $\lambda_a$ and $r-r_a$ eigenvalues $\la_b$. Assuming, without loss of generality, $\la_a>\la_b$, one can show that the minimal value of the third moment is obtained when $r_a=r-1$.  

\section{Appendix 4. Estimating block PT moments with classical shadows: rigorous confidence regions}\label{App.shadows}

Classical shadows are a convenient formalism to reason about predicting properties of a quantum system based on randomized measurements \cite{huang2020shadow,paini2019approximate}.
In Ref.~\cite{EKH20}, a classical shadow error analysis was carried out for the estimation of partial transpose moments (of order two and three) from randomized single-qubit measurements performed on a single copy of the experimental state at a time. In this extended appendix, we go one step further an present a detailed error analysis in case one does not assume that the states produced in the experiment are identical in each iteration of the experiment (as it would be the case for a ``drifting'' source). For completeness, we first recall the fundamental ideas of classical shadows. After this presentation, we focus on the estimation of quadratic observables and provide rigorous error bounds for the estimation of quadratic polynomials in the moments of the partial transpose of a projected density operator. 

\subsection{Classical shadows}

Suppose, we are interested in quantum systems comprised of $n$ qubits. 
Suppose furthermore, that we can perform certain unitary transformations $U \in \mathcal{E}$ (ensemble), as well as a measurement in the computational basis: $\left\{|b \rangle \! \langle b|:\; b \in  \left\{0,1\right\}^n \right\}$. 
It is instructive to analyze the quantum-to-classical channel that arises from first performing a randomly selected unitary transformation $\rho \mapsto U \rho U^\dagger$ followed by a computational basis measurement:
\begin{align}
\mathcal{M}_{\mathcal{E}}(\rho) =& \int_{\mathcal{E}} \sum_{b \in \left\{0,1\right\}^{n}} 
\mathrm{Pr} \big[ \hat{b}=b| U \rho U^\dagger \big] U^\dagger |b \rangle \! \langle b| U \mathrm{d}U \nonumber \\
=& \int_{\mathcal{E}} \mathrm{d}U \sum_{b \in \left\{0,1\right\}^n} \langle b| U \rho U^\dagger |b \rangle U^\dagger |b \rangle \! \langle b| U. \label{eq:M}
\end{align}
The integral over $\mathcal{E}$ is an average over all possible classically randomized measurement settings. The summation over $b$ averages over quantum randomness associated with measurement outcomes (Born's rule). It is easy to check that $\mathcal{M}_{\mathcal{E}}(\cdot)$ is always a quantum channel, i.e.\ a completely positive and trace-preserving map. 

Viewed as a linear operator, this channel is also invertible if the underlying ensemble $\mathcal{E}$ is sufficiently expressive. 
More precisely, we require that the complete family $\left\{ U^\dagger |b \rangle \! \langle b| U:\; U \in \mathcal{E} \right\}$ of admissible basis measurements is \emph{tomographically complete}. That is, for all $\rho \neq \sigma$, there exists a $U \in \mathcal{E}$ and a outcome $b \in \left\{0,1\right\}^n$ such that $\langle b| U \rho U^\dagger \rangle \neq \langle b| U \sigma U^\dagger |b \rangle$.

\begin{fact}
Suppose that $\left\{U^\dagger |b \rangle \! \langle b| U:\; U \in \mathcal{E},b \in \left\{0,1\right\}^n\right\}$ is a tomographically complete family of basis measurements. Then, $\mathcal{M}_{\mathcal{E}}$ has a well-defined and unique inverse $\mathcal{M}_{\mathcal{E}}^{-1}(\cdot)$.
\end{fact}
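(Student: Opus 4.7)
The plan is to view $\mathcal{M}_{\mathcal{E}}$ as a linear operator on the finite-dimensional real vector space of Hermitian $2^n \times 2^n$ matrices and show that it is injective; on a finite-dimensional space this immediately yields a unique two-sided inverse. The cleanest route to injectivity goes through the Hilbert–Schmidt inner product $\langle A,B\rangle_{\mathrm{HS}} = \mathrm{tr}(A^\dagger B)$, exploiting the fact that $\mathcal{M}_{\mathcal{E}}$ is self-adjoint with respect to it.

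First I would verify self-adjointness by direct expansion:
\begin{equation*}
\langle A, \mathcal{M}_{\mathcal{E}}(B) \rangle_{\mathrm{HS}} = \int_{\mathcal{E}} \mathrm{d}U \sum_{b} \langle b|U B U^\dagger|b\rangle \, \langle b|U A U^\dagger|b\rangle,
\end{equation*}
which for Hermitian $A,B$ is manifestly symmetric under $A \leftrightarrow B$. Next, suppose $\Delta$ is Hermitian with $\mathcal{M}_{\mathcal{E}}(\Delta)=0$. Pairing with $\Delta$ itself and using self-adjointness gives
\begin{equation*}
0 \;=\; \langle \Delta, \mathcal{M}_{\mathcal{E}}(\Delta)\rangle_{\mathrm{HS}} \;=\; \int_{\mathcal{E}} \mathrm{d}U \sum_{b} \bigl|\langle b|U\Delta U^\dagger|b\rangle\bigr|^2.
\end{equation*}
The integrand is a continuous, non-negative function of $U$, so it must vanish at every $U$ in the support of the sampling measure on $\mathcal{E}$. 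Equivalently, $\mathrm{tr}\bigl(U^\dagger|b\rangle\langle b|U \, \Delta\bigr) = 0$ for every admissible measurement operator in the family.

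The final step invokes tomographic completeness: the hypothesis is precisely that the operators $\{U^\dagger|b\rangle\langle b|U : U\in\mathcal{E},\, b\in\{0,1\}^n\}$ span the Hermitian matrices, so no nonzero Hermitian $\Delta$ can be Hilbert–Schmidt orthogonal to all of them. Combined with the previous step this forces $\Delta = 0$. Therefore $\mathcal{M}_{\mathcal{E}}$ has trivial kernel, is bijective on the finite-dimensional space of Hermitian operators (and extends complex-linearly to all matrices), and its inverse is unique.

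The only subtlety I anticipate is the passage from \emph{the integral of a continuous non-negative function vanishes} to \emph{the integrand vanishes on the support of the measure}. This is the bridge between the analytic object $\mathcal{M}_{\mathcal{E}}$ (defined by an average over $\mathcal{E}$) and the pointwise, combinatorial notion of tomographic completeness. It is harmless provided one interprets tomographic completeness relative to the support of the sampling distribution on $\mathcal{E}$, which is the natural reading in the classical-shadow framework and matches the discrete sums appearing for finite ensembles such as random Clifford or random Pauli measurements.
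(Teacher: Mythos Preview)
The paper states this as a \emph{Fact} without proof, so there is no argument in the paper to compare against. Your proof is correct and is the standard route: self-adjointness of $\mathcal{M}_{\mathcal{E}}$ under the Hilbert--Schmidt pairing, positivity of the resulting diagonal quadratic form, and injectivity via tomographic completeness on a finite-dimensional space.

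One small alignment point with the paper's wording. The paper defines tomographic completeness as the ability to distinguish any two \emph{states} $\rho\neq\sigma$, which on its face only says that no nonzero \emph{traceless} Hermitian operator is Hilbert--Schmidt orthogonal to all $U^\dagger|b\rangle\langle b|U$. In your final step you invoke the (equivalent) spanning formulation. The bridge is one line: for every $U$ one has $\sum_b U^\dagger|b\rangle\langle b|U=\mathbb{I}$, so the identity already lies in the span and the trace component of any Hermitian $\Delta$ is automatically detected. With that remark added, your argument matches the paper's precise hypothesis and is complete.
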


From now on, we will always assume that we are dealing with tomographically complete families of basis measurements.
Classical shadow estimation with randomized measurements is based on the following basic routine:
\begin{enumerate}
\item \emph{state preparation:} prepare a copy of the unknown quantum state $\rho$;
\item \emph{randomized single-shot measurement:} sample $U \sim \mathcal{E}$ at random, transform $\rho \mapsto U \rho U^\dagger$ and measure in the computational basis; \newline
\item \emph{construct a classical snapshots:} upon receiving outcome $\hat{b} \in \left\{0,1\right\}^n$, compute
\begin{align}
\hat{\rho} =& \mathcal{M}_{\mathcal{E}}^{-1} \left( U^\dagger |\hat{b} \rangle \! \langle \hat{b}| U \right).
\label{eq:classical-shadow}
\end{align}
\end{enumerate}
By construction, each snapshot is a random matrix that exactly reproduces the true underlying state $\rho$ in expectation (over both the classical choice of transformation and the quantum randomness in the basis outcome). That is,
\begin{align*}
\mathbb{E} \left[ \hat{\rho} \right] =& \mathcal{M}_{\mathcal{E}}^{-1} \left( \mathbb{E}_{U\sim \mathcal{E}} \mathbb{E}_{\hat{b}\in \left\{0,1\right\}^n} U^\dagger |\hat{b} \rangle \! \langle \hat{b}| U \right) \\
=& \mathcal{M}_{\mathcal{E}}^{-1} \left( \mathcal{M}_{\mathcal{E}} (\rho) \right) = \rho.
\end{align*}
Statistically speaking, $\hat{\rho}$ is an unbiased estimator of the underlying quantum state $\rho$. But a single snapshot is only a very poor estimator. This situation changes if we have access to multiple independent snapshots $\left\{\hat{\rho}_1,\ldots,\hat{\rho}_N \right\}$. We call such a collection a \emph{classical shadow} of $\rho$ with size $N$.
Forming the empirical average of snapshots within a classical shadow produces ever more accurate approximations of the true underlying state:
\begin{equation}
\tfrac{1}{N}\sum_{i=1}^N \hat{\rho}_i \overset{N \to \infty}{\longrightarrow} \tfrac{1}{N} \sum_{i=1}^N \mathbb{E} \left[ \hat{\rho}_i \right] = \rho. \label{eq:convergence}
\end{equation}
The main results in Ref.~\cite{huang2020shadow} highlight that classical shadows of moderate size already allow joint estimation of many interesting state properties.
More precisely, the classical shadow formalism allows for computing powerful a-priori bounds on the convergence behavior of such estimators. 

\subsection{Implicit assumptions and relaxations thereof}

Before moving on, it is worthwhile to delineate implicit assumptions within the classical shadows model. It turns out that almost all of them can be relaxed without threatening statistical guarantees like error bounds or confidence intervals.

\paragraph{Perfect/noiseless measurements:}
the original randomized measurement framework 
is contingent on perfect knowledge of the average quantum-to-classical channel~\eqref{eq:M}. Erroneous executions of the ensemble rotation $U$, or noisy executions of the subsequent computational basis measurement can thwart this assumption. However, recent results \cite{flammia2020shadow,koh2020shadow} highlight that a suitably extended shadow formalism can handle such imperfections. The key idea is to adjust the inversion formula~\eqref{eq:classical-shadow} appropriately. 
Ref.~\cite{koh2020shadow} achieves such an adjustment by assuming explicit knowledge of the (average) noise channel, while Ref.~\cite{flammia2020shadow} actually goes a step further and proposes a tractable calibration protocol that reveals sufficient information to appropriately correct Eq.~\eqref{eq:classical-shadow}.

\paragraph{Independent and identically distributed state copies:} because quantum measurements are typically destructive, most estimation protocols assume access to a perfect source that produces independent and identically distributed (\textit{iid}) copies of the underlying quantum state $\rho$. Formally, $N$ queries of such a perfect \textit{iid} state source produce the state $\rho^{\otimes N}$ and we subsequently proceed to measure independent copies sequentially. This particular tensor product structure is a strong assumption that combines stochastic independence (individual state copies are completely uncorrelated) with identical distribution (all state copies are identical). This second assumption is often violated in concrete experimental architectures. Small fluctuations within the device can lead to state copies that, although uncorrelated, vary in time (``drifting source''). $N$ state copies produced by such drifting (but independent) sources can be modelled by a sequence $\left\{\rho_i \right\}_{i=1}^N$ of non-identical quantum states.

The classical shadow formalism can readily handle drifting (but independent) sources. Each snapshot $\hat{\rho}_i$ will have a different expectation value and Eq.~\eqref{eq:convergence} needs to be adjusted accordingly:
\begin{equation}
\tfrac{1}{N} \sum_{i=1}^N \hat{\rho}_i \longrightarrow \tfrac{1}{N} \sum_{i=1}^N \mathbb{E} \hat{\rho}_i = \tfrac{1}{N}\sum_{i=1}^N \rho_i =: \rho_{\mathrm{avg}}.
\end{equation}
Hence, empirical averages of classical shadows are well-suited for approximating linear properties of the average source state $\rho_{\mathrm{avg}}$. We will show below that this desirable feature extends to the shadow estimation of polynomials as well. Note that it is, sufficient to know e.g. that the polynomial, $D_2$ evaluated at the average state is negative to ensure that the source is capable of producing entanglement, as in this case there must existe a $\rho_k$ which leads to a negative value.

In order to handle drifting sources, we will also assume access to trusted classical randomness that allows us to randomly select unitary transformations $U \in \mathcal{E}$. Mild by comparison, this assumption also features as an explicit (or implicit) assumption in other statistically sound treatments of entanglement detection, see e.g.\ \cite{dirkse2020entanglement} and references therein.

Based on these assumptions, we will establish rigorous error bounds and confidence intervals for predicting polynomial functions based on classical shadows. Rather than treating this problem in full generality, we focus on estimating 
\begin{equation*}
D_2^{(i)}(\rho)= p_1 (P_i \rho^\Ga P_i)^2 - p_2 (P_i \rho^\Ga P_i),
\end{equation*}
i.e., the first symmetry-resolved polynomial that is capable of detecting NPT entanglement. Theorem~\ref{thm:D2} below highlights that an order of $2^{|AB|} / \epsilon^2$ randomized measurements suffice to approximate $D_2^{(i)}(\rho_{\mathrm{avg}})$ up to additive accuracy $\epsilon$. Corollary~\ref{cor:D2} reformulates this insight in terms of confidence intervals.

Finally, we point out that assuming access to independent state copies (tensor product structure) is not a mild assumption. But, the classical shadow estimators below are invariant under permuting individual state copies. Permutation invariance suggests that strong proof techniques from quantum cryptography -- like the quantum de Finetti theorem, see e.g.\ \cite[Chapter 7]{watrous2018book} 
and references therein -- may be applicable and allow for relaxing the independence assumption as well. We intend to address this in future work.

\subsection{Predicting linear functions with classical shadows}

We are now ready to discuss the simplest use-case of classical shadows: estimate a linear function, say $\mathrm{tr}(O \rho)$, based on $N$ randomized measurements of \emph{independent} (but not necessarily identical) states. We can achieve this by simply replacing the unknown quantum state $\rho$  by an empirical average of snapshots within a classical shadow:
\begin{align}
\hat{o}_{(N)}=&\tfrac{1}{N} \sum_{i=1}^N \mathrm{tr}\left( O \hat{\rho}_i \right) \quad \text{obeys} \label{eq:linear-estimator}\\
\mathbb{E} \hat{o} =& \mathrm{tr} \left( O \tfrac{1}{N} \sum_{i=1}^N \mathbb{E} \hat{\rho}_i\right) = \mathrm{tr} \left( O \tfrac{1}{N} \sum_{i=1}^N \rho_i \right) = \mathrm{tr} \left( O \rho_{\mathrm{avg}}\right). \nonumber
\end{align}
Independence ensures that the individual snapshots $\hat{\rho}_i$ are stochastically independent random matrices. This in turn implies that each $\mathrm{tr}(O \hat{\rho}_i)$ is a stochastically independent random variable. Empirical averages of independent random variables tend to concentrate sharply around their expectation value -- regardless of the underlying distribution. The \emph{variance} is an important summary parameter that can control the rate of convergence. Chebyshev's inequality, for instance, implies for $\epsilon >0$
\begin{align}
& \mathrm{Pr} \left[ \left| \hat{o}_{(N)}-\mathrm{tr} \left( O \rho_{\mathrm{avg}}\right) \right| \geq \epsilon \right]
\leq \tfrac{1}{\epsilon^2} \mathrm{Var} \left[ \hat{o}_{(N)}\right] \nonumber \\
=& \tfrac{1}{N \varepsilon^2} \left( \tfrac{1}{N}\sum_{i=1}^N \mathrm{Var} \left[ \mathrm{tr} \left( O \hat{\rho}_i \right) \right] \right).
\label{eq:chebyshev}
\end{align}
This tail bound is a consequence of independence alone. The remaining average variance does depend on the ensemble $\mathcal{E}$. Different ensemble give rise to different variance contributions \cite{huang2020shadow,paini2019approximate}. 
Here, we focus on the practically most relevant case: randomized, \emph{single-qubit measurements}. 
Each qubit is measured in either the $X$-, the $Y$-, or the $Z$-basis. 
More formally, the ensemble $\mathcal{E}$ subsumes random single qubit Clifford rotations. That is $U = U_1 \otimes \cdots \otimes U_n$ with $U_1,\ldots,U_n \overset{\textit{iid}}{\sim} \mathrm{Cl}(2)$
and $\mathrm{Cl}(2)$ denotes the single-qubit Clifford group, i.e., the finite group generated by Hadamard and phase gates. More generic single-qubit ensembles (like Haar-random unitaries) are also an option -- what matters is that the single qubit ensemble forms a 3-design \cite{dankert2009unitary_designs,gross2007evenly}. The (single- and multi-qubit) Clifford group is one ensemble with this feature \cite{zhu2017clifford,webb2016clifford,kueng2015stabilizer}.
As demonstrated in Ref.~\cite{huang2020shadow}, the 3-design assumption allows us to compute the measurement channel~\eqref{eq:M}, as well as its inverse.
Let $\mathcal{D}_{1/3}(X)=1/3\left(X+\mathrm{tr}(X)\mathbb{I}\right)$ denote the single-qubit depolarizing channel with parameter $1/3$. Then,
\begin{align*}
\mathcal{M}\left(\bigotimes_{k=1}^n X_k\right) =& 
\bigotimes_{k=1}^n \mathcal{D}_{1/3}(X_k) = 3^{-n} \bigotimes_{k=1}^3 \left( X_k + \mathrm{tr}(X_k) \mathbb{I} \right), \\
\mathcal{M}^{-1}\left( \bigotimes_{k=1}^n X_k \right) =& \bigotimes_{k=1}^n \mathcal{D}_{1/3}^{-1}(X_k)
= \bigotimes_{k=1}^n \left( 3 X_k - \mathrm{tr}(X_k) \mathbb{I} \right),
\end{align*}
and we refer to \cite[Supplementary Information, Section C]{huang2020shadow} for details.
We see that the measurement channel (and its inverse) factorize nicely into a tensor product of single-qubit operations.
This is also true for snapshots~\eqref{eq:classical-shadow} within a classical shadow:
\begin{align}
\hat{\rho}
= \bigotimes_{k=1}^n \left( 3U_k^\dagger |\hat{b}_k \rangle \! \langle \hat{b}_k |U_k - \mathbb{I} \right). \label{eq:pauli-shadow}
\end{align}
This explicit formulation allows for deriving powerful and useful variance bounds, see \cite[Supplementary Information, proof of Proposition~S3]{huang2020shadow}.

\begin{lemma}[linear variance bound] \label{lem:linear-variance}
Fix an observable $O$ and
suppose that $\hat{\rho}_i$ is the snapshot~\eqref{eq:pauli-shadow} of an unknown quantum state. 
Then,
\begin{equation}
\mathrm{Var} \left[ \mathrm{tr}(O \hat{\rho}_i) \right] \leq 2^{\mathrm{w}(O)} \mathrm{tr}(O^2),
\end{equation}
where 
$\mathrm{w}(O)$ denotes the observable's weight; that is, the number of qubits on which it acts nontrivally.
\end{lemma}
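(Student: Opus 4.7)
The plan is to reduce to observables supported on exactly $w=\mathrm{w}(O)$ qubits, bound the variance by the second moment via a Pauli expansion, derive a closed form for all Pauli-pair moments using the single-qubit Clifford 3-design structure, and conclude with a Gershgorin-type bound on the resulting bilinear form in the Pauli coefficients. The reduction uses the tensor-product structure $\hat{\rho}=\bigotimes_k \hat{\rho}_k$ together with the deterministic identity $\tr(\hat{\rho}_k)=3-2=1$: for $O=O_S\otimes\mathbb{I}_{S^c}$ with $|S|=w$, one gets $\tr(O\hat{\rho})=\tr_S(O_S\hat{\rho}_S)$, so the estimator and its variance depend only on the measurements at qubits in $S$, and we may assume $n=w$.

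After bounding $\mathrm{Var}[\tr(O\hat{\rho})] \leq \mathbb{E}[\tr(O\hat{\rho})^2]$ and expanding $O=\sum_P o_P P$ in the Pauli basis, the product form $\tr(P\hat{\rho}) = 3^{w(P)}\prod_{k\in\mathrm{supp}(P)}\langle b_k|U_k\sigma_k U_k^\dagger|b_k\rangle$, combined with the uniform action of random single-qubit Cliffords on non-identity Paulis (sending each to $\{\pm X, \pm Y, \pm Z\}$ uniformly), yields, after averaging over the joint Born rule,
\[
\mathbb{E}[\tr(P\hat{\rho})\tr(P'\hat{\rho})] = 3^{|\mathrm{supp}(P)\cap\mathrm{supp}(P')|}\,\tr(PP'\rho)
\]
whenever $P,P'$ agree on their common support, and zero otherwise. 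In the vanishing case, the conjugated Pauli pair at any qubit of disagreement is uniform over anticommuting signed Paulis, which contain at most one $\pm Z$ factor and hence give a vanishing computational-basis matrix element.

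Using $|\tr(PP'\rho)|\leq 1$, the second moment is dominated by the bilinear form $a^{T}M a$ with $a_P=|o_P|$ and $M_{P,P'}=3^{|\mathrm{supp}(P)\cap\mathrm{supp}(P')|}\mathbb{I}[P\text{ compatible with }P']$. A short combinatorial check shows that every row of $M$ sums to exactly $4^{n}$: each of the $w(P)$ qubits in $\mathrm{supp}(P)$ contributes $3^{0}+3^{1}=4$ (choices $\sigma_{P',k}=I$ or $\sigma_{P',k}=\sigma_{P,k}$), while each of the remaining $n-w(P)$ qubits contributes $4\cdot 3^{0}=4$. Hence $\|M\|_{\mathrm{op}}=4^{n}$, the bilinear form is bounded by $4^{n}\sum_P o_P^{2}$, and using $\tr(O^{2})=2^{w}\sum_P o_P^{2}$ on the reduced $w$-qubit space concludes with $\mathrm{Var}[\tr(O\hat{\rho})]\leq 4^{w}\sum_P o_P^{2}=2^{w}\tr(O^{2})=2^{\mathrm{w}(O)}\tr(O^{2})$.

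The main technical obstacle is the closed-form derivation of the Pauli-pair moments: the joint Born rule couples measurement outcomes across qubits, so one has to verify carefully that the signs $\epsilon_k$ arising from Clifford conjugations $U_k\sigma_{P,k}U_k^\dagger=\epsilon_k Z$ cancel in pairs (as squared factors) within the final Born-rule average, leaving precisely the clean $\tr(PP'\rho)$ prefactor. Once this identity is in hand, the final Gershgorin bound based on the constant row sum $4^{n}$ is essentially immediate.
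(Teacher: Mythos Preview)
The paper does not actually prove this lemma; it simply refers to \cite[Supplementary Information, proof of Proposition~S3]{huang2020shadow}. Your argument is a complete and correct proof. The reduction to the support of $O$ via the deterministic identity $\tr(\hat\rho_k)=1$, the closed form
\[
\mathbb{E}\big[\tr(P\hat\rho)\tr(P'\hat\rho)\big]=3^{|\mathrm{supp}(P)\cap\mathrm{supp}(P')|}\,\tr(PP'\rho)
\]
for compatible Pauli pairs (and zero otherwise), and the final operator-norm bound via the constant row sum $4^{w}$ are all sound. One small remark: for the last step you only need $\|M\|_{\mathrm{op}}\le 4^{w}$, which already follows from the row-sum bound (Schur test / AM--GM on $|a_i||a_j|$); the equality you state is true but unnecessary.

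Your route is close in spirit to the original Huang--Kueng--Preskill argument, which likewise exploits the factorized single-qubit 3-design structure and a Pauli decomposition. Their bookkeeping is more operator-theoretic (bounding certain single-qubit channel outputs in operator norm), whereas your packaging via the bilinear form $a^{T}Ma$ with constant row sums is a clean combinatorial alternative that makes the origin of the $2^{\mathrm{w}(O)}$ prefactor quite transparent.
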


We can combine Lemma~\ref{lem:linear-variance} with Eq.~\eqref{eq:chebyshev} to obtain
\begin{equation}
\mathrm{Pr} \left[ \left| \hat{o}_{(N)}-\mathrm{tr} \left( O \rho_{\mathrm{avg}}\right) \right| \geq \epsilon \right]
\leq \frac{2^{\mathrm{w}(O)} \mathrm{tr}(O^2)}{N \epsilon^2}. \label{eq:tail-bound}
\end{equation}
There are different ways to capitalize on this tail bound. Firstly, we can fix a desired approximation accuracy $\epsilon$, as well as a desired (maximal) failure probability $\delta$. Eq.~\eqref{eq:tail-bound} then provides us with a lower bound on the number of measurements $N$ required to achieve these values. 

\begin{thm}[general linear error bound]
Fix $\epsilon,\delta \in (0,1)$ and a linear observable $O$. Suppose that we perform
\begin{equation*}
N \geq \frac{2^{\mathrm{w}(O)} \mathrm{tr} (O^2 )}{\epsilon^2 \delta}
\end{equation*}
randomized single-qubit measurements on independent states.
Then, the associated classical shadow 
suffices to $\epsilon$-approximate the expectation value of the average source state:
\begin{equation*}
\left| \hat{o}_{(N)}-\mathrm{tr} \left( O \rho_{\mathrm{avg}}\right) \right| \leq \epsilon
\quad \text{with prob.~(at least) $1-\delta$.}
\end{equation*}
\end{thm}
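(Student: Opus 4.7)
The plan is to observe that essentially all of the work has already been done in the paragraph immediately preceding the theorem: Eq.~\eqref{eq:tail-bound} combines Chebyshev's inequality (applied to the independent sum $\hat{o}_{(N)}$) with the single-snapshot variance bound of Lemma~\ref{lem:linear-variance}. The remaining task is purely algebraic — substitute the hypothesis on $N$ into that tail bound and check that the right-hand side drops below $\delta$.

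First I would recall that the estimator $\hat{o}_{(N)} = \tfrac{1}{N}\sum_{i=1}^N \mathrm{tr}(O\hat{\rho}_i)$ satisfies $\mathbb{E}[\hat{o}_{(N)}] = \mathrm{tr}(O \rho_{\mathrm{avg}})$, as shown in Eq.~\eqref{eq:linear-estimator}. Next I would invoke the assumption that the state copies $\rho_1,\ldots,\rho_N$ are produced independently (though not necessarily identically), and that the unitaries $U_1,\ldots,U_N$ are drawn with trusted classical randomness, so that the snapshots $\hat{\rho}_1,\ldots,\hat{\rho}_N$, and hence the scalar random variables $\mathrm{tr}(O\hat{\rho}_i)$, are mutually independent. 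Applying Chebyshev's inequality to the empirical mean of independent random variables as in Eq.~\eqref{eq:chebyshev}, and bounding each per-sample variance via Lemma~\ref{lem:linear-variance}, reproduces Eq.~\eqref{eq:tail-bound}:
\begin{equation*}
\mathrm{Pr}\!\left[\left|\hat{o}_{(N)}-\mathrm{tr}(O\rho_{\mathrm{avg}})\right|\geq\epsilon\right]\leq\frac{2^{\mathrm{w}(O)}\mathrm{tr}(O^2)}{N\epsilon^2}.
\end{equation*}
Finally I would plug in the hypothesis $N\geq 2^{\mathrm{w}(O)}\mathrm{tr}(O^2)/(\epsilon^2\delta)$ to conclude that the failure probability is at most $\delta$, which is exactly the advertised statement.

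The only subtlety — hardly an obstacle — is verifying that the per-sample variance bound of Lemma~\ref{lem:linear-variance} is compatible with the drifting-source setting. Inspecting that lemma, one sees that it bounds $\mathrm{Var}[\mathrm{tr}(O\hat{\rho}_i)]$ in a state-agnostic way (the upper bound $2^{\mathrm{w}(O)}\mathrm{tr}(O^2)$ depends only on $O$), so it applies state by state; consequently the variance of the empirical mean remains at most $2^{\mathrm{w}(O)}\mathrm{tr}(O^2)/N$ regardless of which particular $\rho_i$ each iteration realizes, and independence — rather than identical distribution — is all that Chebyshev's inequality requires. No additional concentration tool beyond Chebyshev is needed, so the entire argument reduces to a one-line substitution once this observation is spelled out.
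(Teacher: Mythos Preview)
Your proposal is correct and mirrors the paper's own argument exactly: the theorem is stated immediately after Eq.~\eqref{eq:tail-bound} precisely because it follows by substituting the hypothesis on $N$ into that Chebyshev-plus-Lemma~\ref{lem:linear-variance} tail bound. Your remark that the variance bound is state-agnostic (so that drifting sources pose no issue) is the only point worth making explicit, and you have done so.
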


Alternatively, we can fix a confidence level $\alpha$ and a total measurement budget $N$. In this case, Eq.~\eqref{eq:tail-bound} provides us with a bound on the approximation accuracy. Together with the empirical average $\hat{o}_{(N)}$ itself, this provides a statistically sound confidence interval.

\begin{cor}[general confidence interval]
Fix an observable $O$,  a confidence level $\alpha \in (0,1)$, as well as a measurement budget $N$ (comprised of independent states). Then, the true observable average $\mathrm{tr}(O \rho_{\mathrm{avg}})$ is contained in the interval
\begin{equation*}
\left[\hat{o}_{(N)}-\epsilon,\hat{o}_{(N)}+\epsilon \right] \quad \text{with} \quad \epsilon = \sqrt{\frac{2^{\mathrm{w}(O)} \mathrm{tr}(O^2)}{N (1-\alpha)}}
\end{equation*}
with probability (at least) $\alpha$.
\end{cor}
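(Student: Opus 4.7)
The plan is to obtain this statement as a direct corollary of the tail bound established immediately above Theorem~5 (the general linear error bound). Specifically, I would invoke the inequality
\begin{equation*}
\Pr\left[\,\bigl|\hat{o}_{(N)} - \mathrm{tr}(O\rho_{\mathrm{avg}})\bigr| \geq \epsilon\,\right] \leq \frac{2^{\mathrm{w}(O)}\mathrm{tr}(O^2)}{N\epsilon^2},
\end{equation*}
which itself follows from Chebyshev's inequality applied to the independent-sum estimator $\hat{o}_{(N)} = \tfrac{1}{N}\sum_{i=1}^N \mathrm{tr}(O\hat\rho_i)$ combined with the per-snapshot variance bound of Lemma~\ref{lem:linear-variance}. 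The drifting-source assumption enters only through the fact that $\mathbb{E}\hat{o}_{(N)} = \mathrm{tr}(O\rho_{\mathrm{avg}})$ and that the snapshots are independent (which is enough for variances to add), both of which were already established.

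Once the tail bound is in hand, the proof is a one-line algebraic rearrangement: I set the right-hand side equal to the desired failure probability $1-\alpha$, solve for $\epsilon$, and obtain
\begin{equation*}
\epsilon = \sqrt{\frac{2^{\mathrm{w}(O)}\mathrm{tr}(O^2)}{N(1-\alpha)}}.
\end{equation*}
Taking complements of the event inside the probability then yields
\begin{equation*}
\Pr\left[\,\mathrm{tr}(O\rho_{\mathrm{avg}}) \in [\hat{o}_{(N)}-\epsilon,\,\hat{o}_{(N)}+\epsilon]\,\right] \geq \alpha,
\end{equation*}
which is exactly the claim of the corollary.

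There is essentially no obstacle; the result is a dual reformulation of the preceding theorem (fixing $N$ and solving for the accuracy, instead of fixing the accuracy and solving for $N$), and it merely repackages the same Chebyshev-plus-variance estimate as a confidence interval. The only minor point worth flagging is that the bound is two-sided (absolute value) while the interval is symmetric around $\hat{o}_{(N)}$; this is automatic from the symmetry of the event $|\hat{o}_{(N)} - \mathrm{tr}(O\rho_{\mathrm{avg}})| < \epsilon$. No tightness claim is being made, so there is nothing further to verify.
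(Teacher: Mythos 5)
Your proposal is correct and matches the paper's (implicit) argument exactly: the corollary is obtained from the Chebyshev-plus-variance tail bound of Eq.~\eqref{eq:tail-bound} by setting the failure probability to $1-\alpha$ and solving for $\epsilon$, then passing to the complementary event. The paper presents this as a direct dual reformulation of the preceding theorem without further proof, which is precisely what you do.
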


These statements tell us that the measurement budget $N$ (required number of independent state copies) should scale with $2^{\mathrm{w}(O)}\mathrm{tr}(O^2)$ and the approximation error decays as $1/\sqrt{N}$. 
This is asymptotically optimal because of the central limit theorem,
but the scaling in $1/(1-\alpha)$ is extremely poor. More sophisticated estimation techniques -- like median of means instead of empirical averages \cite{huang2020shadow} -- improve this dependence exponentially from $1/(1-\alpha)$ to $\mathrm{const}\times \log (1/(1-\alpha))$.

\subsection{Predicting quadratic functions with classical shadows}

\subsubsection{U-statistics estimator}

The linear prediction ideas from above do extend to higher order polynomials. 
But in contrast to before, independent, but not identical, state copies (``drifting sources'') do require extra attention.
Here, we restrict our attention to quadratic polynomials \cite{huang2020shadow}. An extension to higher order polynomials is conceptually straightforward, but can become somewhat tedious to analyze \cite{EKH20}. Recall that we can rewrite any quadratic function in $\rho$ as a linear function in the tensor product $\rho \otimes \rho$:
\begin{equation*}
q (\rho) = \mathrm{tr} \left( Q \rho \otimes \rho \right).
\end{equation*}
We can approximate this function by replacing each exact copy of the unknown state with distinct classical snapshots (say $\hat{\rho}_i$ and $\hat{\rho}_j$, with $i \neq j$). Independence of the underlying states ensures stochastic independence of the classical snapshots and we conclude
\begin{equation*}
\mathrm{tr} \left( Q \hat{\rho}_i \otimes \hat{\rho}_j\right) \quad \text{obeys} \quad \mathbb{E} \mathrm{tr} \left( Q \hat{\rho}_i \otimes \hat{\rho}_j \right) = \mathrm{tr} \left( Q \rho_i \otimes \rho_j \right).
\end{equation*}
This is not a good estimator (yet).
We can improve approximation accuracy by empirically averaging over all \emph{distinct} pairs of $N$ classical shadows $\hat{\rho}_1,\ldots,\hat{\rho}_N$:
\begin{equation}
\hat{q}_{(N)} = \tfrac{1}{N(N-1)} \sum_{i \neq j} \mathrm{tr} \left( Q \hat{\rho}_i \otimes \hat{\rho}_j \right). \label{eq:quadratic-prediction}
\end{equation}
This is the simplest example of a \emph{U-statistics estimator} \cite{hoeffding1992class}. 
It is invariant under permuting the individual snapshots $\hat{\rho}_i$ and $\hat{\rho}_j$. This invariance allows us to also symmetrize the quadratic observable. We can without loss assume  $\mathrm{tr}(Q X \otimes Y) = \mathrm{tr} \left( Q Y \otimes X \right)$ for all matrices $X,Y$ with compatible dimension. Such a symmetry will simplify our derivations considerably.

\subsubsection{Deterministic bias}

The estimator average~\eqref{eq:quadratic-prediction} exactly reproduces $q \left(\rho_{\mathrm{avg}}\right)=\mathrm{tr}(Q \rho_{\mathrm{avg}} \otimes \rho_{\mathrm{avg}})$ if and only if the underlying states are identical ($\rho_1 = \cdots = \rho_N = \rho)$. 
If this is not the case, the expectation values of individual classical shadows can be distinct from each other. 
This can introduce a bias when attempting to estimate the average behavior of a quadratic function. Fortunately, any such bias is suppressed by $1/N$ and approaches zero once the number of measurements gets sufficiently large.
This is the content of the following statement. Let $\|\cdot \|_1$ and $\| \cdot \|_\infty$ denote the trace and operator norm, respectively.

\begin{lemma}[quadratic bias for non-identical states] \label{lem:bias}
Let $\left\{\hat{\rho}_1,\ldots,\hat{\rho}_N \right\}$ be a classical shadow that arise from measuring independent states $\rho_1,\ldots,\rho_N$. Set $\rho_{\mathrm{avg}}=\tfrac{1}{N} \sum_{i=1}^N \rho_i$ and consider a quadratic function $q(\sigma) = \mathrm{tr}(Q \sigma \otimes \sigma)$. Then, the associated U-statistics estimator~\eqref{eq:quadratic-prediction} obeys
\begin{equation*}
\mathbb{E} \hat{q}_{(N)}=q(\rho_{\mathrm{avg}}) + \tfrac{\Delta}{N-1}  \text{ with } |\Delta| \leq \max_{1 \leq k \leq N} \|\rho_{\mathrm{avg}}-\rho_k \|_1 \|Q\|_\infty.
\end{equation*}
\end{lemma}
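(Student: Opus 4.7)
\textbf{Proof proposal for Lemma~\ref{lem:bias}.} The plan is to compute $\mathbb{E}\hat{q}_{(N)}$ directly by invoking independence, reorganize the resulting double sum so that the average state $\rho_{\mathrm{avg}}$ appears explicitly, and then control the residual term via a careful rearrangement followed by H\"older's inequality. Independence of the snapshots (inherited from independence of the state copies) together with single-copy unbiasedness $\mathbb{E}\hat{\rho}_i=\rho_i$ immediately gives $\mathbb{E}[\mathrm{tr}(Q\hat{\rho}_i\otimes\hat{\rho}_j)]=\mathrm{tr}(Q\rho_i\otimes\rho_j)$ for every $i\neq j$, so
\begin{equation*}
\mathbb{E}\hat{q}_{(N)} = \frac{1}{N(N-1)}\sum_{i\neq j}\mathrm{tr}(Q\rho_i\otimes\rho_j).
\end{equation*}
Completing the sum to all pairs via $\sum_{i,j}\mathrm{tr}(Q\rho_i\otimes\rho_j)=N^2 q(\rho_{\mathrm{avg}})$ and solving for $\mathbb{E}\hat{q}_{(N)}$ yields, after elementary rearrangement, $\mathbb{E}\hat{q}_{(N)} = q(\rho_{\mathrm{avg}}) + \Delta/(N-1)$ with $\Delta = q(\rho_{\mathrm{avg}}) - \tfrac{1}{N}\sum_i\mathrm{tr}(Q\rho_i\otimes\rho_i)$.

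The real content of the lemma sits in the bound on $\Delta$, and here the naive H\"older estimate $|\Delta|\leq 2\|Q\|_\infty$ is too crude --- it does not shrink as the source stabilizes. I would instead exploit the telescoping identity
\begin{equation*}
\rho_{\mathrm{avg}}\otimes\rho_{\mathrm{avg}} - \rho_i\otimes\rho_i = (\rho_{\mathrm{avg}}-\rho_i)\otimes\rho_{\mathrm{avg}} + \rho_i\otimes(\rho_{\mathrm{avg}}-\rho_i),
\end{equation*}
which decomposes the difference into two pieces, each carrying the factor $\rho_{\mathrm{avg}}-\rho_i$. After tracing against $Q$ and summing over $i$, the first piece vanishes because $\sum_i(\rho_{\mathrm{avg}}-\rho_i)=0$ by definition of $\rho_{\mathrm{avg}}$, leaving
\begin{equation*}
\Delta = \frac{1}{N}\sum_{i=1}^N \mathrm{tr}\bigl(Q\,\rho_i\otimes(\rho_{\mathrm{avg}}-\rho_i)\bigr).
\end{equation*}

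Finally, H\"older's inequality combined with multiplicativity of the trace norm under tensor products yields $|\mathrm{tr}(Q\,\rho_i\otimes(\rho_{\mathrm{avg}}-\rho_i))|\leq\|Q\|_\infty\,\|\rho_i\|_1\,\|\rho_{\mathrm{avg}}-\rho_i\|_1 = \|Q\|_\infty\,\|\rho_{\mathrm{avg}}-\rho_i\|_1$, where $\|\rho_i\|_1=1$ because $\rho_i$ is a density matrix. Averaging over $i$ and bounding the mean by the maximum produces the claimed bound $|\Delta|\leq\|Q\|_\infty\,\max_{1\leq k\leq N}\|\rho_{\mathrm{avg}}-\rho_k\|_1$. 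The only nontrivial step is this \emph{vanishing-average trick}: without peeling off the contribution that averages to zero, the bound on $\Delta$ would not depend on the drift magnitude at all, and the lemma's main message --- that the bias decays whenever the source is nearly stationary --- would be lost.
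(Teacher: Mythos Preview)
Your proof is correct and matches the paper's argument essentially step for step: both compute $\mathbb{E}\hat{q}_{(N)}$ via independence, isolate the bias as $\Delta=\tfrac{1}{N}\sum_i\mathrm{tr}\bigl(Q\,\rho_i\otimes(\rho_{\mathrm{avg}}-\rho_i)\bigr)$ (up to an inconsequential overall sign), and then apply matrix H\"older with multiplicativity of the trace norm and $\|\rho_i\|_1=1$. Your telescoping identity plus the vanishing-average observation is just a slightly different bookkeeping of the same double-sum rearrangement the paper carries out directly; the key formula and the final bound coincide.
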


Note that the bias term $\Delta$ vanishes if all states are identically distributed ($\rho_i = \rho_j$ for all $1 \leq i,j \leq N$) and can never be too large either:
\begin{equation*}
|\Delta| \leq 2 \|Q \|_\infty,
\end{equation*}
as the trace norm difference between two quantum states is at most two. Many quadratic functions, which are particularly relevant in entanglement detection, also obey $\|Q \|_\infty \leq 1$.

\begin{proof}[Proof of Lemma~\ref{lem:bias}]

Apply $\mathbb{E} \left[ \hat{\rho}_i \otimes \hat{\rho}_j \right] = \rho_i \otimes \rho_j$ (independence) and elementary reformulations to conclude
\begin{align*}
\Delta=& (N-1)\left( q \left(\rho_{\mathrm{avg}}\right)-\mathbb{E} \hat{q}_N  \right) \\
=& \tfrac{1}{N} \left( \tfrac{N-1}{N} \sum_{i,j=1}^N \mathrm{tr} \left(Q \rho_i \otimes \rho_j \right) - \sum_{i \neq j}\mathrm{tr} \mathbb{E}\left( Q \hat{\rho}_i \otimes \hat{\rho}_j \right)\right) \\
=& \tfrac{1}{N}\left( \sum_{i=1}^N \mathrm{tr} \left( Q \rho_i \otimes \rho_i \right) - \tfrac{1}{N}\sum_{i,j=1}^N \mathrm{tr} \left( Q \rho_i \otimes \rho_j \right) \right) \\
=& \tfrac{1}{N}\sum_{i=1}^N \mathrm{tr} \left(Q \rho_i \otimes \left( \rho_i - \rho_{\mathrm{avg}} \right) \right) .
\end{align*}
Apply matrix Hoelder to this reformulation to obtain a slightly stronger version of the advertised bound:
\begin{align*}
\left|\Delta \right|
\leq & \tfrac{1}{N}\sum_{i=1}^N \|Q \|_\infty \| \rho_i \otimes (\rho_i - \rho_{\mathrm{avg}})\|_1 \\
=& \| Q \|_\infty \tfrac{1}{N}\sum_{i=1}^N \|\rho_i - \rho_{\mathrm{avg}} \|_1,
\end{align*}
because the trace norm is multiplicative under tensor products and quantum states $\rho_i$ satisfy $\|\rho_i \|_1 =1$.
\end{proof}

\subsubsection{Variance bounds}

In analogy to the linear estimator~\eqref{eq:linear-estimator} (empirical average), the U-statistics estimator~\eqref{eq:quadratic-prediction} converges to its expectation value $\mathbb{E} \left[ \hat{q}_{(N)}\right]$.
The variance, which we compute in the following, provides a useful summary parameter for the rate of this convergence.
Use $\mathbb{E} \left[ \hat{\rho}_i \otimes \hat{\rho}_j \right] = \rho_i \otimes \rho_j$ to rewrite the U-statistics variance as
\begin{align*}
\mathrm{Var} \left[ \hat{q}_{(N)}\right]
=& \mathbb{E} \left[ \left( \tfrac{1}{N(N-1)}\sum_{i \neq j}\mathrm{tr} \left( Q \left( \hat{\rho}_i \otimes \hat{\rho}_j - \rho_i \otimes \rho_j \right) \right) \right)^2 \right] \\
=& \tfrac{1}{N^2 (N-1)^2} \sum_{i \neq j} \sum_{k \neq l} \mathbb{E} \left[ \mathrm{tr} \left( Q \left( \hat{\rho}_i \otimes \hat{\rho}_j - \rho_i \otimes \rho_j \right) \right) \right.\\
& \left. \mathrm{tr} \left( Q \left( \hat{\rho}_k \otimes \hat{\rho}_l - \rho_k \otimes \rho_l \right) \right) \right].
\end{align*}
We can now analyze these contributions separately. And, owing to stochastic independence, most of them vanish identically. It is at this point, where the assumption of independent state copies (and access to independent randomness for selecting measurements) matters the most. For instance, if all indices $i,j,k,l$ are distinct, the expectation value factorizes and produces a zero contribution.
As detailed in \cite[Supplemental Material]{EKH20} the only exceptions are contributions where at least two indices coincide. 
Together with symmetry of the observable ($\mathrm{tr} \left(Q X \otimes Y \right) =\mathrm{tr} \left( Q Y \otimes X \right)$) and the AM-GM inequality ($X_j X_k \leq |X_j| |X_k| \leq \tfrac{1}{2} \left( |X_j|^2 + |X_k|^2 \right) = \tfrac{1}{2} \left( X_j^2 + X_k^2 \right)$), we obtain 
\begin{align*}
\mathrm{Var} \left[ \hat{q}_{(N)}\right]\leq & \tfrac{4(N-2)}{N^2(N-1)^2} \sum_{i \neq j} \mathbb{E}  \left[ \mathrm{tr} \left( Q \left( \hat{\rho}_i - \rho_i \right) \otimes \rho_j \right)^2 \right] \\
+& \tfrac{2}{N^2(N-1)^2}\sum_{i \neq j} \mathbb{E} \left[ \mathrm{tr} \left( Q \left( \hat{\rho}_i \otimes \hat{\rho}_j - \rho_i \otimes \rho_j \right) \right)^2 \right] \\
\leq & \tfrac{4(N-2)}{N^2(N-1)^2}\sum_{i \neq j} \mathrm{Var} \left[ \mathrm{tr} \left( \mathrm{tr}_2 \left( Q \mathbb{I} \otimes \rho_j \right) \hat{\rho}_i \right)\right] \\
+& \tfrac{2}{N^2 (N-1)^2}\sum_{i \neq j} \mathrm{Var} \left[ \mathrm{tr} \left( Q \hat{\rho}_i \otimes \hat{\rho}_j \right) \right].
\end{align*}
The final reformulation allows us to re-use the linear variance bound from Lemma~\ref{lem:linear-variance}. For $1 \leq j \leq N$, we define the effective single-copy observable 
\begin{equation}
Q_j = \mathrm{tr}_2 \left(Q \mathbb{I} \otimes \rho_j \right) \label{eq:effective-observable}
\end{equation}
to recognize simple linear variance terms within the first sum.
\begin{align}
\mathrm{Var} \left[ \hat{q}_{(N)}\right]
\leq & \tfrac{4(N-2)}{N^2(N-1)^2}\sum_{i \neq j} 2^{\mathrm{w}(Q_j)} \mathrm{tr} \left(Q_j^2 \right) \nonumber \\
+& \tfrac{2}{N^2 (N-1)^2} \sum_{i \neq j} 2^{\mathrm{w}(Q)} \mathrm{tr}(Q^2) \nonumber \\
\leq & \tfrac{2}{N}\left(2 \max_{1 \leq i \leq N} 2^{\mathrm{w}(Q_i)} \mathrm{tr}(Q_i^2) + \tfrac{1}{N-1}2^{\mathrm{w}(Q)}\mathrm{tr}(Q^2) \right). \label{eq:quadratic-variance}
\end{align}
Clearly, this upper bound becomes smaller as the measurement budget $N$ increases. 

\subsubsection{Error bound and confidence interval}

Having derived bounds on deterministic bias and variance allows us to deduce a general error bound.
Markov's inequality implies
\begin{align*}
& \mathrm{Pr} \left[ \left| \hat{q}_{(N)}-q \left( \rho_{\mathrm{avg}} \right) \right| \geq \epsilon \right]
= \mathrm{Pr} \left[ \left( \hat{q}_{(N)}-q (\rho_{\mathrm{avg}})\right)^2 \geq \epsilon^2 \right] \\
\leq & \tfrac{1}{\epsilon^2}\mathbb{E} \left[ \left( \hat{q}_{(N)}-\mathbb{E} \left[ \hat{q}_{(N)} \right] + \mathbb{E} \left[ \hat{q}_{(N)} \right] - q\left(\rho_{\mathrm{avg}}\right) \right)^2 \right]\\
=& \tfrac{1}{\epsilon^2}\left( \mathbb{E} \left[ \left( \hat{q}_{(N)}-\mathbb{E} \left[ \hat{q}_{(N)}\right] \right)^2 \right] + \left( \mathbb{E} \left[ \hat{q}_{(N)} \right] - q \left( \rho_{\mathrm{avg}} \right) \right)^2\right) \\
=& \tfrac{1}{\epsilon^2}\left( \mathrm{Var} \left[ \hat{q}_{(N)}\right] + \frac{\Delta^2}{(N-1)^2}\right),
\end{align*}
where we have isolated statistical fluctuations from the underlying deterministic bias. Inserting the bounds from Eq.~\eqref{eq:quadratic-variance} and Lemma~\ref{lem:bias} renders this bound more explicit:
\begin{align}
& \mathrm{Pr} \left[ \left| \hat{q}_{(N)}-q \left( \rho_{\mathrm{avg}} \right) \right| \geq \epsilon \right] \label{eq:main-bound} \\
\leq & 4 \max_{1 \leq i \leq N}\frac{2^{\mathrm{w}(Q_i)} \mathrm{tr}(Q_i^2)}{\epsilon^2 N} + 2\frac{2^{\mathrm{w}(Q)}\mathrm{tr}(Q^2)}{N (N-1)\epsilon^2}+ 4 \frac{ \|Q \|_\infty^2}{(N-1)^2 \epsilon^2}.
\nonumber
\end{align}
Term two and three have a comparable scaling in $N$ and $\epsilon$. The first term is different and starts to dominate as $N$ increases.
Recall that $Q (\rho_i) = \mathrm{tr} \left( Q \mathbb{I} \otimes \rho_i \right)$ denotes an effectively linear function on a single copy of $AB$. This is the power of U-statistics. Asymptotically, it is an effectively linear scaling term that dominates the statistical convergence rate of a quadratic estimator. 
The other terms, however, can dominate in the small-$N$ regime. 

\section{Concrete guarantees for estimating $D_2$}

A direct conversion into error bound and confidence interval is conceptually straightforward, but somewhat cumbersome. Different contributions with distinct scaling behavior must be balanced against each other. This renders fully general statements somewhat difficult to parse. Instead, we derive concrete error bounds and confidence intervals for a concrete, and interesting, quadratic polynomial, namely $D_2$.

\subsection{Rewriting $D_2$ as a quadratic function}

As mentioned in the main text, the $D_2$ PPT condition is trivially satisfied by the partial transpose of any density matrix. Nevertheless, its symmetry-resolved counterpart provides a non-trivial entanglement condition, and is, from an experimental point of view, the most affordable entanglement condition we proposed. Indeed, it requires to estimate only the first two moments of the partial transpose within a sector. For this reason, we consider here states possessing the same symmetry as in Sec.~\ref{sec.sr} of the main text. That is, we focus here on a $(n+m)$-qubit mixed state $\rho$ that we view as bipartite states, with a subsystem $A$ containing $n$ qubits and subsystem $B$ containing $m$ qubits. Moreover, we assume that $\rho$ commutes with the total
number operator $\mathcal{N}=\mathcal{N}_A+\mathcal{N}_B$.  

Let us recall here that such a state $\rho$ has the block diagonal structure $\rho = \sum_i Q_i \rho Q_i$ and that its partial transpose is also block diagonal, albeit in a different basis: $\rho^\Gamma = \sum_i P_i \rho^\Gamma P_i$ (see Eqs.~\eqref{def:Q} and~\eqref{def:P} for the definitions of the projectors $Q_i$ and $P_i$). We fix a block label $i$
and consider the second moment inequality restricted to this block: 
\begin{equation*}
p_2 \left( P_i \rho^\Gamma P_i \right) \leq \left( p_1 \left( P_i \rho^\Gamma P_i \right) \right)^2.
\end{equation*}
This inequality is true if and only if the following homogeneous polynomial is nonnegative:
\begin{align*}
D_2^{(i)}(\rho) = \left( \mathrm{tr} \left( P_i \rho^\Gamma P_i \right) \right)^2 - \mathrm{tr} \left( P_i \rho^\Gamma P_i \right)^2. 
\end{align*}
Lemma~\ref{lem.Li} in the main text allows us to rewrite this expression as a linear and symmetric function in $\rho \otimes \rho$:
\begin{align}
D_2^{(i)}(\rho) =& \mathrm{tr} \left( Q \rho \otimes \rho \right) \label{eq:D2-definition}
\end{align}
where $Q$ is a linear operator that acts on two copies of the bipartite operator space $\mathcal{B}(\mathcal{H}_{AB}) \otimes \mathcal{B}(\mathcal{H}_{AB}) \simeq \mathcal{B}(\mathcal{H}_A) \otimes \mathcal{B}(\mathcal{H}_B) \otimes \mathcal{B} (\mathcal{H}_A) \otimes \mathcal{B}(\mathcal{H}_B)$.
It is defined in terms of the orthogonal projectors 
\begin{align*}
\Pi_a =& 
\sum_{i_1+\cdots+i_n=a} |i_1,\ldots,i_n \rangle \! \langle i_1,\ldots,i_n| \in \mathcal{B}(\mathcal{H}_A), \\
\Pi_b =& 
\sum_{i_1 + \cdots + i_m = b} |i_1,\ldots,i_m \rangle \! \langle i_1,\ldots,i_m| \in \mathcal{B}(\mathcal{H}_B),
\end{align*}
as well as different swap operations.
Let $W_A$ ($W_B$) and $W_{AB}$ denote the operator that swaps both copies of $A$ ($B$) and $AB$, respectively.
Then,
\begin{align}
Q 
=& P_i \otimes P_i - \tfrac{1}{2} W_A (P_i \otimes \mathbb{I}_{AB}) W_A - \tfrac{1}{2} W_B (P_i \otimes \mathbb{I}_{AB}) W_B \label{eq:Q}
\end{align}
obeys Eq.~\eqref{eq:D2-definition} in a symmetric fashion, i.e.\ 
$
\mathrm{tr} \left( Q \rho \otimes \sigma \right) = \mathrm{tr} \left( Q \sigma \otimes \rho \right)$ for $\rho,\sigma \in \mathcal{B}(\mathcal{H}_{AB})$.
We can use this two-copy operator to construct a U-statistics estimator based on (independent) classical shadows:
\begin{equation}
\widehat{D}_{2,(N)}^{(i)}= \tfrac{1}{N(N-1)}\sum_{i \neq j} \mathrm{tr} \left( Q \hat{\rho}_i \otimes \hat{\rho}_j \right).
\label{eq:D2-estimator}
\end{equation}
Due to Lemma~\ref{lem:bias},
\begin{align*}
\mathbb{E} \widehat{D}_{2,(N)}^{(i)} \overset{N \to \infty}{\longrightarrow} \mathrm{tr} \left( Q \rho_{\mathrm{avg}}^{\otimes 2}\right) = D_2^{(i)}\left( \rho_{\mathrm{avg}}\right)
\end{align*}
and Eq.~\eqref{eq:main-bound} allows us to rigorously control the speed of convergence.

\subsection{Bounding the relevant figures of merit}

The bound provided by Eq.~\eqref{eq:main-bound} depends on the squared Hilbert-Schmidt norms of $Q$, as well as the squared Hilbert Schmidt norm of
\begin{align}
Q (\sigma) =& \mathrm{tr}_2 \left( Q \mathbb{I}\otimes \sigma \right) \label{eq:Q-linear}\\
=& \mathrm{tr} \left( P_i \sigma \right) P_i - \tfrac{1}{2} \sum_{a-b=i} \Pi_a \otimes \mathbb{I}_B \sigma \mathbb{I}_A \otimes \Pi_b \nonumber \\
-& \tfrac{1}{2}\sum_{a-b=i} \mathbb{I}_A \otimes \Pi_b \sigma \Pi_a \otimes \mathbb{I}_B \in \mathcal{B}(\mathcal{H}_{AB}), \nonumber 
\end{align}
where $\sigma$ may range over all possible source states $\rho_1,\ldots,\rho_N$.
The weights (localities) of $Q$ and $Q (\sigma)$ also play an important role. 
Alas, these are essentially as large as they can be. Viewed as observables on one and two copies of $AB$, respectively, the operators $Q(\sigma) \in \mathcal{B}(\mathcal{H}_{AB})$ and $Q \in \mathcal{B}(\mathcal{H}_{AB})^{\otimes 2}$ act nontrivially on all qubits involved. Hence, we must conclude
\begin{align}
\mathrm{w}\left( Q (\sigma) \right) = |AB|=(n+m) \quad \text{and} \label{eq:D2-reduced-weight} \\
\mathrm{w}(Q) = |ABAB| =2 (n+m). \label{eq:D2-full-weight}.
\end{align}
Next, note that 
the sum $\sum_{a-b=i} \Pi_a \otimes \Pi_b$ describes an orthogonal projection $P_i$ on the $AB$-system. (Tensor products of) orthogonal projectors have operator norm one ($\|P_i \|_\infty =1$). Operator norms are also invariant under permuting tensor factors. Therefore,
\begin{align}
\| Q \|_\infty  \leq & \| P_i \otimes P_i \|_\infty 
+ \tfrac{1}{2}\| W_B P_i \otimes \mathbb{I}_{AB} W_B\|_\infty \nonumber \\
+& \tfrac{1}{2}\| W_A P_i \otimes \mathbb{I}_{AB} W_A \|_\infty \nonumber \\
=& \|P_i \otimes P_i \|_\infty + \| P_i \otimes \mathbb{I}_{AB}\|_\infty \label{eq:D2-operator}
=2.
\end{align}

Let us now move on to computing the Hilbert-Schmidt norms (squared) of $Q$ and $Q(\sigma)$.

\begin{lemma} \label{lem:D2-quadratic}
The two-copy observable defined in Eq.~\eqref{eq:Q} obeys
\begin{equation*}
\mathrm{tr}(Q^2) \leq \tfrac{1}{2} \mathrm{tr}(P_i) \left( 3 \mathrm{tr}(P_i) + 2^{|AB|} - 4 \right).
\end{equation*}
\end{lemma}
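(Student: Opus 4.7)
I plan to compute $\mathrm{tr}(Q^2)$ almost exactly and then apply one simple combinatorial inequality. Introduce the shorthand $S=P_i\otimes P_i$, $T_A=W_A(P_i\otimes\mathbb{I}_{AB})W_A$, $T_B=W_B(P_i\otimes\mathbb{I}_{AB})W_B$, together with $d=2^{|AB|}$ and $t=\mathrm{tr}(P_i)$. Because $Q$ is symmetric under swapping the two copies of $AB$, expanding
\[
\mathrm{tr}(Q^2)=\mathrm{tr}(S^2)+\tfrac14\mathrm{tr}(T_A^2)+\tfrac14\mathrm{tr}(T_B^2)-\mathrm{tr}(ST_A)-\mathrm{tr}(ST_B)+\tfrac12\mathrm{tr}(T_AT_B)
\]
reduces the problem to six traces of products of mutually commuting orthogonal projectors.

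\textbf{Evaluating the traces.} First, $S^2=S$ since $P_i$ is a projector, so $\mathrm{tr}(S^2)=t^2$. Next, $T_A^2=W_A(P_i\otimes\mathbb{I})^2W_A=T_A$, hence $\mathrm{tr}(T_A^2)=\mathrm{tr}(P_i\otimes\mathbb{I}_{AB})=t\,d$, and likewise for $T_B$. To handle the mixed terms, I would switch to the excitation--number basis on $A_1B_1A_2B_2$: $S$ projects onto $\{a_1-b_1=i,\;a_2-b_2=i\}$, while $T_A$ and $T_B$ project onto $\{a_2-b_1=i\}$ and $\{a_1-b_2=i\}$ respectively, so they all commute. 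A direct count gives
\[
\mathrm{tr}(T_AT_B)=\Bigl(\sum_{a-b=i}\tbinom{n}{a}\tbinom{m}{b}\Bigr)^{\!2}=t^2,
\]
and the two constraints imposed by $S$ together with either $T_A$ or $T_B$ force $a_1=a_2$ and $b_1=b_2$, yielding
\[
\mathrm{tr}(ST_A)=\mathrm{tr}(ST_B)=\sum_{a-b=i}\tbinom{n}{a}^{2}\tbinom{m}{b}^{2}.
\]
Collecting everything,
\[
\mathrm{tr}(Q^2)=\tfrac{3}{2}t^{2}+\tfrac{1}{2}t\,d-2\sum_{a-b=i}\tbinom{n}{a}^{2}\tbinom{m}{b}^{2}.
\]

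\textbf{Final inequality.} The target bound $\tfrac12 t(3t+d-4)=\tfrac32 t^2+\tfrac12 td-2t$ therefore reduces to showing
\[
\sum_{a-b=i}\tbinom{n}{a}^{2}\tbinom{m}{b}^{2}\;\geq\;\sum_{a-b=i}\tbinom{n}{a}\tbinom{m}{b}\;=\;t,
\]
which holds termwise since each nonzero binomial satisfies $\binom{n}{a}\binom{m}{b}\geq 1$. This concludes the proof. The only mildly subtle step is the bookkeeping of the six commuting projectors on $A_1B_1A_2B_2$; once one recognizes that $W_A(P_i\otimes\mathbb{I})W_A$ is itself a projector onto a single symmetry constraint on the ``crossed'' pair $(B_1,A_2)$, everything collapses into Vandermonde-style counts and the rest is arithmetic. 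No separate bound is needed on $\|Q\|_\infty$ or on the weights, so I anticipate no real obstacle beyond carefully tracking which subsystems each projector constrains.
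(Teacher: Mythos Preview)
Your proof is correct and essentially identical to the paper's: both expand $\mathrm{tr}(Q^2)$ into the six pairwise terms, evaluate them to obtain the exact expression $\tfrac{3}{2}\mathrm{tr}(P_i)^2+\tfrac12\,2^{|AB|}\mathrm{tr}(P_i)-2\sum_{a-b=i}\mathrm{tr}(\Pi_a)^2\mathrm{tr}(\Pi_b)^2$, and then apply the termwise bound $\mathrm{tr}(\Pi_a)^2\mathrm{tr}(\Pi_b)^2\ge \mathrm{tr}(\Pi_a)\mathrm{tr}(\Pi_b)$. The only difference is cosmetic: you package $S,T_A,T_B$ as commuting projectors on $A_1B_1A_2B_2$ and count basis states directly, whereas the paper expands in the $\tilde{\Pi}_a,\tilde{\Pi}_b$ notation and uses orthogonality relations, but the computations match line by line.
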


\begin{proof}
To simplify notation somewhat, we introduce the following short-hand notation conventions: 
$\tilde{\Pi}_a = \Pi_a \otimes \mathbb{I}_B \in \mathcal{B} \left( \mathcal{H}_{AB}\right)$
and
$\tilde{\Pi}_b = \mathbb{I}_A \otimes \Pi_b \in \mathcal{B} \left( \mathcal{H}_{AB}\right)$.
This notation allows us to split up the Hilbert-Schmidt norm squared into four contributions:
\begin{align*}
\mathrm{tr}(Q^2)
=& \mathrm{tr} \left(( P_i \otimes P_i - \tfrac{1}{2}\sum_{a-b=i} \left(\tilde{\Pi}_a \otimes \tilde{\Pi}_b + \tilde{\Pi}_b\otimes \tilde{\Pi}_a \right)\right)^2 \\
=& \mathrm{tr} \left( P_i^2 \right)^2 
-2  \sum_{a-b=i} \mathrm{tr} \left( P_i \tilde{\Pi}_a\right) \mathrm{tr} \left(P_i\tilde{\Pi}_b \right) 
\\
+& \tfrac{1}{2} \sum_{a-b=i} \sum_{a'-b'=i} \mathrm{tr} \left( \tilde{\Pi}_a \tilde{\Pi}_{a'} \right) \mathrm{tr} \left(\tilde{\Pi}_b \tilde{\Pi}_{b'} \right) \\
+ & \tfrac{1}{2} \sum_{a-b=i} \sum_{a'-b'=i} \mathrm{tr} \left( \tilde{\Pi}_a \tilde{\Pi}_{b'} \right) \mathrm{tr} \left( \tilde{\Pi}_b \tilde{\Pi}_{a'} \right).
\end{align*}
We can use $P_i^2 =P_i$, $P_i = \sum_{a'-b'=i} \Pi_{a'} \otimes \Pi_{b'}$ and orthogonality relations to simplify terms individually. In particular,
\begin{align*}
&  \sum_{a-b=i} \mathrm{tr}(P_i \tilde{\Pi}_a ) \mathrm{tr} (P_i\tilde{\Pi}_b) \\
=& \sum_{a-b=i} \sum_{a'-b'=i} \sum_{a''-b''=i} 
\mathrm{tr} \left(\Pi_{a'} \Pi_a \otimes \Pi_{b'} \right) \mathrm{tr} \left( \Pi_{a''} \otimes \Pi_{b''} \Pi_b \right) \\
=&  \sum_{a-b=i} \sum_{a'-b'=i} \sum_{a''-b''=i} \delta_{a,a'} \delta_{b,b''} \mathrm{tr}(\Pi_a) \mathrm{tr}(\Pi_{b'}) \mathrm{tr}(\Pi_{a''}) \mathrm{tr}(\Pi_b) \\
=&  \sum_{a-b=i} \mathrm{tr}(\Pi_a)^2 \mathrm{tr}(\Pi_b)^2,
\end{align*}
and \begin{align*}
& \sum_{a-b=i} \sum_{a'-b'=i}  \mathrm{tr} \left( \tilde{\Pi}_a \tilde{\Pi}_{a'} \right) \mathrm{tr} \left( \tilde{\Pi}_b \tilde{\Pi}_{b'} \right) \\
=& \sum_{a-b=i} \sum_{a'-b'=i}  \delta_{a,a'} \delta_{b,b'} \mathrm{tr}(\Pi_a \otimes \mathbb{I}_B ) \mathrm{tr}(\mathbb{I}_A \otimes \Pi_b) \\
=& \sum_{a-b=i} 2^{|A|} 2^{|B|}\mathrm{tr} \left( \Pi_a \otimes \Pi_b \right) = 2^{|AB|}\mathrm{tr}(P_i),
\end{align*}
as well as
\begin{align*}
& \sum_{a-b=i} \sum_{a'-b'=i} \mathrm{tr} \left( \tilde{\Pi}_a \tilde{\Pi}_{b'} \right) \mathrm{tr} \left( \tilde{\Pi}_b \tilde{\Pi}_{a'} \right) \\
=&  \sum_{a-b=i} \sum_{a'-b'=i} \mathrm{tr} \left( \tilde{\Pi}_a \right) \mathrm{tr} \left( \Pi_{b'} \right) \mathrm{tr} \left( \Pi_{a'} \right) \mathrm{tr} \left( \Pi_b \right) \\
=& \left( \sum_{a-b=i} \mathrm{tr}(\Pi_a) \mathrm{tr}(\Pi_b) \right)
\left( \sum_{a'-b'=i} \mathrm{tr} \left( \Pi_{a'} \right) \mathrm{tr} \left( \Pi_{b'} \right)\right) \\
=& \left( \sum_{a-b=i} \mathrm{tr} \left( \Pi_a \otimes \Pi_b \right) \right)^2 = \mathrm{tr} \left( P_i \right)^2
\end{align*}
\begin{align*}
\mathrm{tr}(Q^2)
=& \tfrac{3}{2}\mathrm{tr} \left( P_i \right)^2 + \tfrac{1}{2} 2^{|AB|} \mathrm{tr} \left( P_i \right)  \\
-& 2  \sum_{a-b=i}  \mathrm{tr}(\Pi_a)^2 \mathrm{tr}(\Pi_b)^2.
\end{align*}
The remaining sum can simply be bounded by observing that $\mathrm{tr}(\Pi_a)^2 \geq \mathrm{tr}(\Pi_a) \geq 0$ and similarly for $\Pi_b$:
\begin{equation*}
 \sum_{a-b=i} \mathrm{tr}(\Pi_a)^2 \mathrm{tr}(\Pi_b)^2
 \geq \sum_{a-b=i} \mathrm{tr}(\Pi_a) \mathrm{tr}(\Pi_b) = \mathrm{tr}(P_i).
 \end{equation*}
 This implies the following upper bound on the quadratic variance:
 \begin{align*}
\mathrm{tr}(Q^2)
 \leq &\tfrac{3}{2} \mathrm{tr}(P_i)^2 + \tfrac{1}{2} 2^{|AB|} \mathrm{tr}(P_i) - 2 \mathrm{tr}(P_i) \\
 =& \tfrac{1}{2} \mathrm{tr}(P_i) \left( 3 \mathrm{tr}(P_i) + 2^{|AB|} - 4 \right).
 \end{align*}
\end{proof}

\begin{lemma} \label{lem:D2-linear}
The single-copy observable defined in Eq.~\eqref{eq:Q-linear} obeys
\begin{align*}
\mathrm{tr}\left(Q(\sigma)^2\right) = & \mathrm{tr}(P_i \sigma)^2 \mathrm{tr}(P_i) - 2 \mathrm{tr}(P_i \sigma)^2   \\
+& \tfrac{1}{2} \mathrm{tr} \left(\sigma^\Gamma P_i \sigma^\Gamma \right)+ \tfrac{1}{2}\mathrm{tr} \left( P_i \sigma^\Gamma P_i \sigma^\Gamma \right). 
\end{align*}
\end{lemma}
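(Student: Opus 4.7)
The plan is to expand $Q(\sigma)$ into three natural pieces, square and take the trace, and then identify each of the six resulting scalars with the four terms in the claimed formula.

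First, introduce shorthands
\[
\alpha=\mathrm{tr}(P_i\sigma),\quad X=\sum_{a-b=i}\tilde\Pi_a\sigma\tilde\Pi_b,\quad Y=\sum_{a-b=i}\tilde\Pi_b\sigma\tilde\Pi_a,
\]
with $\tilde\Pi_a=\Pi_a\otimes\mathbb I_B$ and $\tilde\Pi_b=\mathbb I_A\otimes\Pi_b$, so that $Q(\sigma)=\alpha P_i-\tfrac12 X-\tfrac12 Y$. Since the trace is cyclic and $P_i^2=P_i$, expanding the square yields
\[
\mathrm{tr}(Q(\sigma)^2)=\alpha^2\mathrm{tr}(P_i)-\alpha\,\mathrm{tr}(P_iX)-\alpha\,\mathrm{tr}(P_iY)+\tfrac14\mathrm{tr}(X^2)+\tfrac14\mathrm{tr}(Y^2)+\tfrac12\mathrm{tr}(XY).
\]
I would handle the six pieces in two groups.

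For the linear-in-$\sigma$ traces, I use the projector identities $(\Pi_{a'}\otimes\Pi_{b'})(\Pi_a\otimes\mathbb I_B)=\delta_{a,a'}\Pi_a\otimes\Pi_{b'}$ and the constraint $a-b=i$ to collapse the double sum that appears in $\mathrm{tr}(P_iX)=\sum_{a-b=i}\sum_{a'-b'=i}\mathrm{tr}\!\big((\Pi_{a'}\otimes\Pi_{b'})\tilde\Pi_a\sigma\tilde\Pi_b\big)$ down to $\sum_{a-b=i}\mathrm{tr}\!\big((\Pi_a\otimes\Pi_b)\sigma\big)=\mathrm{tr}(P_i\sigma)=\alpha$. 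The same argument gives $\mathrm{tr}(P_iY)=\alpha$, contributing $-2\alpha^2$ to the total.

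For the quadratic-in-$\sigma$ traces, the natural move is to write everything in the computational product basis $\{|rs\rangle\}$, where one checks
\[
X_{rs,tu}=\sigma_{rs,tu}\,\mathbb I[|r|-|u|=i],\qquad Y_{rs,tu}=\sigma_{rs,tu}\,\mathbb I[|t|-|s|=i],
\]
with $|r|$ denoting the Hamming weight. Using $(\sigma^\Gamma)_{rs,tu}=\sigma_{ru,ts}$ and $(P_i)_{rs,tu}=\delta_{r,t}\delta_{s,u}\mathbb I[|r|-|s|=i]$, I would directly evaluate
\[
\mathrm{tr}(X^2)=\!\!\sum_{r,s,t,u}\!\mathbb I[|r|-|u|=i]\,\mathbb I[|t|-|s|=i]\,\sigma_{rs,tu}\sigma_{tu,rs}
\]
and recognize, after the relabeling $s\leftrightarrow u$ inside $\sigma^\Gamma$, that the right-hand side equals $\mathrm{tr}(P_i\sigma^\Gamma P_i\sigma^\Gamma)$. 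An identical computation for $Y$ gives $\mathrm{tr}(Y^2)=\mathrm{tr}(P_i\sigma^\Gamma P_i\sigma^\Gamma)$, while the cross term simplifies to the single-constraint sum $\mathrm{tr}(XY)=\sum\mathbb I[|r|-|u|=i]\sigma_{rs,tu}\sigma_{tu,rs}=\mathrm{tr}(\sigma^\Gamma P_i\sigma^\Gamma)$. Assembling the pieces produces exactly the asserted expression.

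The main obstacle is purely bookkeeping: keeping straight the two distinct weight constraints (one on the row indices, one on the column indices) that appear in $X$ and $Y$ when they are rewritten in terms of $\sigma$, and recognizing that the relabeling $(s\leftrightarrow u)$ that effects the partial transpose maps these one-sided constraints onto the two projector insertions in $\mathrm{tr}(P_i\sigma^\Gamma P_i\sigma^\Gamma)$ and $\mathrm{tr}(\sigma^\Gamma P_i\sigma^\Gamma)$. I do not need to assume $\sigma$ is block-diagonal or Hermitian — the identities above hold for arbitrary $\sigma$ — which is important because in the drifting-source setting the individual $\rho_k$ need not share a common symmetry structure beyond being valid states.
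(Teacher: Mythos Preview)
Your proposal is correct and follows essentially the same approach as the paper: the same three-piece decomposition $Q(\sigma)=\alpha P_i-\tfrac12 X-\tfrac12 Y$, the same six-term expansion, and the same use of $\Pi_a\Pi_{a'}=\delta_{a,a'}\Pi_a$ to collapse the cross terms $\mathrm{tr}(P_iX)=\mathrm{tr}(P_iY)=\alpha$. The only technical difference is in how the quadratic pieces are identified with partial-transpose expressions: the paper works at the operator level via the identity $\mathrm{tr}\big((\Pi_{a'}\otimes\Pi_{b'}\,\sigma^\Gamma\,\Pi_a\otimes\Pi_b)^\Gamma\sigma\big)=\mathrm{tr}\big(\Pi_a\otimes\Pi_{b'}\,\sigma\,\Pi_{a'}\otimes\Pi_b\,\sigma\big)$, whereas you unpack everything in the computational basis with Hamming-weight indicators and then relabel to recognise $\mathrm{tr}(P_i\sigma^\Gamma P_i\sigma^\Gamma)$ and $\mathrm{tr}(\sigma^\Gamma P_i\sigma^\Gamma)$. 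This is a presentational variation rather than a different route.
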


\begin{proof}
Use the notation introduced in the previous proof to rewrite
\begin{equation*}
Q(\sigma) = \mathrm{tr}(P_i \sigma) P_i - \tfrac{1}{2} \sum_{a-b=i}\left( \tilde{\Pi}_a \sigma \tilde{\Pi}_b + \tilde{\Pi}_b \sigma \tilde{\Pi}_a \right)
\end{equation*}
and, in turn,
\begin{align*}
\mathrm{tr} \left( Q(\sigma)^2 \right)
=&  \mathrm{tr}(P_i \sigma)^2 \mathrm{tr}(P_i^2) - \mathrm{tr}(P_i \sigma) \sum_{a-b=i} \mathrm{tr} \left( P_i \tilde{\Pi}_a \sigma \tilde{\Pi}_b \right) \\
-& \mathrm{tr}(P_i \sigma) \sum_{a-b=i} \mathrm{tr} \left( P_i \tilde{\Pi}_b \sigma \tilde{\Pi}_a \right) \\
+& \tfrac{1}{4} \sum_{a-b=i} \sum_{a'-b'=i} \mathrm{tr} \left( \tilde{\Pi}_a \sigma \tilde{\Pi}_b \tilde{\Pi}_{a'} \sigma \tilde{\Pi}_{b'} \right) \\
+& \tfrac{1}{2} \sum_{a-b=i} \sum_{a'-b'=i} \mathrm{tr} \left( \tilde{\Pi}_a \sigma \tilde{\Pi}_b \tilde{\Pi}_{b'} \sigma \tilde{\Pi}_{a'} \right) \\
+& \tfrac{1}{4} \sum_{a-b=i} \sum_{a'-b'=i} \mathrm{tr} \left( \tilde{\Pi}_b \sigma \tilde{\Pi}_a \tilde{\Pi}_{b'} \sigma \tilde{\Pi}_{a'} \right).
\end{align*}
These terms can be simplified individually. The first term is already almost as simple as possible. For the second term, use $P_i = \sum_{a-b=i}\Pi_a \otimes \Pi_b$, as well as orthogonality relations among the projectors ($\Pi_a \Pi_{a'} = \delta_{a,a'} \Pi_a$ and $\Pi_b \Pi_{b'}=\delta_{b,b'}\Pi_b$), to conclude
\begin{align*}
\sum_{a-b=i} \mathrm{tr} \left( P_i \tilde{\Pi}_a \sigma \tilde{\Pi}_b \right)
=& \sum_{a-b=i} \mathrm{tr} \left( \tilde{\Pi}_b P_i \tilde{\Pi}_a \sigma \right) 
= \mathrm{tr}(P_i \sigma ) .
\end{align*}
The third term is exactly the same.
Terms four and six are also equivalent, because of commutation relations ($\tilde{\Pi}_a \tilde{\Pi}_b = \tilde{\Pi}_b \tilde{\Pi}_a$) and cyclicity of the trace. We can use some partial transpose tricks to obtain:
\begin{align*}
& \tfrac{1}{4}\sum_{a-b=i} \sum_{a'-b'=i} \left(\mathrm{tr} \left( \tilde{\Pi}_a \sigma \tilde{\Pi}_b \tilde{\Pi}_{a'} \sigma \tilde{\Pi}_{b'} \right)
+ \mathrm{tr} \left( \tilde{\Pi}_b \sigma \tilde{\Pi}_a \tilde{\Pi}_{b'} \sigma \tilde{\Pi}_{a'} \right) \right)\\
 =&\tfrac{1}{2} \sum_{a-b=i} \sum_{a'-b'=i} \mathrm{tr} \left(\Pi_a \otimes \Pi_{b'} \sigma \Pi_{a'} \otimes \Pi_b \sigma \right) \\
 =& \tfrac{1}{2}\sum_{a-b=i} \sum_{a'-b'=i} \mathrm{tr} \left(
\left( \Pi_{a'} \otimes \Pi_{b'} \sigma^\Gamma \Pi_a \otimes \Pi_b \right)^\Gamma \sigma \right) \\
=& \tfrac{1}{2}\mathrm{tr} \left( P_i \sigma^\Gamma P_i \sigma^\Gamma \right).
\end{align*}
Similar tricks apply to the fifth term:
\begin{align*}
& \tfrac{1}{2} \sum_{a-b=i} \sum_{a'-b'=i} \mathrm{tr} \left( \tilde{\Pi}_a \sigma \tilde{\Pi}_b \tilde{\Pi}_{b'} \sigma \tilde{\Pi}_{a'} \right) \\
=& \tfrac{1}{2} \sum_{a-b=i} \sum_{a'-b'=i} \delta_{a,a'} \delta_{b,b'} \mathrm{tr} \left( \Pi_a \otimes \mathbb{I}_B \sigma \mathbb{I}_A \otimes \Pi_b \sigma \right) \\
=& \tfrac{1}{2} \sum_{a-b=i} \mathrm{tr} \left( \left( \mathbb{I}_A \otimes \mathbb{I}_B \sigma^\Gamma \Pi_a \otimes \Pi_b \right)^\Gamma \sigma \right) \\
=& \tfrac{1}{2} \mathrm{tr} \left( \sigma^\Gamma P_i \sigma^\Gamma \right).
\end{align*}
The advertised bound now follows from putting everything together.
\end{proof}

\subsection{$D_2$ error bounds and confidence intervals}

We can combine the weight estimates \eqref{eq:D2-full-weight}, \eqref{eq:D2-reduced-weight} with the operator norm bound from Eq.~\eqref{eq:D2-operator} and the squared Hilbert-Schmidt norm bounds from Lemma~\ref{lem:D2-quadratic}, as well as Lemma~\ref{lem:D2-linear}, and insert them into the general error bound from Eq.~\eqref{eq:main-bound}. To simplify later computations, we also replace $1/N$-factors by $1/(N-1)$. For accuracy $\epsilon \in (0,1)$, we obtain
\begin{align*}
& \mathrm{Pr} \left[ \left| \widehat{D}^{(i)}_{2,(N)} - D_2^{(i)}\left( \rho_{\mathrm{avg}}\right) \right| \geq \epsilon \right] \\
\leq & 4 \max_{1 \leq k \leq N}\tfrac{2^{\mathrm{w}(Q(\rho_k))} \mathrm{tr}(Q(\rho_k)^2)}{\epsilon^2 (N-1)} + 2\tfrac{2^{\mathrm{w}(Q)}\mathrm{tr}(Q^2)}{ (N-1)^2\epsilon^2}+ 4 \tfrac{  \|Q \|_\infty^2}{(N-1)^2 \epsilon^2}
\nonumber \\
\leq & \tfrac{2^{n+m} \mathrm{tr}(P_i)}{\epsilon^2 (N-1)}\max_{1 \leq k \leq N} 4 \left( \mathrm{tr}(P_i \rho_k)^2  + \tfrac{ \mathrm{tr} \left( \rho^\Gamma_k P_i \rho^\Gamma_k \right) - 2 \mathrm{tr}(P_i \rho_k)^2 }{ \mathrm{tr}(P_i)} \right) \\
+& \tfrac{2^{3(n+m)}\mathrm{tr}(P_i)}{(N-1)^2 \epsilon^2}
\left(1 + \tfrac{3 \mathrm{tr}(P_i)  -4 + 8/(\mathrm{tr}(P_i) 2^{2(n+m)})  }{2^{n+m}}\right).
\end{align*}
Here, the first terms in each parenthesis are the leading contributions.
The two remaining factors are of order one and do depend on the particular problem in question. Assuming $ \mathrm{tr}(P_i) \geq 2$ (non-trivial symmetry sector) and $n+m \geq 4$ (at least 4 qubits) ensures
\begin{align}
C_1 :=& \max_{1 \leq k \leq N} 4 \left( \mathrm{tr}(P_i \rho_k)^2  + \tfrac{ \mathrm{tr} \left( \rho^\Gamma_k P_i \rho^\Gamma_k \right) - 2 \mathrm{tr}(P_i \rho_k)^2 }{ \mathrm{tr}(P_i)} \right)  \notag \\
& \leq 4 \left(1- \tfrac{1}{\mathrm{tr}(P_i)} \right),  \label{eq:D2-C1} \\
C_2 :=& \left(1 + \tfrac{3 \mathrm{tr}(P_i)  -4 + 8/(\mathrm{tr}(P_i) 2^{2(n+m)})  }{2^{n+m}}\right)
\leq 2 \label{eq:D2-C2}.
\end{align}
In order to get an error bound, we need to answer the following question: How large does $N$ have to be in order to ensure that this upper bound does not exceed $\delta$?
This is equivalent to demanding
\begin{equation}
(N-1)^2  \geq  (N-1)  C_1 \frac{2^{n+m}\mathrm{tr}(P_i)}{\epsilon^2 \delta} 
+ C_2 \frac{2^{3(n+m)}\mathrm{tr}(P_i)}{\epsilon^2 \delta} \label{eq:final-relation}
\end{equation}
and can be answered by solving a quadratic equation in $(N-1)$.
Doing so implies the main result of this appendix section.

\begin{thm}[Error bound for $D_2$] \label{thm:D2}
Fix $\epsilon,\delta \in (0,1)$, a bipartition $AB$, as well as a symmetry sector $i$.
Let $C_1,C_2$ be the problem-dependent constants introduced in Eqs.~\eqref{eq:D2-C1},\eqref{eq:D2-C2} and suppose that we perform
\begin{align*}
N \geq \frac{2^{n+m}\mathrm{tr}(P_i)}{\epsilon^2 \delta} \frac{1}{2}
\left(
C_1 + \sqrt{C_1^2 + C_2 \frac{\epsilon^2 \delta 2^{n+m}}{\mathrm{tr}(P_i)}}
\right)+1
\end{align*}
randomized, single qubit measurements on independent states $\rho_1,\ldots,\rho_N$.
Then, the $D_2$-estimator~\eqref{eq:D2-estimator} obeys
\begin{equation*}
\left| \widehat{D}^{(i)}_{2,(N)}-D_2^{(i)}(\rho_{\mathrm{avg}}) \right| \leq \epsilon \quad \text{with prob.\ (at least) $1-\delta$.}
\end{equation*}
\end{thm}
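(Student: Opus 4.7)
The plan is to combine the general quadratic U-statistics tail bound from Eq.~\eqref{eq:main-bound} with the problem-specific estimates on $Q$ and $Q(\sigma)$ that were established earlier in the appendix, and then to invert the resulting inequality to obtain a sample-complexity lower bound on $N$.

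First I would collect the four ingredients we already have at our disposal. The weights of the relevant observables are fixed by Eqs.~\eqref{eq:D2-full-weight}--\eqref{eq:D2-reduced-weight}: $\mathrm{w}(Q)=2(n+m)$ and $\mathrm{w}(Q(\sigma))=n+m$ for every $\sigma$ in the support of the source. The operator norm is bounded by $\|Q\|_\infty\leq 2$ (Eq.~\eqref{eq:D2-operator}). Finally, the two squared Hilbert--Schmidt norms are controlled by Lemma~\ref{lem:D2-quadratic} and Lemma~\ref{lem:D2-linear}. Substituting all four inputs into Eq.~\eqref{eq:main-bound}, and replacing one factor of $1/N$ by $1/(N-1)$ (to keep the denominators uniform, at the price of only mild slack), yields a tail bound of the form
\begin{equation*}
\mathrm{Pr}\bigl[|\widehat{D}^{(i)}_{2,(N)}-D_2^{(i)}(\rho_{\mathrm{avg}})|\geq \epsilon\bigr]\leq \frac{C_1\,2^{n+m}\mathrm{tr}(P_i)}{\epsilon^2 (N-1)}+\frac{C_2\,2^{3(n+m)}\mathrm{tr}(P_i)}{\epsilon^2 (N-1)^2},
\end{equation*}
where $C_1,C_2$ are the dimensionless, problem-dependent constants introduced in Eqs.~\eqref{eq:D2-C1}--\eqref{eq:D2-C2}. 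The assumptions $\mathrm{tr}(P_i)\geq 2$ and $n+m\geq 4$ would be used only to ensure that $C_1\leq 4(1-1/\mathrm{tr}(P_i))$ and $C_2\leq 2$, which is precisely what allows the simpler statement in Theorem~\ref{thm:r1} in the main text.

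The second step is to impose that the right-hand side above is at most $\delta$ and solve for $N$. This is equivalent to the quadratic inequality in $(N-1)$ displayed as Eq.~\eqref{eq:final-relation}. Applying the quadratic formula and discarding the negative root, the inequality is satisfied as soon as
\begin{equation*}
N-1\geq \frac{2^{n+m}\mathrm{tr}(P_i)}{\epsilon^2\delta}\,\frac{1}{2}\Bigl(C_1+\sqrt{C_1^2+C_2\,\tfrac{\epsilon^2\delta\,2^{n+m}}{\mathrm{tr}(P_i)}}\Bigr),
\end{equation*}
which is exactly the bound stated in the theorem.

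There is no real conceptual obstacle beyond careful bookkeeping. The only nontrivial analytic input is the variance calculation that produced Eq.~\eqref{eq:main-bound}, which already exploits independence of the source copies to kill the cross terms arising from the quadruple sum, and the two Hilbert--Schmidt computations in Lemmas~\ref{lem:D2-quadratic}--\ref{lem:D2-linear}; both are available. What one must be careful about is the asymmetric scaling of the two surviving contributions in $N$: the linear-in-$(N-1)^{-1}$ term, dominant for large $N$, is controlled by the single-copy variance proxy $2^{\mathrm{w}(Q(\sigma))}\mathrm{tr}(Q(\sigma)^2)$, whereas the $(N-1)^{-2}$ term, dominant in the small-$N$ regime, is controlled by $2^{\mathrm{w}(Q)}\mathrm{tr}(Q^2)$ and by the deterministic-bias term $\|Q\|_\infty^2$ from Lemma~\ref{lem:bias}. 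Balancing these two regimes is precisely what the quadratic formula does, and combining it with the uniform upper bounds on $C_1,C_2$ yields the clean form stated as Theorem~\ref{thm:r1} in the main text.
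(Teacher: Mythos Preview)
Your proposal is correct and follows essentially the same route as the paper: insert the weight estimates~\eqref{eq:D2-reduced-weight}--\eqref{eq:D2-full-weight}, the operator norm bound~\eqref{eq:D2-operator}, and the Hilbert--Schmidt bounds from Lemmas~\ref{lem:D2-quadratic} and~\ref{lem:D2-linear} into the general tail bound~\eqref{eq:main-bound}, replace $1/N$ by $1/(N-1)$, package the resulting coefficients as $C_1,C_2$, set the bound equal to $\delta$, and solve the resulting quadratic~\eqref{eq:final-relation} in $(N-1)$. Your identification of which term dominates in which regime, and the role of the bias contribution from Lemma~\ref{lem:bias}, match the paper's reasoning exactly.
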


This error bound addresses the estimation of $D_2^{(i) }(\rho_{\mathrm{avg}})$ in terms of a single U-statistics estimator. 
The poor scaling in $1/\delta$ can be exponentially improved by dividing the classical shadow into equally-sized batches and performing a median-of-U-statistics estimation instead \cite{huang2020shadow}, which reduced the scaling with $1/\delta$ to a scaling with $\text{const} \times \log (1/\delta)$. However, numerical experiments conducted in Ref.~\cite{EKH20} suggest that this trade-off is only worthwhile if one attempts to predict many properties with the same data set.

Alternatively, we can also fix a 
maximum failure probability $\delta$ and a total measurement budget $N$. Reformulating Rel.~\eqref{eq:final-relation} then provides us with an upper bound on the (squared) approximation accuracy.
This provides us with a statistically sound confidence interval around the estimated polynomial. 

\begin{cor}[confidence interval for $D_2$] \label{cor:D2}
Fix a bipartition $AB$, a symmetry sector $i$, a confidence level $\delta \in (0,1)$ and a measurement budget $N$ (comprised of independent states). Then, 
with probability (at least) $1-\delta$,
\begin{align*}
D_2^{(i)} \left( \rho_{\mathrm{avg}}\right) \in &
\left[ \widehat{D}^{(i)}_{2,(N)}-\epsilon, \widehat{D}^{(i)}_{2,(N)}+\epsilon \right], \quad\text{where} \\
\epsilon =&\sqrt{\frac{2^{n+m}\mathrm{tr}(P_i)}{\delta (N-1)} \left( C_1 + C_2 \frac{2^{2(n+m)}}{N-1}\right)}.
\end{align*}
\end{cor}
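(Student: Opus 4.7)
The plan is to invert the tail bound that was already assembled inside the proof of Theorem~\ref{thm:D2} and read it off as a confidence interval. Specifically, combining the quadratic Markov/Chebyshev-type inequality~\eqref{eq:main-bound} with the operator-norm bound from~\eqref{eq:D2-operator}, the weight identities~\eqref{eq:D2-full-weight}--\eqref{eq:D2-reduced-weight}, and the Hilbert--Schmidt bounds of Lemma~\ref{lem:D2-quadratic} and Lemma~\ref{lem:D2-linear} (all of which have already been derived and packaged into the constants $C_1,C_2$ of Eqs.~\eqref{eq:D2-C1}--\eqref{eq:D2-C2}), one obtains, for every $\epsilon\in(0,1)$,
\begin{equation*}
\mathrm{Pr}\!\left[\left|\widehat{D}^{(i)}_{2,(N)}-D_2^{(i)}(\rho_{\mathrm{avg}})\right|\geq\epsilon\right]\leq \frac{2^{n+m}\mathrm{tr}(P_i)\,C_1}{\epsilon^2(N-1)}+\frac{2^{3(n+m)}\mathrm{tr}(P_i)\,C_2}{\epsilon^2(N-1)^2}.
\end{equation*}
This single inequality is the only analytical ingredient needed.

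Given a target confidence level $\delta$, I would then impose that the right-hand side be at most $\delta$, i.e.
\begin{equation*}
\epsilon^{2}\,\delta \;\geq\; \frac{2^{n+m}\mathrm{tr}(P_i)\,C_1}{N-1}+\frac{2^{3(n+m)}\mathrm{tr}(P_i)\,C_2}{(N-1)^2}.
\end{equation*}
In contrast to Theorem~\ref{thm:D2}, where $\epsilon$ and $\delta$ are fixed and one solves a quadratic in $N-1$, here $N$ and $\delta$ are fixed and the inequality is already linear in $\epsilon^2$. A direct rearrangement therefore yields
\begin{equation*}
\epsilon^{2}\;\geq\;\frac{2^{n+m}\mathrm{tr}(P_i)}{\delta(N-1)}\!\left(C_1+C_2\,\frac{2^{2(n+m)}}{N-1}\right),
\end{equation*}
which exactly reproduces the $\epsilon$ of the Corollary. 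Substituting this $\epsilon$ back into the tail bound gives a failure probability of at most $\delta$, and taking the complementary event turns $\{|\widehat{D}^{(i)}_{2,(N)}-D_2^{(i)}(\rho_{\mathrm{avg}})|\leq\epsilon\}$ into the advertised confidence interval $[\widehat{D}^{(i)}_{2,(N)}-\epsilon,\,\widehat{D}^{(i)}_{2,(N)}+\epsilon]$.

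There is no real obstacle beyond bookkeeping: the whole statement is essentially an algebraic rearrangement of an inequality already proved for Theorem~\ref{thm:D2}. The only points that require a little care are (i) verifying that the choice of $\epsilon$ above still lies in the regime $\epsilon\in(0,1)$ in which the bounds on $C_1,C_2$ are valid (if $\epsilon\geq 1$ the confidence interval becomes trivial and nothing needs to be proved), and (ii) making sure that the upper bounds~\eqref{eq:D2-C1}--\eqref{eq:D2-C2} used to obtain $C_1,C_2$ are in fact uniform over all source states $\rho_1,\ldots,\rho_N$, so that the probabilistic statement applies to the genuinely drifting-source setting rather than to an \emph{iid} idealization. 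Both points are immediate from the way $C_1,C_2$ were defined.
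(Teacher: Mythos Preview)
Your proposal is correct and follows essentially the same route as the paper: the paper itself does not give a separate proof of the Corollary, but simply remarks that ``reformulating Rel.~\eqref{eq:final-relation}'' with $N$ and $\delta$ fixed yields the bound on $\epsilon^2$, which is precisely the algebraic rearrangement you carry out. Your additional sanity checks on the range of $\epsilon$ and the uniformity of $C_1,C_2$ over all source states are reasonable remarks that the paper leaves implicit.
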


Again, median-of-U-statistics estimation allows for improving the dependence on $1/\delta$ exponentially at the cost of a extra constant (Ref.~\cite{huang2020shadow}, for instance, achieves $\mathrm{const} \approx 68$).

\bibliography{Bibliography_EntSymm,Biblio_BV}

\begin{thebibliography}{79}%
\makeatletter
\providecommand \@ifxundefined [1]{%
 \@ifx{#1\undefined}
}%
\providecommand \@ifnum [1]{%
 \ifnum #1\expandafter \@firstoftwo
 \else \expandafter \@secondoftwo
 \fi
}%
\providecommand \@ifx [1]{%
 \ifx #1\expandafter \@firstoftwo
 \else \expandafter \@secondoftwo
 \fi
}%
\providecommand \natexlab [1]{#1}%
\providecommand \enquote  [1]{``#1''}%
\providecommand \bibnamefont  [1]{#1}%
\providecommand \bibfnamefont [1]{#1}%
\providecommand \citenamefont [1]{#1}%
\providecommand \href@noop [0]{\@secondoftwo}%
\providecommand \href [0]{\begingroup \@sanitize@url \@href}%
\providecommand \@href[1]{\@@startlink{#1}\@@href}%
\providecommand \@@href[1]{\endgroup#1\@@endlink}%
\providecommand \@sanitize@url [0]{\catcode `\\12\catcode `\$12\catcode
  `\&12\catcode `\#12\catcode `\^12\catcode `\_12\catcode `\%12\relax}%
\providecommand \@@startlink[1]{}%
\providecommand \@@endlink[0]{}%
\providecommand \url  [0]{\begingroup\@sanitize@url \@url }%
\providecommand \@url [1]{\endgroup\@href {#1}{\urlprefix }}%
\providecommand \urlprefix  [0]{URL }%
\providecommand \Eprint [0]{\href }%
\providecommand \doibase [0]{http://dx.doi.org/}%
\providecommand \selectlanguage [0]{\@gobble}%
\providecommand \bibinfo  [0]{\@secondoftwo}%
\providecommand \bibfield  [0]{\@secondoftwo}%
\providecommand \translation [1]{[#1]}%
\providecommand \BibitemOpen [0]{}%
\providecommand \bibitemStop [0]{}%
\providecommand \bibitemNoStop [0]{.\EOS\space}%
\providecommand \EOS [0]{\spacefactor3000\relax}%
\providecommand \BibitemShut  [1]{\csname bibitem#1\endcsname}%
\let\auto@bib@innerbib\@empty
\bibitem [{\citenamefont {Deutsch}(2020)}]{Deutsch2020}%
  \BibitemOpen
  \bibfield  {author} {\bibinfo {author} {\bibfnamefont {I.~H.}\ \bibnamefont
  {Deutsch}},\ }\href@noop {} {\bibfield  {journal} {\bibinfo  {journal} {PRX
  Quantum}\ }\textbf {\bibinfo {volume} {1}},\ \bibinfo {pages} {020101}
  (\bibinfo {year} {2020})}\BibitemShut {NoStop}%
\bibitem [{\citenamefont {Preskill}(2018)}]{Preskill2018}%
  \BibitemOpen
  \bibfield  {author} {\bibinfo {author} {\bibfnamefont {J.}~\bibnamefont
  {Preskill}},\ }\href@noop {} {\bibfield  {journal} {\bibinfo  {journal}
  {Quantum}\ }\textbf {\bibinfo {volume} {2}},\ \bibinfo {pages} {79} (\bibinfo
  {year} {2018})}\BibitemShut {NoStop}%
\bibitem [{\citenamefont {{National Academies of Sciences, Engineering, and
  Medicine}}(2020)}]{NASEM2020}%
  \BibitemOpen
  \bibfield  {author} {\bibinfo {author} {\bibnamefont {{National Academies of
  Sciences, Engineering, and Medicine}}},\ }\href@noop {} {\emph {\bibinfo
  {title} {Manipulating Quantum Systems: An Assessment of Atomic, Molecular,
  and Optical Physics in the United States}}}\ (\bibinfo  {publisher} {The
  National Academies Press},\ \bibinfo {year} {2020})\BibitemShut {NoStop}%
\bibitem [{Eis()}]{Eisert}%
  \BibitemOpen
  \href@noop {} {}\bibinfo {note} {{J. Eisert, D. Hangleiter, N. Walk, I. Roth,
  D. Markham, R. Parekh, U. Chabaud, and E. Kashefi, Nat. Rev. Phys.
  \textbf{2}, 382 (2020).}}\BibitemShut {Stop}%
\bibitem [{\citenamefont {Horodecki}\ \emph {et~al.}(1996)\citenamefont
  {Horodecki}, \citenamefont {Horodecki},\ and\ \citenamefont
  {Horodecki}}]{HHH96}%
  \BibitemOpen
  \bibfield  {author} {\bibinfo {author} {\bibfnamefont {M.}~\bibnamefont
  {Horodecki}}, \bibinfo {author} {\bibfnamefont {P.}~\bibnamefont
  {Horodecki}}, \ and\ \bibinfo {author} {\bibfnamefont {R.}~\bibnamefont
  {Horodecki}},\ }\href {\doibase
  https://doi.org/10.1016/S0375-9601(96)00706-2} {\bibfield  {journal}
  {\bibinfo  {journal} {Physics Letters A}\ }\textbf {\bibinfo {volume}
  {223}},\ \bibinfo {pages} {1} (\bibinfo {year} {1996})}\BibitemShut {NoStop}%
\bibitem [{Note1()}]{Note1}%
  \BibitemOpen
  \bibinfo {note} {When this functional is linear, it can be identified with an
  observable whose expectation value can be used to decide whether the target
  state is entangled or not.}\BibitemShut {Stop}%
\bibitem [{\citenamefont {Horodecki}\ \emph {et~al.}(2009)\citenamefont
  {Horodecki}, \citenamefont {Horodecki}, \citenamefont {Horodecki},\ and\
  \citenamefont {Horodecki}}]{HHHH09}%
  \BibitemOpen
  \bibfield  {author} {\bibinfo {author} {\bibfnamefont {R.}~\bibnamefont
  {Horodecki}}, \bibinfo {author} {\bibfnamefont {P.}~\bibnamefont
  {Horodecki}}, \bibinfo {author} {\bibfnamefont {M.}~\bibnamefont
  {Horodecki}}, \ and\ \bibinfo {author} {\bibfnamefont {K.}~\bibnamefont
  {Horodecki}},\ }\href {\doibase 10.1103/RevModPhys.81.865} {\bibfield
  {journal} {\bibinfo  {journal} {Rev. Mod. Phys.}\ }\textbf {\bibinfo {volume}
  {81}},\ \bibinfo {pages} {865} (\bibinfo {year} {2009})}\BibitemShut
  {NoStop}%
\bibitem [{\citenamefont {G\"uhne}\ and\ \citenamefont
  {T\`oth}(2009)}]{Guehne_review}%
  \BibitemOpen
  \bibfield  {author} {\bibinfo {author} {\bibfnamefont {O.}~\bibnamefont
  {G\"uhne}}\ and\ \bibinfo {author} {\bibfnamefont {G.}~\bibnamefont
  {T\`oth}},\ }\href@noop {} {\bibfield  {journal} {\bibinfo  {journal} {Phys.
  Rep.}\ }\textbf {\bibinfo {volume} {474}},\ \bibinfo {pages} {1} (\bibinfo
  {year} {2009})}\BibitemShut {NoStop}%
\bibitem [{\citenamefont {Amico}\ \emph {et~al.}(2008)\citenamefont {Amico},
  \citenamefont {Fazio}, \citenamefont {Osterloh},\ and\ \citenamefont
  {Vedral}}]{AFOV08}%
  \BibitemOpen
  \bibfield  {author} {\bibinfo {author} {\bibfnamefont {L.}~\bibnamefont
  {Amico}}, \bibinfo {author} {\bibfnamefont {R.}~\bibnamefont {Fazio}},
  \bibinfo {author} {\bibfnamefont {A.}~\bibnamefont {Osterloh}}, \ and\
  \bibinfo {author} {\bibfnamefont {V.}~\bibnamefont {Vedral}},\ }\href
  {\doibase 10.1103/RevModPhys.80.517} {\bibfield  {journal} {\bibinfo
  {journal} {Rev. Mod. Phys.}\ }\textbf {\bibinfo {volume} {80}},\ \bibinfo
  {pages} {517} (\bibinfo {year} {2008})}\BibitemShut {NoStop}%
\bibitem [{\citenamefont {Peres}(1996)}]{Peres96}%
  \BibitemOpen
  \bibfield  {author} {\bibinfo {author} {\bibfnamefont {A.}~\bibnamefont
  {Peres}},\ }\href@noop {} {\bibfield  {journal} {\bibinfo  {journal} {Phys.
  Rev. Lett. \textbf{77}}\ ,\ \bibinfo {pages} {1413}} (\bibinfo {year}
  {1996})}\BibitemShut {NoStop}%
\bibitem [{\citenamefont {Vidal}\ and\ \citenamefont {Werner}(2002)}]{VW02}%
  \BibitemOpen
  \bibfield  {author} {\bibinfo {author} {\bibfnamefont {G.}~\bibnamefont
  {Vidal}}\ and\ \bibinfo {author} {\bibfnamefont {R.~F.}\ \bibnamefont
  {Werner}},\ }\href {\doibase 10.1103/PhysRevA.65.032314} {\bibfield
  {journal} {\bibinfo  {journal} {Phys. Rev. A}\ }\textbf {\bibinfo {volume}
  {65}},\ \bibinfo {pages} {032314} (\bibinfo {year} {2002})}\BibitemShut
  {NoStop}%
\bibitem [{\citenamefont {Plenio}(2005)}]{Pl05}%
  \BibitemOpen
  \bibfield  {author} {\bibinfo {author} {\bibfnamefont {M.~B.}\ \bibnamefont
  {Plenio}},\ }\href {\doibase 10.1103/PhysRevLett.95.090503} {\bibfield
  {journal} {\bibinfo  {journal} {Phys. Rev. Lett.}\ }\textbf {\bibinfo
  {volume} {95}},\ \bibinfo {pages} {090503} (\bibinfo {year}
  {2005})}\BibitemShut {NoStop}%
\bibitem [{\citenamefont {Calabrese}\ \emph {et~al.}(2012)\citenamefont
  {Calabrese}, \citenamefont {Cardy},\ and\ \citenamefont {Tonni}}]{cct-2012}%
  \BibitemOpen
  \bibfield  {author} {\bibinfo {author} {\bibfnamefont {P.}~\bibnamefont
  {Calabrese}}, \bibinfo {author} {\bibfnamefont {J.}~\bibnamefont {Cardy}}, \
  and\ \bibinfo {author} {\bibfnamefont {E.}~\bibnamefont {Tonni}},\ }\href
  {\doibase 10.1103/PhysRevLett.109.130502} {\bibfield  {journal} {\bibinfo
  {journal} {Phys. Rev. Lett.}\ }\textbf {\bibinfo {volume} {109}},\ \bibinfo
  {pages} {130502} (\bibinfo {year} {2012})}\BibitemShut {NoStop}%
\bibitem [{\citenamefont {Calabrese}\ \emph {et~al.}(2013)\citenamefont
  {Calabrese}, \citenamefont {Cardy},\ and\ \citenamefont {Tonni}}]{cct-2013}%
  \BibitemOpen
  \bibfield  {author} {\bibinfo {author} {\bibfnamefont {P.}~\bibnamefont
  {Calabrese}}, \bibinfo {author} {\bibfnamefont {J.}~\bibnamefont {Cardy}}, \
  and\ \bibinfo {author} {\bibfnamefont {E.}~\bibnamefont {Tonni}},\ }\href
  {\doibase 10.1088/1742-5468/2013/02/p02008} {\bibfield  {journal} {\bibinfo
  {journal} {Journal of Statistical Mechanics: Theory and Experiment}\ }\textbf
  {\bibinfo {volume} {2013}},\ \bibinfo {pages} {P02008} (\bibinfo {year}
  {2013})}\BibitemShut {NoStop}%
\bibitem [{\citenamefont {Castelnovo}(2013)}]{castelnovo-2013}%
  \BibitemOpen
  \bibfield  {author} {\bibinfo {author} {\bibfnamefont {C.}~\bibnamefont
  {Castelnovo}},\ }\href {\doibase 10.1103/PhysRevA.88.042319} {\bibfield
  {journal} {\bibinfo  {journal} {Phys. Rev. A}\ }\textbf {\bibinfo {volume}
  {88}},\ \bibinfo {pages} {042319} (\bibinfo {year} {2013})}\BibitemShut
  {NoStop}%
\bibitem [{\citenamefont {Eisler}\ and\ \citenamefont
  {Zimbor{\'{a}}s}(2014)}]{Eisler2014B}%
  \BibitemOpen
  \bibfield  {author} {\bibinfo {author} {\bibfnamefont {V.}~\bibnamefont
  {Eisler}}\ and\ \bibinfo {author} {\bibfnamefont {Z.}~\bibnamefont
  {Zimbor{\'{a}}s}},\ }\href {\doibase 10.1088/1367-2630/16/12/123020}
  {\bibfield  {journal} {\bibinfo  {journal} {New Journal of Physics}\ }\textbf
  {\bibinfo {volume} {16}},\ \bibinfo {pages} {123020} (\bibinfo {year}
  {2014})}\BibitemShut {NoStop}%
\bibitem [{\citenamefont {Wen}\ \emph {et~al.}(2016)\citenamefont {Wen},
  \citenamefont {Chang},\ and\ \citenamefont {Ryu}}]{Wen2016B}%
  \BibitemOpen
  \bibfield  {author} {\bibinfo {author} {\bibfnamefont {X.}~\bibnamefont
  {Wen}}, \bibinfo {author} {\bibfnamefont {P.-Y.}\ \bibnamefont {Chang}}, \
  and\ \bibinfo {author} {\bibfnamefont {S.}~\bibnamefont {Ryu}},\ }\href
  {\doibase 10.1007/JHEP09(2016)012} {\bibfield  {journal} {\bibinfo  {journal}
  {Journal of High Energy Physics}\ }\textbf {\bibinfo {volume} {2016}},\
  \bibinfo {pages} {12} (\bibinfo {year} {2016})}\BibitemShut {NoStop}%
\bibitem [{\citenamefont {Ruggiero}\ \emph
  {et~al.}(2016{\natexlab{a}})\citenamefont {Ruggiero}, \citenamefont {Alba},\
  and\ \citenamefont {Calabrese}}]{rac-2016-2}%
  \BibitemOpen
  \bibfield  {author} {\bibinfo {author} {\bibfnamefont {P.}~\bibnamefont
  {Ruggiero}}, \bibinfo {author} {\bibfnamefont {V.}~\bibnamefont {Alba}}, \
  and\ \bibinfo {author} {\bibfnamefont {P.}~\bibnamefont {Calabrese}},\ }\href
  {\doibase 10.1103/PhysRevB.94.035152} {\bibfield  {journal} {\bibinfo
  {journal} {Phys. Rev. B}\ }\textbf {\bibinfo {volume} {94}},\ \bibinfo
  {pages} {035152} (\bibinfo {year} {2016}{\natexlab{a}})}\BibitemShut
  {NoStop}%
\bibitem [{\citenamefont {Blondeau-Fournier}\ \emph {et~al.}(2016)\citenamefont
  {Blondeau-Fournier}, \citenamefont {Castro-Alvaredo},\ and\ \citenamefont
  {Doyon}}]{Blondeau_Fournier_2016}%
  \BibitemOpen
  \bibfield  {author} {\bibinfo {author} {\bibfnamefont {O.}~\bibnamefont
  {Blondeau-Fournier}}, \bibinfo {author} {\bibfnamefont {O.~A.}\ \bibnamefont
  {Castro-Alvaredo}}, \ and\ \bibinfo {author} {\bibfnamefont {B.}~\bibnamefont
  {Doyon}},\ }\href {\doibase 10.1088/1751-8113/49/12/125401} {\bibfield
  {journal} {\bibinfo  {journal} {Journal of Physics A: Mathematical and
  Theoretical}\ }\textbf {\bibinfo {volume} {49}},\ \bibinfo {pages} {125401}
  (\bibinfo {year} {2016})}\BibitemShut {NoStop}%
\bibitem [{\citenamefont {Ruggiero}\ \emph
  {et~al.}(2016{\natexlab{b}})\citenamefont {Ruggiero}, \citenamefont {Alba},\
  and\ \citenamefont {Calabrese}}]{rac-2016}%
  \BibitemOpen
  \bibfield  {author} {\bibinfo {author} {\bibfnamefont {P.}~\bibnamefont
  {Ruggiero}}, \bibinfo {author} {\bibfnamefont {V.}~\bibnamefont {Alba}}, \
  and\ \bibinfo {author} {\bibfnamefont {P.}~\bibnamefont {Calabrese}},\ }\href
  {\doibase 10.1103/PhysRevB.94.195121} {\bibfield  {journal} {\bibinfo
  {journal} {Phys. Rev. B}\ }\textbf {\bibinfo {volume} {94}},\ \bibinfo
  {pages} {195121} (\bibinfo {year} {2016}{\natexlab{b}})}\BibitemShut
  {NoStop}%
\bibitem [{\citenamefont {Elben}\ \emph {et~al.}(2020)\citenamefont {Elben},
  \citenamefont {Kueng}, \citenamefont {Huang}, \citenamefont {van Bijnen},
  \citenamefont {Kokail}, \citenamefont {Dalmonte}, \citenamefont {Calabrese},
  \citenamefont {Kraus}, \citenamefont {Preskill}, \citenamefont {Zoller},\
  and\ \citenamefont {Vermersch}}]{EKH20}%
  \BibitemOpen
  \bibfield  {author} {\bibinfo {author} {\bibfnamefont {A.}~\bibnamefont
  {Elben}}, \bibinfo {author} {\bibfnamefont {R.}~\bibnamefont {Kueng}},
  \bibinfo {author} {\bibfnamefont {H.-Y.}\ \bibnamefont {Huang}}, \bibinfo
  {author} {\bibfnamefont {R.}~\bibnamefont {van Bijnen}}, \bibinfo {author}
  {\bibfnamefont {C.}~\bibnamefont {Kokail}}, \bibinfo {author} {\bibfnamefont
  {M.}~\bibnamefont {Dalmonte}}, \bibinfo {author} {\bibfnamefont
  {P.}~\bibnamefont {Calabrese}}, \bibinfo {author} {\bibfnamefont
  {B.}~\bibnamefont {Kraus}}, \bibinfo {author} {\bibfnamefont
  {J.}~\bibnamefont {Preskill}}, \bibinfo {author} {\bibfnamefont
  {P.}~\bibnamefont {Zoller}}, \ and\ \bibinfo {author} {\bibfnamefont
  {B.}~\bibnamefont {Vermersch}},\ }\href@noop {} {\bibfield  {journal}
  {\bibinfo  {journal} {Phys. Rev. Lett.}\ }\textbf {\bibinfo {volume} {125}},\
  \bibinfo {pages} {200501} (\bibinfo {year} {2020})}\BibitemShut {NoStop}%
\bibitem [{\citenamefont {{Van Enk}}\ and\ \citenamefont
  {Beenakker}(2012)}]{VanEnk2012}%
  \BibitemOpen
  \bibfield  {author} {\bibinfo {author} {\bibfnamefont {S.~J.}\ \bibnamefont
  {{Van Enk}}}\ and\ \bibinfo {author} {\bibfnamefont {C.~W.}\ \bibnamefont
  {Beenakker}},\ }\href {\doibase 10.1103/PhysRevLett.108.110503} {\bibfield
  {journal} {\bibinfo  {journal} {Physical Review Letters}\ }\textbf {\bibinfo
  {volume} {108}},\ \bibinfo {pages} {110503} (\bibinfo {year}
  {2012})}\BibitemShut {NoStop}%
\bibitem [{\citenamefont {Elben}\ \emph {et~al.}(2018)\citenamefont {Elben},
  \citenamefont {Vermersch}, \citenamefont {Dalmonte}, \citenamefont {Cirac},\
  and\ \citenamefont {Zoller}}]{Elben2018}%
  \BibitemOpen
  \bibfield  {author} {\bibinfo {author} {\bibfnamefont {A.}~\bibnamefont
  {Elben}}, \bibinfo {author} {\bibfnamefont {B.}~\bibnamefont {Vermersch}},
  \bibinfo {author} {\bibfnamefont {M.}~\bibnamefont {Dalmonte}}, \bibinfo
  {author} {\bibfnamefont {J.~I.}\ \bibnamefont {Cirac}}, \ and\ \bibinfo
  {author} {\bibfnamefont {P.}~\bibnamefont {Zoller}},\ }\href {\doibase
  10.1103/PhysRevLett.120.050406} {\bibfield  {journal} {\bibinfo  {journal}
  {Physical Review Letters}\ }\textbf {\bibinfo {volume} {120}},\ \bibinfo
  {pages} {050406} (\bibinfo {year} {2018})}\BibitemShut {NoStop}%
\bibitem [{\citenamefont {Elben}\ \emph {et~al.}(2019)\citenamefont {Elben},
  \citenamefont {Vermersch}, \citenamefont {Roos},\ and\ \citenamefont
  {Zoller}}]{Elben2018a}%
  \BibitemOpen
  \bibfield  {author} {\bibinfo {author} {\bibfnamefont {A.}~\bibnamefont
  {Elben}}, \bibinfo {author} {\bibfnamefont {B.}~\bibnamefont {Vermersch}},
  \bibinfo {author} {\bibfnamefont {C.~F.}\ \bibnamefont {Roos}}, \ and\
  \bibinfo {author} {\bibfnamefont {P.}~\bibnamefont {Zoller}},\ }\href
  {\doibase 10.1103/PhysRevA.99.052323} {\bibfield  {journal} {\bibinfo
  {journal} {Physical Review A}\ }\textbf {\bibinfo {volume} {99}},\ \bibinfo
  {pages} {052323} (\bibinfo {year} {2019})}\BibitemShut {NoStop}%
\bibitem [{\citenamefont {Brydges}\ \emph {et~al.}(2019)\citenamefont
  {Brydges}, \citenamefont {Elben}, \citenamefont {Jurcevic}, \citenamefont
  {Vermersch}, \citenamefont {Maier}, \citenamefont {Lanyon}, \citenamefont
  {Zoller}, \citenamefont {Blatt},\ and\ \citenamefont {Roos}}]{Brydges2019}%
  \BibitemOpen
  \bibfield  {author} {\bibinfo {author} {\bibfnamefont {T.}~\bibnamefont
  {Brydges}}, \bibinfo {author} {\bibfnamefont {A.}~\bibnamefont {Elben}},
  \bibinfo {author} {\bibfnamefont {P.}~\bibnamefont {Jurcevic}}, \bibinfo
  {author} {\bibfnamefont {B.}~\bibnamefont {Vermersch}}, \bibinfo {author}
  {\bibfnamefont {C.}~\bibnamefont {Maier}}, \bibinfo {author} {\bibfnamefont
  {B.~P.}\ \bibnamefont {Lanyon}}, \bibinfo {author} {\bibfnamefont
  {P.}~\bibnamefont {Zoller}}, \bibinfo {author} {\bibfnamefont
  {R.}~\bibnamefont {Blatt}}, \ and\ \bibinfo {author} {\bibfnamefont {C.~F.}\
  \bibnamefont {Roos}},\ }\href {\doibase 10.1126/science.aau4963} {\bibfield
  {journal} {\bibinfo  {journal} {Science}\ }\textbf {\bibinfo {volume}
  {364}},\ \bibinfo {pages} {260} (\bibinfo {year} {2019})}\BibitemShut
  {NoStop}%
\bibitem [{\citenamefont {Knips}\ \emph {et~al.}(2020)\citenamefont {Knips},
  \citenamefont {Dziewior}, \citenamefont {K{\l}obus}, \citenamefont
  {Laskowski}, \citenamefont {Paterek}, \citenamefont {Shadbolt}, \citenamefont
  {Weinfurter},\ and\ \citenamefont {Meinecke}}]{Knips2019}%
  \BibitemOpen
  \bibfield  {author} {\bibinfo {author} {\bibfnamefont {L.}~\bibnamefont
  {Knips}}, \bibinfo {author} {\bibfnamefont {J.}~\bibnamefont {Dziewior}},
  \bibinfo {author} {\bibfnamefont {W.}~\bibnamefont {K{\l}obus}}, \bibinfo
  {author} {\bibfnamefont {W.}~\bibnamefont {Laskowski}}, \bibinfo {author}
  {\bibfnamefont {T.}~\bibnamefont {Paterek}}, \bibinfo {author} {\bibfnamefont
  {P.~J.}\ \bibnamefont {Shadbolt}}, \bibinfo {author} {\bibfnamefont
  {H.}~\bibnamefont {Weinfurter}}, \ and\ \bibinfo {author} {\bibfnamefont
  {J.~D.~A.}\ \bibnamefont {Meinecke}},\ }\href {\doibase
  10.1038/s41534-020-0281-5} {\bibfield  {journal} {\bibinfo  {journal} {npj
  Quantum Information}\ }\textbf {\bibinfo {volume} {6}},\ \bibinfo {pages}
  {51} (\bibinfo {year} {2020})}\BibitemShut {NoStop}%
\bibitem [{\citenamefont {Ketterer}\ \emph {et~al.}(2019)\citenamefont
  {Ketterer}, \citenamefont {Wyderka},\ and\ \citenamefont
  {G{\"{u}}hne}}]{Ketterer2019}%
  \BibitemOpen
  \bibfield  {author} {\bibinfo {author} {\bibfnamefont {A.}~\bibnamefont
  {Ketterer}}, \bibinfo {author} {\bibfnamefont {N.}~\bibnamefont {Wyderka}}, \
  and\ \bibinfo {author} {\bibfnamefont {O.}~\bibnamefont {G{\"{u}}hne}},\
  }\href {\doibase 10.1103/PhysRevLett.122.120505} {\bibfield  {journal}
  {\bibinfo  {journal} {Physical Review Letters}\ }\textbf {\bibinfo {volume}
  {122}},\ \bibinfo {pages} {120505} (\bibinfo {year} {2019})}\BibitemShut
  {NoStop}%
\bibitem [{\citenamefont {Huang}\ \emph {et~al.}(2020)\citenamefont {Huang},
  \citenamefont {Kueng},\ and\ \citenamefont {Preskill}}]{huang2020shadow}%
  \BibitemOpen
  \bibfield  {author} {\bibinfo {author} {\bibfnamefont {H.-Y.}\ \bibnamefont
  {Huang}}, \bibinfo {author} {\bibfnamefont {R.}~\bibnamefont {Kueng}}, \ and\
  \bibinfo {author} {\bibfnamefont {J.}~\bibnamefont {Preskill}},\ }\href
  {https://doi.org/10.1038/s41567-020-0932-7} {\bibfield  {journal} {\bibinfo
  {journal} {Nature Physics}\ } (\bibinfo {year} {2020})}\BibitemShut {NoStop}%
\bibitem [{\citenamefont {Zhou}\ \emph {et~al.}(2020)\citenamefont {Zhou},
  \citenamefont {Zeng},\ and\ \citenamefont {Liu}}]{Zhou2020}%
  \BibitemOpen
  \bibfield  {author} {\bibinfo {author} {\bibfnamefont {Y.}~\bibnamefont
  {Zhou}}, \bibinfo {author} {\bibfnamefont {P.}~\bibnamefont {Zeng}}, \ and\
  \bibinfo {author} {\bibfnamefont {Z.}~\bibnamefont {Liu}},\ }\href {\doibase
  10.1103/PhysRevLett.125.200502} {\bibfield  {journal} {\bibinfo  {journal}
  {Physical Review Letters}\ }\textbf {\bibinfo {volume} {125}},\ \bibinfo
  {pages} {200502} (\bibinfo {year} {2020})}\BibitemShut {NoStop}%
\bibitem [{\citenamefont {Ketterer}\ \emph {et~al.}(2020)\citenamefont
  {Ketterer}, \citenamefont {Wyderka},\ and\ \citenamefont
  {G{\"{u}}hne}}]{Ketterer2020}%
  \BibitemOpen
  \bibfield  {author} {\bibinfo {author} {\bibfnamefont {A.}~\bibnamefont
  {Ketterer}}, \bibinfo {author} {\bibfnamefont {N.}~\bibnamefont {Wyderka}}, \
  and\ \bibinfo {author} {\bibfnamefont {O.}~\bibnamefont {G{\"{u}}hne}},\
  }\href {\doibase 10.22331/q-2020-09-16-325} {\bibfield  {journal} {\bibinfo
  {journal} {Quantum}\ }\textbf {\bibinfo {volume} {4}},\ \bibinfo {pages}
  {325} (\bibinfo {year} {2020})}\BibitemShut {NoStop}%
\bibitem [{\citenamefont {Ketterer}\ \emph {et~al.}()\citenamefont {Ketterer},
  \citenamefont {Imai}, \citenamefont {Wyderka},\ and\ \citenamefont
  {G{\"{u}}hne}}]{Ketterer2020a}%
  \BibitemOpen
  \bibfield  {author} {\bibinfo {author} {\bibfnamefont {A.}~\bibnamefont
  {Ketterer}}, \bibinfo {author} {\bibfnamefont {S.}~\bibnamefont {Imai}},
  \bibinfo {author} {\bibfnamefont {N.}~\bibnamefont {Wyderka}}, \ and\
  \bibinfo {author} {\bibfnamefont {O.}~\bibnamefont {G{\"{u}}hne}},\
  }\href@noop {} {\ }\Eprint {http://arxiv.org/abs/2012.12176}
  {arXiv:2012.12176} \BibitemShut {NoStop}%
\bibitem [{\citenamefont {Vitale}\ \emph {et~al.}()\citenamefont {Vitale},
  \citenamefont {Elben}, \citenamefont {Kueng}, \citenamefont {Neven},
  \citenamefont {Carrasco}, \citenamefont {Kraus}, \citenamefont {Zoller},
  \citenamefont {Calabrese}, \citenamefont {Vermersch},\ and\ \citenamefont
  {Dalmonte}}]{Vi21}%
  \BibitemOpen
  \bibfield  {author} {\bibinfo {author} {\bibfnamefont {V.}~\bibnamefont
  {Vitale}}, \bibinfo {author} {\bibfnamefont {A.}~\bibnamefont {Elben}},
  \bibinfo {author} {\bibfnamefont {R.}~\bibnamefont {Kueng}}, \bibinfo
  {author} {\bibfnamefont {A.}~\bibnamefont {Neven}}, \bibinfo {author}
  {\bibfnamefont {J.}~\bibnamefont {Carrasco}}, \bibinfo {author}
  {\bibfnamefont {B.}~\bibnamefont {Kraus}}, \bibinfo {author} {\bibfnamefont
  {P.}~\bibnamefont {Zoller}}, \bibinfo {author} {\bibfnamefont
  {P.}~\bibnamefont {Calabrese}}, \bibinfo {author} {\bibfnamefont
  {B.}~\bibnamefont {Vermersch}}, \ and\ \bibinfo {author} {\bibfnamefont
  {M.}~\bibnamefont {Dalmonte}},\ }\href@noop {} {}\Eprint
  {http://arxiv.org/abs/2101.07814} {arXiv:2101.07814} \BibitemShut {NoStop}%
\bibitem [{\citenamefont {{D}e~{L}as {C}uevas}\ \emph
  {et~al.}(2020)\citenamefont {{D}e~{L}as {C}uevas}, \citenamefont {Fritz},\
  and\ \citenamefont {Netzer}}]{Gemma20}%
  \BibitemOpen
  \bibfield  {author} {\bibinfo {author} {\bibfnamefont {G.}~\bibnamefont
  {{D}e~{L}as {C}uevas}}, \bibinfo {author} {\bibfnamefont {T.}~\bibnamefont
  {Fritz}}, \ and\ \bibinfo {author} {\bibfnamefont {T.}~\bibnamefont
  {Netzer}},\ }\href@noop {} {\bibfield  {journal} {\bibinfo  {journal}
  {Commun. Math. Phys.}\ }\textbf {\bibinfo {volume} {375}},\ \bibinfo {pages}
  {105} (\bibinfo {year} {2020})}\BibitemShut {NoStop}%
\bibitem [{Note2()}]{Note2}%
  \BibitemOpen
  \bibinfo {note} {Note that we want here to minimize $p_3$ because it is an
  odd moment (for which negative eigenvalues would have the tendency to
  decrease the value of the moment). For an even moment, we would instead
  maximize the value of this moment over PSD matrices. This is also reflected
  in the $D_n$ conditions~\protect \textup {\hbox {\mathsurround \z@ \protect
  \normalfont (\ignorespaces \ref {eq:D2}\unskip \@@italiccorr )}}--\protect
  \textup {\hbox {\mathsurround \z@ \protect \normalfont (\ignorespaces \ref
  {eq:D4}\unskip \@@italiccorr )}}, where the inequality sign alternates
  between even and odd values of $n$.}\BibitemShut {Stop}%
\bibitem [{Note3()}]{Note3}%
  \BibitemOpen
  \bibinfo {note} {Recall that for the partial transpose of a density operator
  this is always fulfilled.}\BibitemShut {Stop}%
\bibitem [{\citenamefont {Goldstein}\ and\ \citenamefont
  {Sela}(2018)}]{GoldsteinSela2018}%
  \BibitemOpen
  \bibfield  {author} {\bibinfo {author} {\bibfnamefont {M.}~\bibnamefont
  {Goldstein}}\ and\ \bibinfo {author} {\bibfnamefont {E.}~\bibnamefont
  {Sela}},\ }\href {\doibase 10.1103/PhysRevLett.120.200602} {\bibfield
  {journal} {\bibinfo  {journal} {Phys. Rev. Lett.}\ }\textbf {\bibinfo
  {volume} {120}},\ \bibinfo {pages} {200602} (\bibinfo {year}
  {2018})}\BibitemShut {NoStop}%
\bibitem [{\citenamefont {Xavier}\ \emph {et~al.}(2018)\citenamefont {Xavier},
  \citenamefont {Alcaraz},\ and\ \citenamefont {Sierra}}]{xas-18}%
  \BibitemOpen
  \bibfield  {author} {\bibinfo {author} {\bibfnamefont {J.~C.}\ \bibnamefont
  {Xavier}}, \bibinfo {author} {\bibfnamefont {F.~C.}\ \bibnamefont {Alcaraz}},
  \ and\ \bibinfo {author} {\bibfnamefont {G.}~\bibnamefont {Sierra}},\ }\href
  {\doibase 10.1103/PhysRevB.98.041106} {\bibfield  {journal} {\bibinfo
  {journal} {Physical Review B}\ }\textbf {\bibinfo {volume} {98}},\ \bibinfo
  {pages} {041106} (\bibinfo {year} {2018})}\BibitemShut {NoStop}%
\bibitem [{\citenamefont {Feldman}\ and\ \citenamefont
  {Goldstein}(2019)}]{Feldman2019}%
  \BibitemOpen
  \bibfield  {author} {\bibinfo {author} {\bibfnamefont {N.}~\bibnamefont
  {Feldman}}\ and\ \bibinfo {author} {\bibfnamefont {M.}~\bibnamefont
  {Goldstein}},\ }\href {\doibase 10.1103/PhysRevB.100.235146} {\bibfield
  {journal} {\bibinfo  {journal} {Phys. Rev. B}\ }\textbf {\bibinfo {volume}
  {100}},\ \bibinfo {pages} {235146} (\bibinfo {year} {2019})}\BibitemShut
  {NoStop}%
\bibitem [{\citenamefont {Bonsignori}\ \emph {et~al.}(2019)\citenamefont
  {Bonsignori}, \citenamefont {Ruggiero},\ and\ \citenamefont
  {Calabrese}}]{BRC19}%
  \BibitemOpen
  \bibfield  {author} {\bibinfo {author} {\bibfnamefont {R.}~\bibnamefont
  {Bonsignori}}, \bibinfo {author} {\bibfnamefont {P.}~\bibnamefont
  {Ruggiero}}, \ and\ \bibinfo {author} {\bibfnamefont {P.}~\bibnamefont
  {Calabrese}},\ }\href@noop {} {\bibfield  {journal} {\bibinfo  {journal} {J.
  Phys. A: Math. Theor.}\ }\textbf {\bibinfo {volume} {52}},\ \bibinfo {pages}
  {475302(23)} (\bibinfo {year} {2019})}\BibitemShut {NoStop}%
\bibitem [{\citenamefont {Tan}\ and\ \citenamefont {Ryu}(2020)}]{Tan2020}%
  \BibitemOpen
  \bibfield  {author} {\bibinfo {author} {\bibfnamefont {M.~T.}\ \bibnamefont
  {Tan}}\ and\ \bibinfo {author} {\bibfnamefont {S.}~\bibnamefont {Ryu}},\
  }\href {\doibase 10.1103/PhysRevB.101.235169} {\bibfield  {journal} {\bibinfo
   {journal} {Phys. Rev. B}\ }\textbf {\bibinfo {volume} {101}},\ \bibinfo
  {pages} {235169} (\bibinfo {year} {2020})}\BibitemShut {NoStop}%
\bibitem [{\citenamefont {Fraenkel}\ and\ \citenamefont
  {Goldstein}(2020)}]{fraenkel2020symmetry}%
  \BibitemOpen
  \bibfield  {author} {\bibinfo {author} {\bibfnamefont {S.}~\bibnamefont
  {Fraenkel}}\ and\ \bibinfo {author} {\bibfnamefont {M.}~\bibnamefont
  {Goldstein}},\ }\href {\doibase 10.1088/1742-5468/ab7753} {\bibfield
  {journal} {\bibinfo  {journal} {Journal of Statistical Mechanics: Theory and
  Experiment}\ }\textbf {\bibinfo {volume} {2020}},\ \bibinfo {pages} {033106}
  (\bibinfo {year} {2020})}\BibitemShut {NoStop}%
\bibitem [{\citenamefont {Murciano}\ \emph {et~al.}(2020)\citenamefont
  {Murciano}, \citenamefont {Giulio},\ and\ \citenamefont
  {Calabrese}}]{Murciano2020}%
  \BibitemOpen
  \bibfield  {author} {\bibinfo {author} {\bibfnamefont {S.}~\bibnamefont
  {Murciano}}, \bibinfo {author} {\bibfnamefont {G.~D.}\ \bibnamefont
  {Giulio}}, \ and\ \bibinfo {author} {\bibfnamefont {P.}~\bibnamefont
  {Calabrese}},\ }\href {\doibase 10.21468/SciPostPhys.8.3.046} {\bibfield
  {journal} {\bibinfo  {journal} {SciPost Phys.}\ }\textbf {\bibinfo {volume}
  {8}},\ \bibinfo {pages} {46} (\bibinfo {year} {2020})}\BibitemShut {NoStop}%
\bibitem [{\citenamefont {Azses}\ and\ \citenamefont {Sela}(2020)}]{Azses2020}%
  \BibitemOpen
  \bibfield  {author} {\bibinfo {author} {\bibfnamefont {D.}~\bibnamefont
  {Azses}}\ and\ \bibinfo {author} {\bibfnamefont {E.}~\bibnamefont {Sela}},\
  }\href {\doibase 10.1103/PhysRevB.102.235157} {\bibfield  {journal} {\bibinfo
   {journal} {Phys. Rev. B}\ }\textbf {\bibinfo {volume} {102}},\ \bibinfo
  {pages} {235157} (\bibinfo {year} {2020})}\BibitemShut {NoStop}%
\bibitem [{\citenamefont {Turkeshi}\ \emph {et~al.}(2020)\citenamefont
  {Turkeshi}, \citenamefont {Ruggiero}, \citenamefont {Alba},\ and\
  \citenamefont {Calabrese}}]{Turkeshi2020}%
  \BibitemOpen
  \bibfield  {author} {\bibinfo {author} {\bibfnamefont {X.}~\bibnamefont
  {Turkeshi}}, \bibinfo {author} {\bibfnamefont {P.}~\bibnamefont {Ruggiero}},
  \bibinfo {author} {\bibfnamefont {V.}~\bibnamefont {Alba}}, \ and\ \bibinfo
  {author} {\bibfnamefont {P.}~\bibnamefont {Calabrese}},\ }\href {\doibase
  10.1103/PhysRevB.102.014455} {\bibfield  {journal} {\bibinfo  {journal}
  {Phys. Rev. B}\ }\textbf {\bibinfo {volume} {102}},\ \bibinfo {pages}
  {014455} (\bibinfo {year} {2020})}\BibitemShut {NoStop}%
\bibitem [{\citenamefont {Murciano}\ \emph {et~al.}(2021)\citenamefont
  {Murciano}, \citenamefont {Bonsignori},\ and\ \citenamefont
  {Calabrese}}]{murciano2021symmetry}%
  \BibitemOpen
  \bibfield  {author} {\bibinfo {author} {\bibfnamefont {S.}~\bibnamefont
  {Murciano}}, \bibinfo {author} {\bibfnamefont {R.}~\bibnamefont
  {Bonsignori}}, \ and\ \bibinfo {author} {\bibfnamefont {P.}~\bibnamefont
  {Calabrese}},\ }\href {https://arxiv.org/abs/2102.10054} {\  (\bibinfo {year}
  {2021})},\ \Eprint {http://arxiv.org/abs/2102.10054} {arXiv:2102.10054
  [cond-mat.stat-mech]} \BibitemShut {NoStop}%
\bibitem [{\citenamefont {Parez}\ \emph {et~al.}(2021)\citenamefont {Parez},
  \citenamefont {Bonsignori},\ and\ \citenamefont {Calabrese}}]{pbc-2021}%
  \BibitemOpen
  \bibfield  {author} {\bibinfo {author} {\bibfnamefont {G.}~\bibnamefont
  {Parez}}, \bibinfo {author} {\bibfnamefont {R.}~\bibnamefont {Bonsignori}}, \
  and\ \bibinfo {author} {\bibfnamefont {P.}~\bibnamefont {Calabrese}},\ }\href
  {\doibase 10.1103/PhysRevB.103.L041104} {\bibfield  {journal} {\bibinfo
  {journal} {Phys. Rev. B}\ }\textbf {\bibinfo {volume} {103}},\ \bibinfo
  {pages} {L041104} (\bibinfo {year} {2021})}\BibitemShut {NoStop}%
\bibitem [{\citenamefont {Cornfeld}\ \emph {et~al.}(2018)\citenamefont
  {Cornfeld}, \citenamefont {Goldstein},\ and\ \citenamefont {Sela}}]{CGS18}%
  \BibitemOpen
  \bibfield  {author} {\bibinfo {author} {\bibfnamefont {E.}~\bibnamefont
  {Cornfeld}}, \bibinfo {author} {\bibfnamefont {M.}~\bibnamefont {Goldstein}},
  \ and\ \bibinfo {author} {\bibfnamefont {E.}~\bibnamefont {Sela}},\ }\href
  {\doibase 10.1103/PhysRevA.98.032302} {\bibfield  {journal} {\bibinfo
  {journal} {Phys. Rev. A}\ }\textbf {\bibinfo {volume} {98}},\ \bibinfo
  {pages} {032302} (\bibinfo {year} {2018})}\BibitemShut {NoStop}%
\bibitem [{Note4()}]{Note4}%
  \BibitemOpen
  \bibinfo {note} {This can be easily seen as follows. Consider a matrix
  element $\rho _{ab,a'b'}|ab\protect \rangle \protect \langle a'b'|$ of $\rho
  $ with eigenvalue $i$ of $\protect \mathcal {N}_A+\protect \mathcal {N}_B$.
  Precisely, let us write $\protect \mathcal {N}_A|a\protect \rangle
  =n_a|a\protect \rangle $, $\protect \mathcal {N}_A|a'\protect \rangle
  =n_{a'}|a'\protect \rangle $, $\protect \mathcal {N}_B|b\protect \rangle
  =n_b|b\protect \rangle $, and $\protect \mathcal {N}_B|b'\protect \rangle
  =n_{b'}|b'\protect \rangle $ with $n_a+n_b=n_{a'}+n_{b'}=i$. After partial
  transposition, $\rho _{ab,a'b'}|ab\protect \rangle \protect \langle
  a'b'|\DOTSB \mapstochar \rightarrow \rho _{ab,a'b'}|ab'\protect \rangle
  \protect \langle a'b|$. For our particular case, $\protect \mathcal
  {N}_B=\protect \mathcal {N}_B^\Gamma $ and one can see that $(\protect
  \mathcal {N}_A-\protect \mathcal {N}_B)|ab'\protect \rangle
  =(n_a-n_{b'})|ab'\protect \rangle $ and $(\protect \mathcal {N}_A-\protect
  \mathcal {N}_B)|a'b\protect \rangle =(n_{a'}-n_b)|a'b\protect \rangle $ with
  $n_a-n_{b'}=n_{a'}-n_b$. This shows that matrix elements within a block of
  $\rho $ are mapped, via partial transposition, to matrix elements within a
  block of $\rho ^\Gamma $.}\BibitemShut {Stop}%
\bibitem [{Note5()}]{Note5}%
  \BibitemOpen
  \bibinfo {note} {This is, there could be no $\sigma >0$ such that $\rho
  _{(q)}^\Gamma =\sigma ^\Gamma $.}\BibitemShut {Stop}%
\bibitem [{\citenamefont {Chen}\ \emph {et~al.}()\citenamefont {Chen},
  \citenamefont {Yu}, \citenamefont {Zeng},\ and\ \citenamefont
  {Flammia}}]{flammia2020shadow}%
  \BibitemOpen
  \bibfield  {author} {\bibinfo {author} {\bibfnamefont {S.}~\bibnamefont
  {Chen}}, \bibinfo {author} {\bibfnamefont {W.}~\bibnamefont {Yu}}, \bibinfo
  {author} {\bibfnamefont {P.}~\bibnamefont {Zeng}}, \ and\ \bibinfo {author}
  {\bibfnamefont {S.~T.}\ \bibnamefont {Flammia}},\ }\href@noop {} {\enquote
  {\bibinfo {title} {Robust shadow estimation},}\ }\Eprint
  {http://arxiv.org/abs/2011.09636} {arXiv:2011.09636} \BibitemShut {NoStop}%
\bibitem [{\citenamefont {Koh}\ and\ \citenamefont {Grewal}()}]{koh2020shadow}%
  \BibitemOpen
  \bibfield  {author} {\bibinfo {author} {\bibfnamefont {D.~E.}\ \bibnamefont
  {Koh}}\ and\ \bibinfo {author} {\bibfnamefont {S.}~\bibnamefont {Grewal}},\
  }\href@noop {} {\enquote {\bibinfo {title} {Classical shadows with noise},}\
  }\Eprint {http://arxiv.org/abs/2011.11580} {arXiv:2011.11580} \BibitemShut
  {NoStop}%
\bibitem [{Note6()}]{Note6}%
  \BibitemOpen
  \bibinfo {note} {For normally distributed data with empirical mean $\mu $,
  $\mu \pm 1.96\sigma $ defines a $95\%$ confidence interval. While normal
  distribution is here not guaranteed a priori, we checked through additional
  numerical simulations of many experiments (with fixed number of runs per
  experiment) that errorbars of length $1.96\sigma $ indeed approximate a
  confidence interval with confidence level $ 95\%$.}\BibitemShut {Stop}%
\bibitem [{\citenamefont {{Gogolin}}\ \emph {et~al.}(2004)\citenamefont
  {{Gogolin}}, \citenamefont {{Nersesyan}},\ and\ \citenamefont
  {{Tsvelik}}}]{Gogolin2004}%
  \BibitemOpen
  \bibfield  {author} {\bibinfo {author} {\bibfnamefont {A.~O.}\ \bibnamefont
  {{Gogolin}}}, \bibinfo {author} {\bibfnamefont {A.~A.}\ \bibnamefont
  {{Nersesyan}}}, \ and\ \bibinfo {author} {\bibfnamefont {A.~M.}\ \bibnamefont
  {{Tsvelik}}},\ }\href@noop {} {\emph {\bibinfo {title} {{Bosonization and
  Strongly Correlated Systems}}}}\ (\bibinfo  {publisher} {Cambridge Univ.
  Press, Cambridge},\ \bibinfo {year} {2004})\BibitemShut {NoStop}%
\bibitem [{\citenamefont {Bernien}\ \emph {et~al.}(2017)\citenamefont
  {Bernien}, \citenamefont {Schwartz}, \citenamefont {Keesling}, \citenamefont
  {Levine}, \citenamefont {Omran}, \citenamefont {Pichler}, \citenamefont
  {Choi}, \citenamefont {Zibrov}, \citenamefont {Endres}, \citenamefont
  {Greiner}, \citenamefont {Vuleti{\'c}},\ and\ \citenamefont
  {Lukin}}]{bernien51}%
  \BibitemOpen
  \bibfield  {author} {\bibinfo {author} {\bibfnamefont {H.}~\bibnamefont
  {Bernien}}, \bibinfo {author} {\bibfnamefont {S.}~\bibnamefont {Schwartz}},
  \bibinfo {author} {\bibfnamefont {A.}~\bibnamefont {Keesling}}, \bibinfo
  {author} {\bibfnamefont {H.}~\bibnamefont {Levine}}, \bibinfo {author}
  {\bibfnamefont {A.}~\bibnamefont {Omran}}, \bibinfo {author} {\bibfnamefont
  {H.}~\bibnamefont {Pichler}}, \bibinfo {author} {\bibfnamefont
  {S.}~\bibnamefont {Choi}}, \bibinfo {author} {\bibfnamefont {A.~S.}\
  \bibnamefont {Zibrov}}, \bibinfo {author} {\bibfnamefont {M.}~\bibnamefont
  {Endres}}, \bibinfo {author} {\bibfnamefont {M.}~\bibnamefont {Greiner}},
  \bibinfo {author} {\bibfnamefont {V.}~\bibnamefont {Vuleti{\'c}}}, \ and\
  \bibinfo {author} {\bibfnamefont {M.~D.}\ \bibnamefont {Lukin}},\ }\href
  {\doibase 10.1038/nature24622} {\bibfield  {journal} {\bibinfo  {journal}
  {Nature}\ }\textbf {\bibinfo {volume} {551}},\ \bibinfo {pages} {579}
  (\bibinfo {year} {2017})}\BibitemShut {NoStop}%
\bibitem [{\citenamefont {Ebadi}\ \emph {et~al.}(2020)\citenamefont {Ebadi},
  \citenamefont {Wang}, \citenamefont {Levine}, \citenamefont {Keesling},
  \citenamefont {Semeghini}, \citenamefont {Omran}, \citenamefont {Bluvstein},
  \citenamefont {Samajdar}, \citenamefont {Pichler}, \citenamefont {Ho},
  \citenamefont {Choi}, \citenamefont {Sachdev}, \citenamefont {Greiner},
  \citenamefont {Vuletic},\ and\ \citenamefont {Lukin}}]{ebadi2020quantum}%
  \BibitemOpen
  \bibfield  {author} {\bibinfo {author} {\bibfnamefont {S.}~\bibnamefont
  {Ebadi}}, \bibinfo {author} {\bibfnamefont {T.~T.}\ \bibnamefont {Wang}},
  \bibinfo {author} {\bibfnamefont {H.}~\bibnamefont {Levine}}, \bibinfo
  {author} {\bibfnamefont {A.}~\bibnamefont {Keesling}}, \bibinfo {author}
  {\bibfnamefont {G.}~\bibnamefont {Semeghini}}, \bibinfo {author}
  {\bibfnamefont {A.}~\bibnamefont {Omran}}, \bibinfo {author} {\bibfnamefont
  {D.}~\bibnamefont {Bluvstein}}, \bibinfo {author} {\bibfnamefont
  {R.}~\bibnamefont {Samajdar}}, \bibinfo {author} {\bibfnamefont
  {H.}~\bibnamefont {Pichler}}, \bibinfo {author} {\bibfnamefont {W.~W.}\
  \bibnamefont {Ho}}, \bibinfo {author} {\bibfnamefont {S.}~\bibnamefont
  {Choi}}, \bibinfo {author} {\bibfnamefont {S.}~\bibnamefont {Sachdev}},
  \bibinfo {author} {\bibfnamefont {M.}~\bibnamefont {Greiner}}, \bibinfo
  {author} {\bibfnamefont {V.}~\bibnamefont {Vuletic}}, \ and\ \bibinfo
  {author} {\bibfnamefont {M.~D.}\ \bibnamefont {Lukin}},\ }\href@noop {}
  {\enquote {\bibinfo {title} {Quantum phases of matter on a 256-atom
  programmable quantum simulator},}\ } (\bibinfo {year} {2020}),\ \Eprint
  {http://arxiv.org/abs/2012.12281} {arXiv:2012.12281 [quant-ph]} \BibitemShut
  {NoStop}%
\bibitem [{\citenamefont {Surace}\ \emph {et~al.}(2020)\citenamefont {Surace},
  \citenamefont {Mazza}, \citenamefont {Giudici}, \citenamefont {Lerose},
  \citenamefont {Gambassi},\ and\ \citenamefont {Dalmonte}}]{Surace2019}%
  \BibitemOpen
  \bibfield  {author} {\bibinfo {author} {\bibfnamefont {F.~M.}\ \bibnamefont
  {Surace}}, \bibinfo {author} {\bibfnamefont {P.~P.}\ \bibnamefont {Mazza}},
  \bibinfo {author} {\bibfnamefont {G.}~\bibnamefont {Giudici}}, \bibinfo
  {author} {\bibfnamefont {A.}~\bibnamefont {Lerose}}, \bibinfo {author}
  {\bibfnamefont {A.}~\bibnamefont {Gambassi}}, \ and\ \bibinfo {author}
  {\bibfnamefont {M.}~\bibnamefont {Dalmonte}},\ }\href {\doibase
  10.1103/PhysRevX.10.021041} {\bibfield  {journal} {\bibinfo  {journal} {Phys.
  Rev. X}\ }\textbf {\bibinfo {volume} {10}},\ \bibinfo {pages} {021041}
  (\bibinfo {year} {2020})}\BibitemShut {NoStop}%
\bibitem [{\citenamefont {Turner}\ \emph {et~al.}(2018)\citenamefont {Turner},
  \citenamefont {Michailidis}, \citenamefont {Abanin}, \citenamefont {Serbyn},\
  and\ \citenamefont {Papi{\'c}}}]{turner2018}%
  \BibitemOpen
  \bibfield  {author} {\bibinfo {author} {\bibfnamefont {C.~J.}\ \bibnamefont
  {Turner}}, \bibinfo {author} {\bibfnamefont {A.~A.}\ \bibnamefont
  {Michailidis}}, \bibinfo {author} {\bibfnamefont {D.~A.}\ \bibnamefont
  {Abanin}}, \bibinfo {author} {\bibfnamefont {M.}~\bibnamefont {Serbyn}}, \
  and\ \bibinfo {author} {\bibfnamefont {Z.}~\bibnamefont {Papi{\'c}}},\ }\href
  {\doibase 10.1038/s41567-018-0137-5} {\bibfield  {journal} {\bibinfo
  {journal} {Nature Physics}\ }\textbf {\bibinfo {volume} {14}},\ \bibinfo
  {pages} {745} (\bibinfo {year} {2018})}\BibitemShut {NoStop}%
\bibitem [{\citenamefont {Ho}\ \emph {et~al.}(2019)\citenamefont {Ho},
  \citenamefont {Choi}, \citenamefont {Pichler},\ and\ \citenamefont
  {Lukin}}]{PhysRevLett.122.040603}%
  \BibitemOpen
  \bibfield  {author} {\bibinfo {author} {\bibfnamefont {W.~W.}\ \bibnamefont
  {Ho}}, \bibinfo {author} {\bibfnamefont {S.}~\bibnamefont {Choi}}, \bibinfo
  {author} {\bibfnamefont {H.}~\bibnamefont {Pichler}}, \ and\ \bibinfo
  {author} {\bibfnamefont {M.~D.}\ \bibnamefont {Lukin}},\ }\href {\doibase
  10.1103/PhysRevLett.122.040603} {\bibfield  {journal} {\bibinfo  {journal}
  {Phys. Rev. Lett.}\ }\textbf {\bibinfo {volume} {122}},\ \bibinfo {pages}
  {040603} (\bibinfo {year} {2019})}\BibitemShut {NoStop}%
\bibitem [{\citenamefont {Vermersch}\ \emph {et~al.}(2018)\citenamefont
  {Vermersch}, \citenamefont {Elben}, \citenamefont {Dalmonte}, \citenamefont
  {Cirac},\ and\ \citenamefont {Zoller}}]{PhysRevA.97.023604}%
  \BibitemOpen
  \bibfield  {author} {\bibinfo {author} {\bibfnamefont {B.}~\bibnamefont
  {Vermersch}}, \bibinfo {author} {\bibfnamefont {A.}~\bibnamefont {Elben}},
  \bibinfo {author} {\bibfnamefont {M.}~\bibnamefont {Dalmonte}}, \bibinfo
  {author} {\bibfnamefont {J.~I.}\ \bibnamefont {Cirac}}, \ and\ \bibinfo
  {author} {\bibfnamefont {P.}~\bibnamefont {Zoller}},\ }\href {\doibase
  10.1103/PhysRevA.97.023604} {\bibfield  {journal} {\bibinfo  {journal} {Phys.
  Rev. A}\ }\textbf {\bibinfo {volume} {97}},\ \bibinfo {pages} {023604}
  (\bibinfo {year} {2018})}\BibitemShut {NoStop}%
\bibitem [{\citenamefont {G\"uhne}\ and\ \citenamefont
  {L\"utkenhaus}(2006)}]{GuehneLuetkenhaus2006}%
  \BibitemOpen
  \bibfield  {author} {\bibinfo {author} {\bibfnamefont {O.}~\bibnamefont
  {G\"uhne}}\ and\ \bibinfo {author} {\bibfnamefont {N.}~\bibnamefont
  {L\"utkenhaus}},\ }\href@noop {} {\bibfield  {journal} {\bibinfo  {journal}
  {Phys. Rev. Lett.}\ }\textbf {\bibinfo {volume} {96}},\ \bibinfo {pages}
  {170502} (\bibinfo {year} {2006})}\BibitemShut {NoStop}%
\bibitem [{\citenamefont {Jungnitsch}\ \emph {et~al.}(2011)\citenamefont
  {Jungnitsch}, \citenamefont {Moroder},\ and\ \citenamefont
  {G\"uhne}}]{Jungnitsch2011}%
  \BibitemOpen
  \bibfield  {author} {\bibinfo {author} {\bibfnamefont {B.}~\bibnamefont
  {Jungnitsch}}, \bibinfo {author} {\bibfnamefont {T.}~\bibnamefont {Moroder}},
  \ and\ \bibinfo {author} {\bibfnamefont {O.}~\bibnamefont {G\"uhne}},\
  }\href@noop {} {\bibfield  {journal} {\bibinfo  {journal} {Phys. Rev. Lett.}\
  }\textbf {\bibinfo {volume} {106}},\ \bibinfo {pages} {190502} (\bibinfo
  {year} {2011})}\BibitemShut {NoStop}%
\bibitem [{\citenamefont {Hadfield}\ \emph {et~al.}(2020)\citenamefont
  {Hadfield}, \citenamefont {Bravyi}, \citenamefont {Raymond},\ and\
  \citenamefont {Mezzacapo}}]{hadfield2020measurements}%
  \BibitemOpen
  \bibfield  {author} {\bibinfo {author} {\bibfnamefont {C.}~\bibnamefont
  {Hadfield}}, \bibinfo {author} {\bibfnamefont {S.}~\bibnamefont {Bravyi}},
  \bibinfo {author} {\bibfnamefont {R.}~\bibnamefont {Raymond}}, \ and\
  \bibinfo {author} {\bibfnamefont {A.}~\bibnamefont {Mezzacapo}},\ }\href@noop
  {} {\bibfield  {journal} {\bibinfo  {journal} {preprint arXiv:2006.15788}\ }
  (\bibinfo {year} {2020})}\BibitemShut {NoStop}%
\bibitem [{\citenamefont {Rath}\ \emph {et~al.}()\citenamefont {Rath},
  \citenamefont {van Bijnen}, \citenamefont {Elben}, \citenamefont {Zoller},\
  and\ \citenamefont {Vermersch}}]{Rath2021}%
  \BibitemOpen
  \bibfield  {author} {\bibinfo {author} {\bibfnamefont {A.}~\bibnamefont
  {Rath}}, \bibinfo {author} {\bibfnamefont {R.}~\bibnamefont {van Bijnen}},
  \bibinfo {author} {\bibfnamefont {A.}~\bibnamefont {Elben}}, \bibinfo
  {author} {\bibfnamefont {P.}~\bibnamefont {Zoller}}, \ and\ \bibinfo {author}
  {\bibfnamefont {B.}~\bibnamefont {Vermersch}},\ }\href
  {http://arxiv.org/abs/2102.13524} {\ }\Eprint
  {http://arxiv.org/abs/2102.13524} {arXiv:2102.13524} \BibitemShut {NoStop}%
\bibitem [{\citenamefont {Huang}\ \emph {et~al.}(2021)\citenamefont {Huang},
  \citenamefont {Kueng},\ and\ \citenamefont
  {Preskill}}]{huang2021derandomization}%
  \BibitemOpen
  \bibfield  {author} {\bibinfo {author} {\bibfnamefont {H.-Y.}\ \bibnamefont
  {Huang}}, \bibinfo {author} {\bibfnamefont {R.}~\bibnamefont {Kueng}}, \ and\
  \bibinfo {author} {\bibfnamefont {J.}~\bibnamefont {Preskill}},\ }\href@noop
  {} {\bibfield  {journal} {\bibinfo  {journal} {to appear}\ } (\bibinfo {year}
  {2021})}\BibitemShut {NoStop}%
\bibitem [{\citenamefont {Lu}\ and\ \citenamefont {Grover}(2020)}]{Lu20}%
  \BibitemOpen
  \bibfield  {author} {\bibinfo {author} {\bibfnamefont {T.-C.}\ \bibnamefont
  {Lu}}\ and\ \bibinfo {author} {\bibfnamefont {T.}~\bibnamefont {Grover}},\
  }\href {\doibase 10.1103/PhysRevResearch.2.043345} {\bibfield  {journal}
  {\bibinfo  {journal} {Phys. Rev. Research}\ }\textbf {\bibinfo {volume}
  {2}},\ \bibinfo {pages} {043345} (\bibinfo {year} {2020})}\BibitemShut
  {NoStop}%
\bibitem [{\citenamefont {Yu}\ \emph {et~al.}(pear)\citenamefont {Yu},
  \citenamefont {Imai},\ and\ \citenamefont {G\"uhne}}]{Yu}%
  \BibitemOpen
  \bibfield  {author} {\bibinfo {author} {\bibfnamefont {X.-D.}\ \bibnamefont
  {Yu}}, \bibinfo {author} {\bibfnamefont {S.}~\bibnamefont {Imai}}, \ and\
  \bibinfo {author} {\bibfnamefont {O.}~\bibnamefont {G\"uhne}},\ }\href@noop
  {} {\enquote {\bibinfo {title} {Optimal entanglement certification from
  moments of the partial transpose},}\ } (\bibinfo {year} {to
  appear})\BibitemShut {NoStop}%
\bibitem [{\citenamefont {Bensimhoun}()}]{Be16}%
  \BibitemOpen
  \bibfield  {author} {\bibinfo {author} {\bibfnamefont {M.}~\bibnamefont
  {Bensimhoun}},\ }\href@noop {} {}\Eprint {http://arxiv.org/abs/1309.6664}
  {arXiv:1309.6664} \BibitemShut {NoStop}%
\bibitem [{Note7()}]{Note7}%
  \BibitemOpen
  \bibinfo {note} {If such a measure exists, one may wonder whether it is
  unique or not. For our purposes, it will be enough to discuss only its
  existence.}\BibitemShut {Stop}%
\bibitem [{\citenamefont {Curto}\ and\ \citenamefont
  {Fialkow}(1991)}]{Stieltjes}%
  \BibitemOpen
  \bibfield  {author} {\bibinfo {author} {\bibfnamefont {R.~E.}\ \bibnamefont
  {Curto}}\ and\ \bibinfo {author} {\bibfnamefont {L.~A.}\ \bibnamefont
  {Fialkow}},\ }\href@noop {} {\bibfield  {journal} {\bibinfo  {journal}
  {Houston J. Math.}\ }\textbf {\bibinfo {volume} {17}},\ \bibinfo {pages}
  {603} (\bibinfo {year} {1991})}\BibitemShut {NoStop}%
\bibitem [{\citenamefont {Meyer}(2000)}]{Sylvester}%
  \BibitemOpen
  \bibfield  {author} {\bibinfo {author} {\bibfnamefont {C.~D.}\ \bibnamefont
  {Meyer}},\ }\href@noop {} {\emph {\bibinfo {title} {Matrix Analysis and
  Applied Linear Algebra}}}\ (\bibinfo  {publisher} {Society for Industrial and
  Applied Mathematics},\ \bibinfo {year} {2000})\BibitemShut {NoStop}%
\bibitem [{\citenamefont {Paini}\ and\ \citenamefont
  {Kalev}()}]{paini2019approximate}%
  \BibitemOpen
  \bibfield  {author} {\bibinfo {author} {\bibfnamefont {M.}~\bibnamefont
  {Paini}}\ and\ \bibinfo {author} {\bibfnamefont {A.}~\bibnamefont {Kalev}},\
  }\href@noop {} {\enquote {\bibinfo {title} {An approximate description of
  quantum states},}\ }\Eprint {http://arxiv.org/abs/1910.10543}
  {arXiv:1910.10543} \BibitemShut {NoStop}%
\bibitem [{\citenamefont {Dirkse}\ \emph {et~al.}(2020)\citenamefont {Dirkse},
  \citenamefont {Pompili}, \citenamefont {Hanson}, \citenamefont {Walter},\
  and\ \citenamefont {Wehner}}]{dirkse2020entanglement}%
  \BibitemOpen
  \bibfield  {author} {\bibinfo {author} {\bibfnamefont {B.}~\bibnamefont
  {Dirkse}}, \bibinfo {author} {\bibfnamefont {M.}~\bibnamefont {Pompili}},
  \bibinfo {author} {\bibfnamefont {R.}~\bibnamefont {Hanson}}, \bibinfo
  {author} {\bibfnamefont {M.}~\bibnamefont {Walter}}, \ and\ \bibinfo {author}
  {\bibfnamefont {S.}~\bibnamefont {Wehner}},\ }\href {\doibase
  10.1088/2058-9565/ab8d88} {\bibfield  {journal} {\bibinfo  {journal} {Quantum
  Sci. and Technol.}\ }\textbf {\bibinfo {volume} {5}},\ \bibinfo {pages}
  {035007} (\bibinfo {year} {2020})}\BibitemShut {NoStop}%
\bibitem [{\citenamefont {Watrous}(2018)}]{watrous2018book}%
  \BibitemOpen
  \bibfield  {author} {\bibinfo {author} {\bibfnamefont {J.}~\bibnamefont
  {Watrous}},\ }\href {\doibase 10.1017/9781316848142} {\emph {\bibinfo {title}
  {The Theory of Quantum Information}}}\ (\bibinfo  {publisher} {Cambridge
  University Press},\ \bibinfo {year} {2018})\BibitemShut {NoStop}%
\bibitem [{\citenamefont {Dankert}\ \emph {et~al.}(2009)\citenamefont
  {Dankert}, \citenamefont {Cleve}, \citenamefont {Emerson},\ and\
  \citenamefont {Livine}}]{dankert2009unitary_designs}%
  \BibitemOpen
  \bibfield  {author} {\bibinfo {author} {\bibfnamefont {C.}~\bibnamefont
  {Dankert}}, \bibinfo {author} {\bibfnamefont {R.}~\bibnamefont {Cleve}},
  \bibinfo {author} {\bibfnamefont {J.}~\bibnamefont {Emerson}}, \ and\
  \bibinfo {author} {\bibfnamefont {E.}~\bibnamefont {Livine}},\ }\href
  {\doibase 10.1103/PhysRevA.80.012304} {\bibfield  {journal} {\bibinfo
  {journal} {Phys. Rev. A}\ }\textbf {\bibinfo {volume} {80}},\ \bibinfo
  {pages} {012304} (\bibinfo {year} {2009})}\BibitemShut {NoStop}%
\bibitem [{\citenamefont {Gross}\ \emph {et~al.}(2007)\citenamefont {Gross},
  \citenamefont {Audenaert},\ and\ \citenamefont {Eisert}}]{gross2007evenly}%
  \BibitemOpen
  \bibfield  {author} {\bibinfo {author} {\bibfnamefont {D.}~\bibnamefont
  {Gross}}, \bibinfo {author} {\bibfnamefont {K.}~\bibnamefont {Audenaert}}, \
  and\ \bibinfo {author} {\bibfnamefont {J.}~\bibnamefont {Eisert}},\ }\href
  {\doibase 10.1063/1.2716992} {\bibfield  {journal} {\bibinfo  {journal} {J.
  Math. Phys.}\ }\textbf {\bibinfo {volume} {48}},\ \bibinfo {pages} {052104,
  22} (\bibinfo {year} {2007})}\BibitemShut {NoStop}%
\bibitem [{\citenamefont {Zhu}(2017)}]{zhu2017clifford}%
  \BibitemOpen
  \bibfield  {author} {\bibinfo {author} {\bibfnamefont {H.}~\bibnamefont
  {Zhu}},\ }\href {\doibase 10.1103/PhysRevA.96.062336} {\bibfield  {journal}
  {\bibinfo  {journal} {Phys. Rev. A}\ }\textbf {\bibinfo {volume} {96}},\
  \bibinfo {pages} {062336} (\bibinfo {year} {2017})}\BibitemShut {NoStop}%
\bibitem [{\citenamefont {Webb}(2016)}]{webb2016clifford}%
  \BibitemOpen
  \bibfield  {author} {\bibinfo {author} {\bibfnamefont {Z.}~\bibnamefont
  {Webb}},\ }\href
  {http://www.rintonpress.com/xxqic16/qic-16-1516/1379-1400.pdf} {\bibfield
  {journal} {\bibinfo  {journal} {Quantum Inf. Comput.}\ }\textbf {\bibinfo
  {volume} {16}},\ \bibinfo {pages} {1379} (\bibinfo {year}
  {2016})}\BibitemShut {NoStop}%
\bibitem [{\citenamefont {Kueng}\ and\ \citenamefont
  {Gross}()}]{kueng2015stabilizer}%
  \BibitemOpen
  \bibfield  {author} {\bibinfo {author} {\bibfnamefont {R.}~\bibnamefont
  {Kueng}}\ and\ \bibinfo {author} {\bibfnamefont {D.}~\bibnamefont {Gross}},\
  }\href@noop {} {\enquote {\bibinfo {title} {Qubit stabilizer states are
  complex projective 3-designs},}\ }\Eprint {http://arxiv.org/abs/1510.02767}
  {arXiv:1510.02767} \BibitemShut {NoStop}%
\bibitem [{\citenamefont {Hoeffding}(1992)}]{hoeffding1992class}%
  \BibitemOpen
  \bibfield  {author} {\bibinfo {author} {\bibfnamefont {W.}~\bibnamefont
  {Hoeffding}},\ }in\ \href@noop {} {\emph {\bibinfo {booktitle} {Breakthroughs
  in Statistics}}}\ (\bibinfo  {publisher} {Springer},\ \bibinfo {year}
  {1992})\ pp.\ \bibinfo {pages} {308--334}\BibitemShut {NoStop}%
\end{thebibliography}%

\end{document}